\DeclareMathOperator {\range}{Range}
\DeclareMathOperator {\col}{Col}
\title{On Efficient Noncommutative Polynomial Factorization via Higman Linearization \thanks{A preliminary version was presented at 
  the 37th Computational Complexity Conference, CCC'22, \cite{AJ22}.}}
\author{V. Arvind\thanks{Institute of Mathematical Sciences, Chennai,
    India and Chennai Mathematical Institute, Siruseri, Kelambakkam,
    India, \texttt{email: arvind@imsc.res.in}} \and Pushkar S
  Joglekar\thanks{Vishwakarma Institute of Technology, Pune, India,
    \texttt{email: joglekar.pushkar@gmail.com}}
    \thanks{Author would like to thank SERB for the funding through the MATRICS project, File no. MTR/2018/001214}
    }
\date{}
\newtheorem{theorem}{Theorem}[section]
\newtheorem{corollary}[theorem]{Corollary}
\newtheorem{definition}[theorem]{Definition}
\newtheorem{lemma}[theorem]{Lemma}
\newtheorem{claim}[theorem]{Claim}
\newtheorem{remark}[theorem]{Remark}
\newenvironment{proofof}[1]{\noindent{\it Proof of #1. }} {{\qed}}
\newtheorem*{theorem*}{Theorem}
\newtheorem{problem}[theorem]{Problem}
\def\qed{\hspace*{\fill} $\Box$\par\medskip}
\newcommand{\op}[1]{\ensuremath{\operatorname{#1}}}
\newcommand{\F}{\mathbb{F}}
\renewcommand{\angle}[1]{\langle #1 \rangle}
\newcommand{\FX}{\F\angle{X}}
\newcommand{\poly}{\mathrm{poly}}
\newcommand{\rank}{\op{rank}}
\newcommand{\Q}{\mathbb{Q}}
\newcommand{\fR}{\FX}
\newcommand{\skewf}{\mathbb{F}\newbrak{X}}
\newcommand{\fF}{\skewf}
\newcommand{\prob}[1]{\textsc{#1}}
\DeclareSymbolFont{symbols2stix}{LS1}{stixfrak} {m} {n}
\DeclareMathSymbol{\lparenless}{\mathopen} {symbols2stix}{"32}
\DeclareMathSymbol{\rparengtr}{\mathclose}{symbols2stix}{"33}
\newcommand{\newbrak}[1]{{\lparenless} #1 {\rparengtr}}
\begin{document}

\maketitle

\begin{abstract}

In this paper we study the problem of efficiently factorizing
polynomials in the free noncommutative ring
$\F\angle{x_1,x_2,\ldots,x_n}$ of polynomials in noncommuting
variables $x_1,x_2,\ldots,x_n$ over the field $\F$. We obtain
the following result:

\begin{itemize} 
\item[] Given a noncommutative algebraic branching
  program\footnote{This strengthens the main result in 
    earlier versions of this paper where the algorithm was only
    for noncommutative arithmetic formulas.} of size $s$ computing a
  noncommutative polynomial $f\in\F\angle{x_1,x_2,\ldots,x_n}$ as
  input, where $\F=\F_q$ is a finite field, we give a randomized
  algorithm that runs in time polynomial in $s, n$ and $\log_2q$ that
  computes a factorization of $f$ as a product $f=f_1f_2\cdots f_r$,
  where each $f_i$ is an irreducible polynomial that is output as a
  noncommutative algebraic branching program.
   
\item[] The algorithm works by first transforming the given algebraic
  branching program computing $f$ into a linear matrix $L$ using
  Higman's linearization of polynomials. We then factorize the linear
  matrix $L$ and recover the factorization of $f$. We use basic
  elements from Cohn's theory of free ideals rings combined with
  Ronyai's randomized polynomial-time algorithm for computing
  invariant subspaces of a collection of matrices over finite fields.
\end{itemize}

\noindent\textbf{Keywords: Noncommutative Polynomials, Arithmetic
  Circuits, Factorization, Identity testing.}
\end{abstract}

\newpage

\setcounter{page}{0}
\tableofcontents
\thispagestyle{empty}
\newpage

\section{Introduction}\label{intro}

Let $\F$ be any field and $X=\{x_1,x_2,\ldots,x_n\}$ be a set of $n$
free noncommuting variables. Let $X^*$ denote the set of all free
words (which are monomials) over the alphabet $X$ with concatenation
of words as the monoid operation and the empty word $\epsilon$ as
identity element.

The \emph{free noncommutative ring} $\FX$ consists of all finite
$\F$-linear combinations of monomials in $X^*$, where the ring
addition $+$ is coefficient-wise addition and the ring multiplication
$*$ is the usual convolution product. More precisely, let $f,g\in\FX$
and let $f(m)\in\F$ denote the coefficient of monomial $m$ in
polynomial $f$. Then we can write $f=\sum_m f(m) m$ and $g=\sum_m g(m)
m$, and in the product polynomial $fg$ for each monomial $m$ we have

\[
fg(m)=\sum_{m_1m_2=m} f(m_1)g(m_2).
\]

The \emph{degree} of a monomial $m\in X^*$ is the length of the
monomial $m$, and the degree $\deg f$ of a polynomial $f\in\FX$ is the
degree of a largest degree monomial in $f$ with nonzero coefficient.
For polynomials $f,g\in\FX$ we clearly have $\deg (fg) = \deg f + \deg
g$.

A \emph{nontrivial factorization} of a polynomial $f\in\FX$ is an
expression of $f$ as a product $f=gh$ of polynomials $g,h\in\FX$ such
that $\deg g > 0$ and $\deg h > 0$. A polynomial $f\in\FX$ is
\emph{irreducible} if it has no nontrivial factorization and is
\emph{reducible} otherwise.  For instance, all degree $1$ polynomials
in $\FX$ are irreducible. Clearly, by repeated factorization every
polynomial in $\FX$ can be expressed as a product of
irreducibles. 

In this paper we study the algorithmic complexity of polynomial
factorization in the free ring $\FX$. The factorization algorithm is
by an application of Higman's linearization process followed by
factorization of a matrix with linear entries (under some technical
conditions) using Cohn's factorization theory.

It is interesting to note that Higman's linearization process
\cite{Hig} has been used to obtain a deterministic polynomial-time
algorithm for the RIT problem. That is, the problem of testing if a
noncommutative rational formula (which computes an element of the free
skew field $\skewf$) is zero on its domain of definition
\cite{GGOW20,IQS17, IQS18,HW15}.


\subsection{Overview of the results}

The main result of the paper is the following.

\begin{theorem*}[Main Theorem]
Given a multivariate noncommutative polynomial $f\in\F_q\angle{X}$ for
a finite field\footnote{We present the detailed randomized algorithm
  over large finite fields. In the case of small finite fields we
  obtain a deterministic $\poly(s,q,|X|)$ time algorithm with minor
  modifications.} $\F_q$ by a
noncommutative algebraic branching program of size $s$ as input, a
factorization of $f$ as a product $f=f_1f_2\cdots f_r$ can be computed
in randomized time $\poly(s,\log_2q,|X|)$, where each
$f_i\in\F_q\angle{X}$ is an irreducible polynomial that is output as
an algebraic branching program.
\end{theorem*}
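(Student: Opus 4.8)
The plan is to route noncommutative factorization through a linear matrix, to factor that matrix, and to reduce the factoring step to an invariant‑subspace computation over $\F_q$ that Ronyai's algorithm can solve.

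\textbf{Step 1: linearization.} First I would run Higman's linearization on the input formula for $f$, obtaining in $\poly(s,|X|)$ time a linear matrix $L=A_0+A_1x_1+\cdots+A_nx_n$ with each $A_i$ a $d\times d$ matrix over $\F_q$, $d=\poly(s)$, such that $L$ is stably associated to the $1\times1$ matrix $(f)$; since $f\neq 0$, the matrix $L$ is full. Because $\F\angle{X}$ is a free ideal ring, every full matrix over it factors into full atoms --- uniquely up to stable association and a Jordan--H\"{o}lder-type equivalence of the two chains --- and the property of being an atom is invariant under stable association. Hence it suffices to produce a factorization $L\sim L_1L_2\cdots L_r$ into linear atoms and then output the irreducible $f_i$ stably associated to $L_i$.

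\textbf{Step 2: from factorization to scalar block-triangularization.} Next I would reduce $L$ to a \emph{minimal} linear matrix stably associated to $f$ --- computable in polynomial time using the techniques developed for the RIT problem --- which strips away the trivial invertible blocks that Higman's process introduces. For a minimal $L$, Cohn's free-ideal-ring theory yields the structural fact on which everything rests: $L$ fails to be an atom \emph{iff} there exist $P,Q\in\op{GL}_d(\F_q)$ with $PLQ$ block upper-triangular and both diagonal blocks full and non-invertible over $\F\angle{X}$, in which case the two blocks are minimal linear matrices stably associated to a nontrivial factorization $f=gh$. Unwinding the statement ``$PLQ$ has a zero corner block'' shows it is equivalent to the existence of subspaces $U,V\subseteq\F_q^{\,d}$ with $0<\dim U=\dim V<d$ and $A_mV\subseteq U$ for all $0\le m\le n$.

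\textbf{Step 3: invariant subspaces via Ronyai.} I would then search for such a pair $(U,V)$, or certify that none exists, by encoding it as a common-invariant-subspace problem for an explicit finite collection of matrices over $\F_q$ and invoking Ronyai's randomized polynomial-time algorithm. When $A_0$ can be made invertible --- possibly after a scalar shift $x_m\mapsto x_m+c_m$, which is a ring automorphism respecting factorization --- the constraint $\dim U=\dim V$ forces $U=A_0V$, so the condition says exactly that $V$ is a common invariant subspace of $A_0^{-1}A_1,\ldots,A_0^{-1}A_n$, precisely the input to Ronyai's procedure. The remaining cases --- where no scalar substitution makes the constant term invertible, the same phenomenon underlying the difficulty of naive approaches to RIT --- require an auxiliary construction that still casts the $(U,V)$-search as a common-invariant-subspace problem. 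Recursing on the diagonal blocks, and verifying fullness and non-invertibility of the blocks encountered using the polynomial-time RIT algorithm, yields the chain $L\sim L_1\cdots L_r$, and hence the factorization of $f$ into irreducibles.

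\textbf{Step 4: back to branching programs, and the main obstacle.} Finally, from each linear atom $L_i$ of size $\ell_i$ I would recover an algebraic branching program for $f_i$ by inverting Higman's linearization --- iterated Schur complements turn a linear matrix stably associated to $(f_i)$ into an ABP for $f_i$ of size $\poly(\ell_i)\le\poly(s)$ --- so the output and the whole computation stay within $\poly(s,|X|,\log_2q)$ time. I expect the crux to be Steps 2--3: using Cohn's theory to show that, for a \emph{minimal} linear matrix, factorizations correspond precisely to \emph{scalar} block-triangularizations (a priori, a stable association need not descend to a scalar equivalence), and then reducing the $(U,V)$-subspace search --- including the degenerate, everywhere-singular constant-term cases --- to a form solvable by Ronyai's algorithm; randomness, and the $\poly(q)$ overhead over small fields mentioned in the theorem, enter only through that last step.
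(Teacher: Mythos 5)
Your high-level roadmap --- Higman linearization, Cohn's characterization of atoms among monic/minimal linear matrices, Ronyai's invariant-subspace algorithm, a scalar shift to make $A_0$ invertible --- matches the paper's. But there is a genuine gap in Step~4, and the acknowledged hole in Step~3 is a substantial chunk of the argument rather than a loose end.

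Step~4 does not work as stated. Recovering, for each linear atom $L_i$, \emph{some} polynomial $g_i$ stably associated to $L_i$ (e.g.\ by inverting Higman via Schur complements) does not give a factorization of $f$: stable association determines $g_i$ only up to its own stable-association class, and the ordered product of independently chosen representatives need not equal $f$. The classic example $x+xyx=x(1+yx)=(1+xy)x$ makes this concrete: $1+xy$ and $1+yx$ are stably associated, yet $x(1+xy)=x+x^2y\neq x+xyx$. So from the chain $L\sim L_1L_2$ with $L_1\sim x$ and $L_2\sim 1+xy\sim 1+yx$ you cannot tell which polynomial to output as the right factor. The paper resolves this with a dedicated factor-extraction step: starting from the matrix identity $f\oplus I_s = P S' F'_1\cdots F'_r U' Q$, it repeatedly trivializes a matrix product $Cv=0$ (Cohn's matrix-product trivialization, made effective in Lemma~\ref{triv-lem}) to peel off the rightmost irreducible factor of $f$, then recurses on the quotient (Lemma~\ref{extract}, Theorem~\ref{cnzthm}). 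This is where the product relation $f=f_1\cdots f_r$ is actually established; your proposal is missing it.

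The ``remaining cases'' you defer in Step~3 are also a real obstacle, not a technicality. When $f$ vanishes on all scalar tuples (e.g.\ $f=xy-yx$) no scalar shift makes $A_0$ invertible, so the reduction to a common invariant subspace of $A_0^{-1}A_1,\ldots,A_0^{-1}A_n$ simply does not apply. The paper handles this (Section~\ref{cz-sec}) by replacing each $x_i$ with $Y_i+M_i$ for generic-matrix $Y_i$ and scalar $M_i$ chosen so that the dilated linear matrix $L'$ has invertible constant term, factorizing $L'$, and then pulling a nontrivial scalar block-triangularization of $L'$ back to one of $L$ via Lemma~\ref{hkv20-main-lemma}; this pull-back is nontrivial linear algebra and is where the bulk of the new work in the general case lies. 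Finally, a smaller point: you appeal to Cohn's atom criterion for a ``minimal'' linear matrix, but the theorem in Cohn's book (the paper's Theorem~\ref{cohnthm}) is stated for \emph{monic} linear matrices, and the paper has to rework it to apply under the weaker hypothesis of one-sided (right or left) monicity, which is what Higman linearization naturally produces after cleanup (Theorem~\ref{full-monic}).
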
 

The proof has three broad steps described below.

\begin{itemize}
\item \textbf{Higman linearization and Cohn's factorization theory}~~
  Briefly, given a noncommutative polynomial $f\in\FX$, we can
  transform it into a linear matrix $L$ such that $f\oplus I = PLQ$,
  where $P$ is an upper triangular matrix with polynomial entries and
  all $1$'s diagonal and $Q$ is a lower triangular matrix with
  polynomial entries and all $1$'s diagonal, $P$ and $Q$ are the
  matrices implementing the sequence of row and column operations
  required for the Higman linearization process. Now, as observed
  by Garg et al \cite{GGOW20} (in their work on rational identity
  testing), if $f$ is given by a noncommutative formula as input
  then the above Higman linearization can be carried out in polynomial
  time. We strengthen their observation with a modified Higman
  linearization process that we call Block-Higman linearization (to
  emphasize that the modification works with matrix blocks) and
  show the more general result that if $f$ is given by an ABP as input
  we can still compute its Higman linearization as defined above
  in polynomial time.


\item \textbf{Ronyai's common invariant subspace algorithm}~~ Next,
  the most important tool algorithmically, is Ronyai's algorithm for
  computing common invariant subspaces of a collection of matrices
  over finite fields \cite{Ronyai2}. We show that Ronyai's common
  invariant subspace algorithm can be repeatedly applied to factorize
  a linear matrix $L=A_0+\sum_{i=1}^n A_i x_i$, into a product of
  irreducible linear matrices provided $A_0$ is invertible and
  $[A_1 A_2\cdots A_n]$ has full row rank or
  $[A_1^T A_2^T\cdots A_n^T]^T$ has full column rank. The later
  conditions are called as right and left monicity of the linear
  matrix $L$ respectively. With some technical work we can ensure
  these conditions for a linear matrix $L$ that is produced from a
  polynomial $f$ by Higman linearization. Then Ronyai's algorithm
  yields the factorization of $L$ into a product of irreducible linear
  matrices (upto multiplication by units). Here, Cohn's theory of
  factorization of noncommutative linear matrices gives us sufficient
  useful information about the structure of irreducible linear matrices.

\item \textbf{Recovering the factors of $f$}~~ Finally, we design a
  simple linear algebraic algorithm for trivializing a matrix product
  $AB=0$, where $A$ is a linear matrix and $B$ is a column vector of
  polynomials from $\FX$, using which we are able to extract the
  irreducible factors of $f$ from the factors of $L$. An invertible
  matrix $M$ with polynomial entries \emph{trivializes} the relation
  $AB=0$ if the modified relation $(AM)(M^{-1}B)=0$ has the property
  that for every index $i$ either the $i^{th}$ column of $AM$ is zero
  or the $i^{th}$ row of $M^{-1}B$ is zero. While such matrices $M$
  exist for any matrix product $AB=0$ with entries from $\FX$, we
  obtain an efficient algorithm in the special case when $A$ is linear
  and $B$'s entries are polynomials computed by small arithmetic
  circuits. This special case is sufficient for our application.
\end{itemize}

There are some additional technical aspects we need to deal with. 
Let $L=A_0+\sum_{i=1}^n A_i x_i$ be the linear matrix obtained from
$f\in\F_q\angle{X}$ by Higman linearization, where
$X=\{x_1,x_2,\ldots,x_n\}$ and $A_i\in \F_q^{d\times d}, 0\le i\le
n$. If $A_0$ is an invertible matrix then it turns out that the
problem of factorizing $L$ can be directly reduced to the problem of
finding a common invariant subspace for the matrices $A_0^{-1}A_i,
1\le i\le n$.  In general, however, $A_0$ is not invertible. Two cases
arise:
\begin{itemize}
\item[(a)] The polynomial $f$ is \emph{commutatively nonzero}. That
  is, it is nonzero on $\F_q^n$ (or on $\F^n$ for a small extension
  field $\F$). In this case, by the DeMillo-Lipton-Schwartz-Zippel
  Lemma~\cite{DL78,Sch80, Zip79}, we can do a linear shift of the
  variables $x_i\leftarrow x_i+\alpha_i$ in the polynomial $f$, for
  $\alpha_i$ randomly picked from $\F_q$ (or $\F$). Let the resulting
  polynomial be $f'$ and let its Higman linearization be $L_{f'}$. In
  $L_{f'}$ the constant matrix term $A'_0$ will be invertible with
  high probabilty, and the reduction steps outlined above will work
  for $L_{f'}$. Furthermore, from the factorization of $f'$ we can
  efficiently recover the factorization of $f$. Section~\ref{cnz-sec}
  deals with Case~(a), with Theorem~\ref{cnzthm} summarizing the
  algorithm for factorizing $f$. Theorem~\ref{lfact1} describes the
  algorithm for factorization of the linear matrix $L_{f'}$, and the
  factor extraction lemma (Lemma~\ref{extract}) allows us to
  efficiently recover the factorization of $f'$ from the factorization
  of $L_{f'}$.
  
\item[(b)] In the second case, suppose $f$ is zero on all scalars. Then, for example by
  Amitsur's theorem \cite{Ami66}, for a random matrix substitution
  $x_i\leftarrow M_i\in \F^{2s\times 2s}$ the matrix
  $f(M_1,M_2,\ldots,M_n)$ is \emph{invertible} with high probability,
  where $s$ is the formula size of $f$.\footnote{Amitsur's theorem
    strengthens the Amitsur-Levitski theorem \cite{AL50} often used in
    noncommutative PIT algorithms \cite{BW05}.} \footnote{In the actual algorithm we pick the matrices $M_i$ using a result from  \cite{DM17}}   Accordingly, we can
  consider the factorization problem for shifted and dilated linear
  matrix $L'=A_0\otimes I_{\ell} + \sum_{i=1}^n A_i\otimes (Y_i+M_i)$ which             
  will have the constant matrix term invertible, where each $Y_i$ is  
  an $\ell \times \ell$ matrix of distinct noncommuting variables, where $\ell=2s$.  Recovering
  the factorization of $L$ from the factorization of $L'$ requires
  some additional algorithmic work based on linear algebra. A lemma
  from \cite{HKV20} (refer Section \ref{cz-sec} and the Appendix for the details) turns
  out to be crucial here. The algorithm handling Case~(b) is described
  in Section~\ref{cz-sec}. Indeed, the new aspect of the algorithm is
  factorization of the dilated matrix $L'$ from which we recover the
  factorization of the Higman linearization $L_f$ of $f$. The
  remaining algorithm steps are exactly as in Section~\ref{cnz-sec}.
\end{itemize}  

\subsection{Small Finite fields}\label{small-field}

We now briefly explain the deterministic $\poly(s,q,|X|)$ time
factorization algorithm (when $\F_q$ is small). There are two places
in the factorization algorithm outlined above where randomization is
used: first, to obtain a matrix tuple $(M_1,M_2,\ldots,M_n)$ such that
$f(M_1,M_2,\ldots,M_n)$ is invertible, which ensures that the constant
matrix term of the linear matrix $L'$ is invertible. When
$q=\Omega(d)$, where $d= \deg f$, it suffices to randomly pick
$M_i\in\F_q^{2s\times 2s}$. However, if $q<d$ we can choose entries of
the matrices $M_i$ from a small extension field $\F_{q^k}$ such that
$q^k=\Omega(d)$.  Thereby, we will obtain factorization of $L'$ and
subsequently that of the polynomial $f$ over the extension field
$\F_{q^k}$. However, we can use the fact that the finite field
$\F_{q^k}$ can be embedded using the regular representation of the
elements of $\F_{q^k}$ in the matrix algebra $\F_q^{k\times k}$. Thus,
we can obtain from $(M_1,M_2,\ldots,M_n)$ a matrix tuple
$(M'_1,M'_2,\ldots,M'_n)$ with $M'_i\in\F_q^{2sk\times 2sk}$ such that
$f(M'_1,M'_2,\ldots,M'_n)$ is invertible. This will ensure that the
linear matrix $L'$ can be factorized over the field $\F_q$ which will
allow us to obtain a complete factorization of $f$ into irreducible
factors over $\F_q$.

In order to get a deterministic polynomial-time algorithm for finding
such matrices $M'_i, 1\le i\le n$ we will use the fact that the
polynomial $f$ is given by a small noncommutative formula and hence
has a small algebraic branching program. Then, using ideas from
\cite{RS05,F14,ACDM20} we can easily find such matrices $M'_i$ in
deterministic polynomial time.

Next, we notice that Ronyai's algorithm for finding common invariant
subspaces of matrices over $\F_q$ is essentially a polynomial-time
reduction to univariate polynomial factorization over $\F_q$. We can
use Berlekamp's deterministic $\poly(q,D)$ algorithm for the
factorization of univariate degree $D$ polynomials over
$\F_q$. Putting it together, we can obtain a deterministic
$\poly(s,q,|X|)$ time algorithm for factorization of
$f\in\F_q\angle{X}$ as a product of irreducible factors over
$\F_q$.\smallskip

\subsection{Finite fields versus Rationals}\label{finite-vs-rationals}

Unfortunately, the algorithm outlined above does not yield an
efficient algorithm for noncommutative polynomial factorization over
rationals. The bottlneck is the problem of computing common invariant
subspaces for a collection of matrices over $\Q$. Ronyai's algorithm
for the problem over finite fields \cite{Ronyai2} builds on the
decomposition of finite-dimensional associative algebras over
fields. Given an algebra $\mathcal{A}$ over a finite field $\F_q$ the
algorithm decomposes $\mathcal{A}$ as a direct sum of minimal left
ideals of $\mathcal{A}$ which is used to find nontrivial common
invariant subspaces. However, as shown by Friedl and Ronyai
\cite{FR85}, over rationals the problem of decomposing a \emph{simple}
algebra as a direct sum of minimal left ideals is at least as hard as
factoring square-free integers.

\subsection{Related research}

The study of factorization in noncommutative rings is systematically
investigated as part of Cohn's general theory of noncommutative free
ideal rings \cite{Cohnfir,Cohnintro} which is based on the notion of
the weak algorithm. In fact, there is a hierachy of weak algorithms
generalizing the division algorithm for commutative integral domains
\cite{Cohnfir}.\smallskip

\noindent\textit{Algorithmic:}~~To the best of our knowledge, the
complexity of noncommutative polynomial factorization has not been
studied much, unlike the problem of commutative polynomial
factorization \cite{vzg-book,K89,KT90}.  Prior work on the complexity
of noncommutative polynomial factorization we are aware of is
\cite{AJR18} where efficient algorithms are described for the problem
of factoring \emph{homogeneous} noncommutative polynomials (which
enjoy the unique factorization property, and indeed the algorithms in \cite{AJR18} crucially use the unique factorization property). When the input homogeneous
noncommutative polynomial has a small noncommutative arithmetic
circuit (even given by a black-box as in Kaltofen's algorithms
\cite{K89, KT90}) it turns out that the problem is efficiently
reducible to commutative factorization by set-multilinearizing the
given noncommutative polynomial with new commuting variables. This
also works in the black-box setting and yields a randomized
polynomial-time algorithm which will produce as output black-boxes for
the irreducible factors (which will all be homogeneous). When the
input homogeneous polynomial is given by an algebraic branching
program there is even a deterministic polynomial-time factorization
algorithm. Indeed, the noncommutative factorization problem in for
homogeneous polynomials efficiently reduces to the noncommutative PIT
problem \cite{AJR18}, analogous to the commutative case \cite{KSS14},
modulo the randomness required for univariate polynomial factorization
in the case of finite fields of large characteristic. The motivation
of the present paper is to extend the above results to the
inhomogeneous case.\smallskip

\noindent\textit{Mathematical:}~~From a mathematical perspective,
building on Cohn's work there is a lot of research on the study of
noncommutative factorization. For example, \cite{BS15,BHL17} focus on
the lack of unique factorization in noncommutative rings and study the
structure of multiple factorizations. The research most relevant to
our work is the study of noncommutative analogues of the
Nullstellensatz by Helton, Klep and Volcic \cite{HKV18,HKV20}. In  
these papers the authors study the free singularity locus of a
noncommutative polynomial $f\in\FX$ where $\F$ is an algebraically
closed field of characteristic zero (in \cite{HKV20} mostly they
consider complex numbers). This is the set of all matrix tuples
$\bar{M}\in\mathcal{L}_n(f)$ (in all matrix dimensions $d$) where
$\mathcal{L}_n(f)=\{\bar{M}\mid \det f(\bar{M}) = 0,$ where $\bar{M}$
is an $n$-tuple of matrices$\}$.  It turns out that $f\in\FX$ is
irreducible if and only if for all $d\ge d_0$ for some $d_0$ the
hypersurface $\mathcal{L}_d(f)$ is irreducible which in turn holds iff
$\det f(\bar{X})$ is an irreducible commutative polynomial, where
$\bar{X}$ are generic matrices with commuting variables of dimension
$d\ge d_0$. However, $d_0$ turns out to be exponentially large.

\smallskip

\noindent\textbf{Plan of the paper.}~ In Section~\ref{prelim} we
present basic definitions and the background results from Cohn's work on factorization. In Section~\ref{bas-res-sec} we further present some results from Cohn's work relevant to the paper. In
Section~\ref{cnz-sec} we present the factorization algorithm for
polynomials $f$ that does not vanish on scalars and in
Section~\ref{cz-sec} we present the algorithm for the general case.

\section{Preliminaries}\label{prelim}

In this section we give some basic definitions and results relevant to
the paper, mainly from Cohn's theory of factorization.  Analogous to
integral domains and unique factorization domains in commutative ring
theory, P.M.~Cohn \cite{Cohnfir,Cohnintro} has developed a theory for
noncommutative rings based on the weak algorithm (a noncommutative
generalization of the Euclidean division algorithm) and the notion of
free ideal rings. We present the relevant basic definitions and
results, specialized to the ring $\FX$ of noncommutative polynomials
with coefficients in a (commutative) field $\F$, and also for matrix
rings with entries from $\FX$.

The results about $\FX$ in Cohn's text \cite[Chapter 5]{Cohnfir} are
stated uniformly for algebraically closed fields $\F$. However, those
we discuss hold for any field $\F$ (in particular for $\F_q$ or a
small degree extension of it). The proofs are essentially based on
linear algebra.



Since we will be using Higman's linearization \cite{Hig} to factorize
noncommutative polynomials, we are naturally lead to studying the
factorization of linear matrices in $\fR^{d\times d}$ using Cohn's
theory.

\begin{definition}{\rm\cite{Cohnfir}}
A matrix M in $\fR^{d\times d}$ is called \emph{full} if it has
(noncommutative) rank $d$. That is, it cannot be decomposed as a
matrix product $M= M_1\cdot M_2$, for matrices $M_1 \in \fF^{d \times
  e}$ and $M_2 \in \fF^{e \times d}$ with $e <d$.
\end{definition}

\begin{remark}
 Based on the notion of noncommutative matrix rank \cite{Cohnfir}, the
 square matrix $M\in\fR^{d\times d}$ is full precisely when it is
 invertible in the skew field $\fF$. That is, $M$ is full if and only
 if there is a matrix $N\in\fF^{d\times d}$ such that $MN=NM=I_d$, where $I_d$ is $d \times
d$ identity matrix.
\end{remark}

We note the distinction between full matrices and units in the matrix
ring $\fR^{d\times d}$.

\begin{definition}
A matrix $U \in \fR^{d\times d}$ is a \emph{unit} if there is a matrix
$V\in \fR^{d\times d}$ such that $UV=VU=I_d$, where $I_d$ is $d \times
d$ identity matrix.
\end{definition}

Clearly, units in $\fR^{d \times d}$ are full. Examples of units in
$\fR^{d\times d}$, which have an important role in our factorization
algorithm, are upper (or lower) triangular matrices in $\fR^{d \times
  d}$ whose diagonal entries are all \emph{nonzero scalars}. Full
matrices, in general, need not be units: for example, the $1\times 1$
matrix $x$, where $x$ is a variable, is full but it is not a unit in
the ring $\fR^{1\times 1}=\fR$.

\begin{remark}
Full non-unit matrices are essentially non-unit non-zero-divisors.
For the factorization of elements in $\fR^{d \times d}$, units are
similar to scalars in the factorization of polynomials in polynomial
rings. Cohn's theory \cite{Cohnfir} considers factorizations of full
non-unit elements in $\fR^{d \times d}$.
\end{remark}
We next define \emph{atoms} in $\fR^{d\times d}$, which are
essentially the irreducible elements in it.

\begin{definition}
A full non-unit element $A$ in $\fR^{d \times d}$ is an \emph{atom} if
$A$ cannot be factorized as $A=A_1A_2$ for full non-unit matrices
$A_1, A_2$ in $\fR^{d \times d}$.
\end{definition}

Noncommutative polynomials do not have unique factorization in the usual
sense of commutative polynomial factorization.\footnote{However, as
  shown by Cohn, using the notion of stable associates there is a more
  general sense in which noncommutative polynomials have ``unique''
  factorization \cite{Cohnfir}.}  A classic example \cite{Cohnfir} is
the polynomial $x+xyx$ with its two different factorizations
\[
x+xyx = x(1+yx) = (1+xy)x,
\]
where $1+xy$ and $1+yx$ are distinct irreducible polynomials.

\begin{definition}
Elements $A \in \fR^{d \times d}$ and $B \in \fR^{d' \times d'}$ are
called \emph{stable associates} if there are positive integers $t$ and
$t'$ such that $d+t=d'+t'$ and units $P, Q\in\fR^{(d+t)\times (d+t)}$
such that $A\oplus I_t=P(B \oplus I_{t'})Q$.
\end{definition}

It is easy to check that the polynomials $1+xy$ and $1+yx$ are stable
associates.

Notice that if $A$ and $B$ are full non-unit matrices that are stable
associates then $A$ is atom if and only if $B$ is atom.  Furthermore,
we note that stable associativity defines an equivalence relation
between full matrices over the ring $\fR$.

We observe that the problem of checking if two polynomials in $\fR$ given as arithmetic formulas 
are stable associates or not has an efficient randomized algorithm
(Lemma~\ref{stable-algo}). 




Now we turn to the problem of noncommutative polynomial factorization.
By Higman's linearization \cite{Hig,Cohnfir}, given a polynomial
$f \in \fR$ there is a positive integer $\ell$ such that $f$ is stably
associated with a \emph{linear matrix} $L \in \fR^{\ell \times \ell}$,
that is to say, the entries of $L$ are affine linear
forms.\footnote{More generally, by Higman's linearization any matrix
  of polynomials $M$ is stably associated with a linear matrix
  $L \in \fR^{\ell \times \ell}$ for some $\ell$.}  Higman's
linearization process is a simple algorithm obtaining the linear
matrix $L$ for a given $f$, and it plays a crucial role in our
factorization algorithm. We first describe it and then recall an
effective version \cite{GGOW20} which gives a simple polynomial-time
algorithm to compute $L$ when $f$ is given as a non-commutative
arithmetic formula. Then we state our stronger result showing that
even if $f$ is given by an algebraic branching program as input
we can compute its Higman linearization in deterministic polynomial
time.

\subsection*{Higman's linearization process}


We describe a single step of the linearization process. Given an
$m \times m$ matrix $M$ over $\fR$ such that $M[m,m]=f+g\times h$,
apply the following:
\begin{itemize}
\item Expand $M$ to an  $(m+1)\times (m+1)$ matrix by adding a new last
  row and last column with diagonal entry $1$ and remaining new entries
  zero:
\[
\left[
\begin{array}{c|c}
M & 0 \\
\hline
0 & 1
\end{array}
\right]. 
\]
\item Then the bottom right $2\times 2$ submatrix is transformed as
  follows by elementary row and column operations

\[
\left(
\begin{array}{cc}
f+gh & 0 \\
0 & 1
\end{array}
\right)\rightarrow
\left(
\begin{array}{cc}
f+gh & g \\
0 & 1
\end{array}
\right)\rightarrow
\left(
\begin{array}{cc}
f & g \\
-h & 1
\end{array}
\right)
\]

\end{itemize}

Given a polynomial $f\in\FX$ by repeated application of the above step
we will finally obtain a \emph{linear matrix} $L=A_0+\sum_{i=1}^n A_i
x_i$, where each $A_i, 0\le i\le n$ is an $\ell\times \ell$ over $\F$,
for some $\ell$. The following theorem summarizes its properties.

\begin{theorem}[Higman Linearization]{\rm\cite{Cohnfir}}\label{higthm}
  Given a polynomial $f\in\fR$, there are matrices $P, Q\in\fR^{\ell \times \ell}$
  and a linear matrix $L\in\fR^{\ell\times \ell}$ such that
\begin{equation}\label{higeq}
\left(
\begin{array}{c|c}
f & 0 \\
\hline
0 & I_{\ell-1}
\end{array}
\right) ~=~PLQ
\end{equation}
with $P$ upper triangular, $Q$ lower triangular, and the diagonal
entries of both $P$ and $Q$ are all $1$'s (hence, $P$ and $Q$ are both
units in $\fR^{\ell \times \ell}$).
\end{theorem}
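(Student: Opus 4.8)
The plan is to prove the theorem by iterating the single linearization step displayed above, keeping track of the elementary row and column operations so that they compose into the required matrices $P$ and $Q$. The only real adaptation needed is a reformulation of that step so that it acts on an arbitrary (not necessarily corner) entry and clears that entry completely in one pass; everything else is bookkeeping.

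Let $E_{ab}$ denote the matrix unit with a single $1$ in position $(a,b)$. Suppose $M\in\fR^{m\times m}$ has an entry with $\deg M[i,j]\ge 2$. Grouping the non-constant monomials of $M[i,j]$ according to their leftmost variable, write $M[i,j]=a+\sum_{k=1}^n x_k p_k$ with $a\in\F$ and $\deg p_k\le\deg M[i,j]-1$. Pad $M$ to $\widetilde M=\left(\begin{smallmatrix}M&0\\0&I_n\end{smallmatrix}\right)$, then for each $k$ add $x_k$ times row $m+k$ to row $i$, and afterwards for each $k$ subtract the $(m+k)$-th column right-multiplied by $p_k$ from the $j$-th column. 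One verifies that the resulting matrix $M'$ agrees with $\widetilde M$ except that $M'[i,j]=a$, $M'[i,m+k]=x_k$ and $M'[m+k,j]=-p_k$ for $1\le k\le n$. These row operations amount to left multiplication by $I+\sum_k x_k E_{i,m+k}$ and the column operations to right multiplication by $I-\sum_k E_{m+k,j}\,p_k$; because every new index $m+k$ strictly exceeds every old index, the first of these matrices is upper triangular with unit diagonal, the second is lower triangular with unit diagonal, and (their strictly triangular parts squaring to zero) so are their inverses. Hence $\widetilde M=(I-\sum_k x_k E_{i,m+k})\,M'\,(I+\sum_k E_{m+k,j}\,p_k)$, expressing $\widetilde M$ as an upper triangular unit-diagonal matrix times $M'$ times a lower triangular unit-diagonal matrix.

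Now I iterate, maintaining a matrix $M$ (starting from $M=(f)$) together with the invariant $\left(\begin{smallmatrix}f&0\\0&I_{\ell-1}\end{smallmatrix}\right)=PMQ$, where $\ell$ is the current size of $M$, $P$ is upper triangular with unit diagonal and $Q$ is lower triangular with unit diagonal; initially $P=Q=(1)$. If $M$ is already a linear matrix I stop and set $L=M$. Otherwise I pick an entry $(i,j)$ attaining $D\coloneqq\max_{u,v}\deg M[u,v]\ge 2$, apply the step above to it, enlarge $P$ to $\left(\begin{smallmatrix}P&0\\0&I_n\end{smallmatrix}\right)$ and $Q$ similarly, and absorb the two new triangular unit-diagonal factors into $P$ and $Q$; since products of upper (respectively lower) triangular unit-diagonal matrices are again of that form, the invariant is restored for the enlarged matrix $M'$. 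Termination is by the lexicographic potential $(D,k)$ with $k$ the number of entries of $M$ of degree exactly $D$: the step replaces entry $(i,j)$ by the constant $a$, introduces only new entries of degree at most $1<D$ and at most $D-1<D$, and changes no other old entry, so $(D,k)$ strictly decreases; being $\mathbb{N}^2$-valued it reaches a minimum, at which stage $M=L$ is linear and the invariant is exactly the identity~\eqref{higeq}.

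I do not expect a deep obstacle: the whole argument is linear-algebraic bookkeeping. The one point needing care is the reformulation of the displayed step. As literally stated it only modifies the corner entry $M[m,m]$, and since unit-diagonal triangular row/column operations cannot permute rows or columns, one cannot bring an arbitrary offending entry into the corner; the step has to be re-derived so that it acts in place on $M[i,j]$ and, crucially, clears that entry in a single pass, so that every entry it newly creates has degree strictly below $D$ — this is exactly what makes the potential strictly decrease. Finally, if $f$ is given by an arithmetic formula (or ABP) of size $s$, running the same procedure along the gates of the formula rather than along the monomial expansion of $f$ terminates after $O(s)$ steps and produces $\ell=O(s)$ together with a polynomial-time algorithm computing $L$, $P$ and $Q$; this is the effective version referred to above.
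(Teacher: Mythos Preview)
Your proof is correct and follows the same strategy as the paper: iterate Higman's pad-and-eliminate step and track that the accumulated row and column operations stay upper and lower unit-triangular. The paper only describes the single step on the corner entry and asserts that iteration works (citing Cohn); you supply the missing details by re-deriving the step so it acts on an arbitrary entry $M[i,j]$ and clears it in one pass via the decomposition $a+\sum_k x_k p_k$, which is what makes your lexicographic termination argument go through—a harmless refinement of the same idea.
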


Instead of a single $f$, we can apply Higman linearization to a matrix
of polynomials $M\in\fR^{m\times m}$ to obtain a linear matrix $L$
that is stably associated to $M$.  We first recall the algorithmic
version of Garg et al.~\cite{GGOW20} in this general form.

\begin{theorem}\label{ehigman}{\rm\cite[Proposition A.2]{GGOW20}}
Let $M \in \fR^{m \times m}$ such that $M_{i,j}$ is computed by a
non-commutative arithmetic formula of size at most $s$ and bit
complexity at most $b$. Then, for $k=O(s)$, in time $\poly(s,b)$ we
can compute the matrices $P, Q$ and $L$ in $\fR^{\ell \times \ell }$
of Higman's linearization such that
\[
\left(
\begin{array}{c|c}
M & 0 \\
\hline
0 & I_{k}
\end{array}
\right) ~=~PLQ.
\], where $\ell= m+k$.
Moreover, the entries of the matrices $P$ and $Q$ as well as $P^{-1}$
and $Q^{-1}$ are given by polynomial-size algebraic branching programs
which can also be obtained in polynomial time.
\end{theorem}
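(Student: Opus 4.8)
The plan is to prove Theorem~\ref{ehigman} by iterating the single Higman step recalled above once per gate of the formulas for the entries of $M$, while recording the accompanying row and column operations as explicit products of elementary unit matrices.

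\emph{The linearization loop.} We process the entries $M_{i,j}$ one at a time, recursing on the formula $\Phi_{i,j}$. Whenever the current entry at a position $(p,q)$ is not affine, we look at the top gate of its sub-formula: if it is a product gate $\Phi'\cdot\Phi''$, we write the entry as $0+\Phi'\Phi''$ and apply the single step, which (after appending one row and one column) replaces the $1\times1$ block at $(p,q)$ by $\bigl(\begin{smallmatrix}0 & \Phi' \\ -\Phi'' & 1\end{smallmatrix}\bigr)$; if it is a sum gate $\Phi'+\Phi''$ (scalar multiplication $\alpha\cdot\Phi'$ being the special case with one child a constant), we write it as $\Phi'+1\cdot\Phi''$ and obtain $\bigl(\begin{smallmatrix}\Phi' & 1 \\ -\Phi'' & 1\end{smallmatrix}\bigr)$. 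Either way the entry is replaced by two strictly smaller sub-formulas sitting in the matrix, together with scalar entries, and we recurse on the two new positions; leaves $x_i$ and constants are already affine and cost no step. The point I would verify explicitly is that the step linearizing the entry at $(p,q)$ uses only the addition of a multiple of the fresh row to row $p$ and of a multiple of the fresh column to column $q$, and that the fresh row (resp.\ column) vanishes outside the fresh column (resp.\ row); hence every pre-existing matrix entry is left untouched, so the recursive steps never interfere with one another. When all gates are handled every entry is affine linear, yielding $L$; the number of steps equals the total number of $+$ and $\times$ gates in the formulas, giving $k=O(s)$ for a single polynomial (and $k=O(m^2 s)$ in general), so $\ell=m+k$.

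\emph{The transforming matrices.} The single-step computation shows that step $t$ left-multiplies the current matrix by an elementary matrix $E^{(t)}_L = I + g_t\, e_{p_t}e_{\nu_t}^{T}$ and right-multiplies it by $E^{(t)}_R = I - e_{\nu_t}\, h_t\, e_{q_t}^{T}$, where $(p_t,q_t)$ is the position being linearized at that step, $\nu_t$ is the freshly-added index (the current largest index), and $g_t,h_t$ are the two child polynomials of the gate, each a sub-formula of some $\Phi_{i,j}$, hence of formula-size $\le s$ and bit-complexity $\le b$. Since $p_t<\nu_t$ and $q_t<\nu_t$, the matrix $E^{(t)}_L$ is unit upper triangular and $E^{(t)}_R$ is unit lower triangular; both are units, with inverses obtained by flipping the sign of their single off-diagonal entry. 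Unwinding the iteration and padding each factor up to the final dimension $\ell=m+k$ gives $\bigl(\begin{smallmatrix}M & 0 \\ 0 & I_k\end{smallmatrix}\bigr)=PLQ$ with $P$ a product of unit upper triangular matrices and $Q$ a product of unit lower triangular matrices; thus $P$ is unit upper triangular, $Q$ unit lower triangular, both with all-$1$ diagonals (which also reproves Theorem~\ref{higthm}), while $P^{-1}$ and $Q^{-1}$ are the reversed products of the inverse elementaries.

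\emph{Size control --- the main obstacle.} The one genuinely delicate point is that $P,Q,P^{-1},Q^{-1}$, being products of $k=\poly(s,m)$ matrices with polynomial entries, could a priori have entries of exponential size. This is ruled out by the rank-one structure of the factors: each of these four matrices has the form $\prod_t\bigl(I\pm c_t\, e_{u_t}e_{v_t}^{T}\bigr)$, where each $c_t$ is a sub-formula of some $\Phi_{i,j}$. A fixed entry of such a product is computed by the layered matrix-composition branching program with $k+1$ layers of $\ell$ vertices, the edge from vertex $u$ in layer $t-1$ to vertex $v$ in layer $t$ carrying the label $(I\pm c_t e_{u_t}e_{v_t}^{T})_{uv}\in\{0,1,\pm c_t\}$; since each factor contributes exactly one edge with a non-scalar label, splicing in (via the standard formula-to-ABP conversion) a polynomial-size ABP for each such $c_t$ turns this into an ordinary ABP of size $\poly(s,m)$ and bit-complexity $\poly(s,m,b)$, all computed in polynomial time. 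Finally, the entries of $L$ are affine forms whose coefficients are scalars occurring in the input formulas or are $\pm1$, so they have bit-complexity $O(b)$. Collecting these bounds proves the theorem; the steps deserving the most care are the non-interference claim inside the loop and the size-control argument just sketched, the remainder being routine bookkeeping around the single Higman step.
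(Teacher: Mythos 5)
The paper does not prove this statement itself: it is quoted verbatim as a citation to \cite[Proposition A.2]{GGOW20}, so there is no in-paper proof to compare against. Your reconstruction is correct and matches the standard argument in GGOW20: iterate the one-gate Higman step, record each step as a pair of elementary unit upper/lower triangular matrices $I\pm c_t\,e_{u_t}e_{v_t}^T$, verify by inspection of the step that pre-existing entries are untouched, and control the size of $P,Q,P^{-1},Q^{-1}$ by viewing the iterated product as a layered branching program into which one splices formula-to-ABP conversions of the sub-formulas $c_t$. One small remark: as you yourself observe, the step count is really $O(m^2 s)$ when $m>1$ (one per gate across all $m^2$ formulas), while the theorem as quoted writes $k=O(s)$; that is only literally accurate for $m=1$ (or if $s$ is reinterpreted as the total formula size), but it is harmless for the paper's application.
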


We will sometimes denote the block diagonal matrix
$\left(
\begin{array}{c|c}
M & 0 \\
\hline
0 & I_{k}
\end{array}
\right)$ by $M\oplus I_k$.

We now state our strengthening of Theorem~\ref{ehigman} which enables
us to factorize noncommutative polynomials given as algebraic
branching programs. The complete proof is presented in the appendix.

\begin{theorem}\label{block-higman}
  Let $M \in \fR^{m \times m}$ such that each entry $M_{i,j}$ is a
  polynomial computed by a non-commutative algebraic branching program
  of size at most $s$ and bit complexity at most $b$. Then, for
  $k=O(s)$, in time $\poly(s,b)$ we can compute the matrices $P, Q$
  and $L$ in $\fR^{\ell \times \ell }$ of Higman's linearization such
  that
\[
\left(
\begin{array}{c|c}
M & 0 \\
\hline
0 & I_{k}
\end{array}
\right) ~=~PLQ.
\], where $\ell= m+k$.
Moreover, the entries of the matrices $P$ and $Q$ as well as $P^{-1}$
and $Q^{-1}$ are given by polynomial-size algebraic branching programs
which can also be obtained in polynomial time.
\end{theorem}


As $P$ and $Q$ are units with diagonal entries all $1$'s, the matrix
$M$ is full iff the linear matrix $L$ is full. Also, the scalar matrix
$M(\overline{0})$ (obtained by setting all variables to zero) is
invertible iff the scalar matrix $L(\overline{0})$, similarly
obtained, is invertible.\\

\subsection*{Invariant Subspaces and Ronyai's Algorithm}

\begin{definition}
Let $A_1, \ldots, A_n \in \mathbb{F}^{d \times d}$. A subspace
$V\subseteq \F^n$ is called as common invariant subspace of $A_1,
\ldots, A_n$ if $A_i v \in V$ for all $i\in [n]$ and $v\in V$. 
\end{definition}

Clearly $0$ and $\F^n$ are, trivially, common invariant subspaces for
any collection of matrices. The algorithmic problem is to find a
\emph{non-trivial} common invariant subspace if one exists. Ronyai
\cite{Ronyai2} gives a randomized polynomial-time algorithm for this problem
when $\F$ is finite field.

\begin{theorem}\label{thm-ronyai}{\rm\cite{Ronyai2}}
Given $A_1, \ldots, A_n \in \mathbb{F}_q^{d \times d}$ there is a
randomized algorithm running in time polynomial in $n,d, \log q$ that
computes with high probability a non-trivial common invariant subspace
of $A_1, \ldots, A_n$ if such a subspace exists, and outputs ``no''
otherwise.
\end{theorem}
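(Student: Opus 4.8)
The plan is to recast the invariant‑subspace problem in the module‑theoretic language of finite‑dimensional associative algebras and then run through the structure theory (radical, Wedderburn decomposition, explicit matrix components), each ingredient of which is computable in randomized polynomial time over $\F_q$. Set $V=\F_q^d$ and let $\mathcal{A}\subseteq\F_q^{d\times d}$ be the unital $\F_q$‑subalgebra generated by $A_1,\ldots,A_n$. A subspace of $V$ is a common invariant subspace of $A_1,\ldots,A_n$ exactly when it is an $\mathcal{A}$‑submodule of $V$, so the task is to decide whether the $\mathcal{A}$‑module $V$ is simple and, if not, output a proper nonzero submodule. A basis of $\mathcal{A}$ (of size at most $d^2$) is obtained in polynomial time by starting from $\{I,A_1,\ldots,A_n\}$ and closing under multiplication.

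First I would compute the Jacobson radical $J=\mathrm{rad}(\mathcal{A})$; this is deterministic polynomial‑time linear algebra over $\F_q$ (kernel of the trace form $a,b\mapsto\mathrm{Tr}(L_{ab})$ when the characteristic is large, and Dickson's iterated criterion for small characteristic, following Friedl and Ronyai). If $J\neq 0$ then $JV$ is a submodule with $JV\subsetneq V$ by Nakayama's lemma; if moreover $JV\neq 0$ we output it, and if $JV=0$ we pass to the semisimple algebra $\mathcal{A}/J$ acting on $V$, which has the same submodule lattice. So assume $\mathcal{A}$ is semisimple. Next I would compute the center $Z(\mathcal{A})$ by linear algebra; it is a commutative semisimple $\F_q$‑algebra whose primitive idempotents $e_1,\ldots,e_t$ are the central primitive idempotents of $\mathcal{A}$, giving $\mathcal{A}=\bigoplus_i\mathcal{A}e_i$ with each $\mathcal{A}e_i$ simple. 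To extract the $e_i$, pick a random $z\in Z(\mathcal{A})$, compute its minimal polynomial $\mu_z\in\F_q[x]$ (squarefree by semisimplicity), factor $\mu_z=\prod_j p_j$, and read off idempotents by Chinese remaindering inside $\F_q[z]\subseteq Z(\mathcal{A})$; univariate factorization over $\F_q$ is where randomization enters (replaced by Berlekamp's algorithm in the small‑field case, and if a single $z$ fails to separate all components one combines a few random elements or works over a small extension, as in Ronyai's treatment). If $t\geq 2$ then $e_1V$ is a common invariant subspace with $0\neq e_1V\neq V$, and we output it.

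It remains to treat a simple semisimple algebra $\mathcal{A}\cong M_m(\F_{q^k})$ (using Wedderburn's little theorem that every finite division ring is a field), where $k=\dim_{\F_q}Z(\mathcal{A})$ and $m^2k=\dim_{\F_q}\mathcal{A}$; then $V$ is a direct sum of $r=\dim_{\F_q}V/(mk)$ copies of the unique simple $\mathcal{A}$‑module. If $r=1$ then $V$ is simple and the algorithm correctly answers ``no''. If $r\geq 2$, compute the commutant $\mathcal{B}=\{X\in\F_q^{d\times d}:XA_i=A_iX\text{ for all }i\}$ by solving a linear system; by Schur's lemma and the structure of $\mathcal{A}$ one has $\mathcal{B}\cong M_r(\F_{q^k})$, which for $r\geq 2$ has non‑units. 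Find one — take a random $b\in\mathcal{B}$, compute and factor its minimal polynomial over $\F_q$, and use a coprime factor (or a nilpotent element, or retry if $b$ happens to generate a field) to produce a nonzero non‑unit $c\in\mathcal{B}$. Since $c$ commutes with every $A_i$ and is singular as a $d\times d$ matrix, $\ker c$ is a common invariant subspace with $0\neq\ker c\neq V$, and we output it.

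The main obstacle, and the reason the algorithm is only randomized, is the repeated reliance on univariate polynomial factorization over $\F_q$ — to split the center into fields and to locate non‑units inside matrix algebras over extensions of $\F_q$ — together with the extra care needed over very small fields, where a single random algebra element may fail to expose the decomposition and one must combine several elements or pass to a small extension; correctness of the ``no'' output also rests on the fact that once every reduction bottoms out at a single simple module the original module genuinely has no proper nonzero submodule. Everything else — generating $\mathcal{A}$, computing the radical, the center, and the commutant, and reconstructing idempotents from a polynomial factorization — is polynomial‑time linear algebra.
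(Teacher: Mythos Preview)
The paper does not supply a proof of this theorem; it is quoted from \cite{Ronyai2} and used as a black box throughout, so there is no in-paper argument to compare against. Your outline is a faithful reconstruction of Ronyai's algorithm: build the enveloping algebra $\mathcal{A}\subseteq\F_q^{d\times d}$, compute its Jacobson radical $J$ (noting that since $\mathcal{A}$ acts faithfully on $V=\F_q^d$, the case $J\ne 0$ already forces $0\ne JV\subsetneq V$, so your ``$JV=0$'' branch is in fact vacuous), then in the semisimple case split off simple components via central idempotents obtained from univariate factorization, and in the residual simple case pass to the commutant $\mathcal{B}\cong M_r(\F_{q^k})$ and exhibit a zero divisor whose kernel is the desired submodule.

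The only step you leave informal is producing a non-unit in $\mathcal{B}$ by ``factor the minimal polynomial of a random $b$ over $\F_q$ and retry if $b$ generates a field.'' This does work, but the success probability can degrade with $k$ when you insist on factoring over $\F_q$ rather than over the center: e.g.\ $b=\mathrm{diag}(\alpha,\alpha^q)\in M_2(\F_{q^2})$ has $\F_q$-minimal polynomial irreducible even though it is visibly a non-unit after diagonalizing over $\F_{q^2}$. The clean fix, and what Ronyai actually does, is to first compute $Z(\mathcal{B})\cong\F_{q^k}$ by linear algebra and then factor the minimal polynomial of $b$ over that field, which yields a coprime factorization (hence a zero divisor) with probability roughly $1-1/r\ge 1/2$ independently of $q$ and $k$. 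This is a technical refinement rather than a genuine gap; your overall argument is correct.
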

  
\begin{remark}
 We should note here, the classical Burnside's theorem \cite{Bur05} for matrix algebras over algebraically closed fields. It essentially shows that the algebra generated by $A_1, A_2, \ldots, A_n$ is the full matrix algebra iff there is no nontrivial common invariant subspace. \end{remark}  
  
\begin{remark}
    As already mentioned in the introduction, Friedl and Ronyai
  \cite{FR85} have shown that over rationals the problem is at least as
  hard as factoring square-free integers, and hence likely to be
  intractable.
  
\end{remark}

\subsection*{Noncommutative Formulas, Algebraic branching programs} 

Next we recall standard definitions of a noncommutative formulas and
noncommutative algebraic branching programs (ABPs). More details about
noncommutative arithmetic computation can be found in Nisan's work
\cite{N91}:
  
A \emph{noncommutative arithmetic circuit} $C$ over a field $\F$ and
indeterminates $x_1,x_2,\ldots,x_n$ is a directed acyclic graph (DAG)
with each node of indegree zero labeled by a variable or a scalar
constant from $\F$: the indegree $0$ nodes are the input nodes of the
circuit. Internal nodes, representing gates of the circuit, are of
indegree two and are labeled by either a $+$ or a $\times$ (indicating
the gate type). Furthermore, the two inputs to each $\times$ gate are
designated as left and right inputs prescribing the order of gate gate
multiplication. Each internal gate computes a polynomial (by adding or
multiplying its input polynomials), where the polynomial computed at
an input node is just its label.  A special gate of $C$ is the
\emph{output} and the polynomial computed by the circuit $C$ is the
polynomial computed at its output gate. An arithmetic circuit is a
\emph{formula} if the fan-out of every gate is at most one.\smallskip

A noncommutative \emph{algebraic branching program} (ABP) is a layered
directed acyclic graph with one source and one sink. The vertices of
the graph are partitioned into layers numbered from $0$ to $d$, where
edges may only go from layer $i$ to layer $i+1$. The source is the
only vertex at layer $0$ and the sink is the only vertex at layer
$d$. Each edge is labeled with a linear linear form in the
noncommuting variables $x_1, x_2, \ldots, x_n$ The size of the ABP is
the number of vertices. The polynomial in $\F\angle{X}$ computed by
the ABP is defined as follows: the sum over all source-to-sink paths
of the product of the linear forms by which the edges of the path are
labeled.

\section{Some Basic Results}\label{bas-res-sec}

In this section we present some basic results required for our
factorization algorithm.


\subsection*{Monic linear matrices}

\begin{definition}\label{def-monic}{\rm\cite{Cohnfir}}
Let $L=A_0+A_1x_1+ \ldots + A_nx_n\in\fR^{d\times d}$ be a linear
matrix, where each $A_i$ is a $d\times d$ scalar matrix over
$\F$. Then $L$ is called \emph{right monic} if the $d\times nd$ scalar
matrix $[A_1~A_2~\ldots~A_n]$ has full row rank. Equivalently, if
there are matrices $B_1, \ldots, B_n \in \mathbb{F}^{d \times d}$ such
that $\Sigma_{i=1}^n A_iB_i = I_d$ (i.e. the matrix
$[A_1~A_2~\ldots~A_n]$ has right inverse).

Similarly, $L$ is \emph{left monic} if the $nd\times d$ matrix $[A_1^T~A_2^T~\ldots~A_n^T]^T$ 
has full column rank. $L$ is called \emph{monic} if it is
both left and right monic.

\end{definition}

The next two results from Cohn \cite{Cohnfir} are important properties
of monic linear matrices.

\begin{lemma}\label{monic-nonunit}{\rm\cite{Cohnfir}}
A right (or left) monic linear matrix in $\fR^{d\times d}$ is not a unit in
$\fR^{d\times d}$.
\end{lemma}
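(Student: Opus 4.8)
The plan is to argue by contradiction, exploiting the fact that a unit in $\fR^{d\times d}$ has a two-sided inverse with polynomial entries, while a monic linear matrix has "too much" linear content to be invertible over $\FX$. Suppose $L = A_0 + \sum_{i=1}^n A_i x_i$ is right monic and is a unit, so there is $V \in \fR^{d\times d}$ with $LV = I_d$. First I would examine degrees: write $V = V_0 + V_1 + \cdots + V_k$ where $V_j$ is the homogeneous degree-$j$ part (a matrix whose entries are homogeneous degree-$j$ polynomials), and suppose $V_k \ne 0$ with $k \ge 0$. Comparing homogeneous components of $LV = I_d$: the degree-$(k+1)$ part of $LV$ is $\left(\sum_{i=1}^n A_i x_i\right) V_k$, and this must vanish since $I_d$ has degree $0$ (and we may assume $k \ge 1$; the case $k = 0$ forces $L$ to be scalar, contradicting right monicity once $n \ge 1$ and some $A_i \ne 0$).

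The key step is then to show that $\left(\sum_{i=1}^n A_i x_i\right) V_k = 0$ together with right monicity forces $V_k = 0$, giving the contradiction. To see this, I would use the right monicity to pick scalar matrices $B_1,\ldots,B_n$ with $\sum_{i=1}^n A_i B_i = I_d$. Think of $\sum_i A_i x_i$ acting on the vector of degree-$k$ entries of $V_k$: concretely, write $V_k = \sum_{w} V_k^{(w)} w$ summing over degree-$k$ monomials $w$, with $V_k^{(w)} \in \F^{d\times d}$. Then the coefficient of the degree-$(k+1)$ monomial $x_i w$ in $\left(\sum_j A_j x_j\right)V_k$ is $A_i V_k^{(w)}$ (monomials $x_j w$ for distinct $j$ are distinct, so there is no cancellation across different leading variables). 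Hence $A_i V_k^{(w)} = 0$ for every $i \in [n]$ and every degree-$k$ monomial $w$. Multiplying on the right by $B_i$ and summing over $i$ gives $\left(\sum_{i=1}^n A_i B_i\right)\!\big/\!$ — more precisely $\sum_i A_i V_k^{(w)} = 0$ is not directly what I want; instead I should stack: from $A_i V_k^{(w)} = 0$ for all $i$, the matrix $\big[A_1^T~\cdots~A_n^T\big]^T V_k^{(w)}$... no. The clean way: $A_i V_k^{(w)} = 0$ for all $i$ means every column of $V_k^{(w)}$ lies in $\bigcap_i \ker A_i$; but right monicity says $[A_1~\cdots~A_n]$ has full row rank $d$, which is equivalent to saying there is \emph{no} nonzero vector $u \in \F^d$ with $u^T A_i = 0$ for all $i$ — that controls left kernels, not the common right kernel. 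So I instead use the right-inverse formulation directly on the columns: if $c$ is a column of $V_k^{(w)}$ with $A_i c = 0$ for all $i$, this alone need not force $c=0$. The correct route is to also use the \emph{top-degree} coefficient of $V$ differently, or equivalently to run the argument on $VL = I_d$ as well; combining $LV = I_d = VL$ one gets that $V$'s top part must kill the linear part of $L$ from both sides, and left monicity (in the left-monic case) or a symmetric degree argument pins it down. The cleanest uniform argument: pass to the skew field and use that $L$ full and monic implies, via Cohn's rank theory, that $L$ is a non-unit because a unit would make $L^{-1}$ polynomial, but $\det$-type / degree considerations via Higman show a monic linear non-scalar matrix cannot have polynomial inverse. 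I expect the main obstacle to be making the "no cancellation" degree bookkeeping fully rigorous in the matrix setting and handling the left-monic case by the symmetric argument (using $VL = I_d$ and comparing lowest-degree parts, or transposing), so I would present both the $LV=I_d$ top-degree computation and its mirror image, and invoke right (resp. left) monicity precisely at the point where a nonzero top-degree block of $V$ survives.
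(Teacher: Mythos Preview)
Your plan eventually converges to exactly the paper's argument, but you take an unnecessary detour. The paper starts with the \emph{left} inverse equation $CL = I_d$ (your $VL = I_d$) from the outset: looking at the top-degree homogeneous part $\hat{C}$ of $C$, the degree-$(k{+}1)$ component of $CL$ is $\hat{C}\big(\sum_i A_i x_i\big) = 0$; since monomials ending in distinct variables cannot cancel, this forces $\hat{C}A_i = 0$ for every $i$, and then right monicity via $\sum_i A_i B_i = I_d$ gives $\hat{C} = \hat{C}\sum_i A_iB_i = \sum_i(\hat{C}A_i)B_i = 0$, a contradiction.

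Your initial attempt with $LV = I_d$ yields $A_i V_k^{(w)} = 0$, which, as you correctly diagnosed, is a common \emph{right} kernel condition that right monicity (full row rank of $[A_1\cdots A_n]$) does not control. The fix is not to invoke both $LV$ and $VL$, nor to pass to the skew field or Cohn's rank theory---simply use $VL = I_d$ for the right monic case (and, symmetrically, $LV = I_d$ for the left monic case). Once you choose the correct side, the whole argument is the three lines above; the ``no cancellation'' bookkeeping you worried about is just the observation that in $\FX$ a degree-$(k{+}1)$ monomial has a unique last letter.
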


\begin{proof}
Let $L=A_0 + \sum_{i=1}^n A_i x_i$ be right monic, where each $A_i\in
\F^{d\times d}$. By definition, there are matrices $B_i\in\F^{d\times
  d}, 1\le i\le d$ such that $\sum_{i=1}^d A_iB_i=I_d$. Now, suppose
$L$ is a unit. Then there is a matrix $C\in\fR^{d\times d}$ such that
$CL=I_d$. Let the maximum degree of polynomials occurring in $C$ be
$k$, and let $\hat{C}\in\fR^{d\times d}$ denote the degree $k$
component of $C$ (so each nonzero entry of $\hat{C}$ is a homogeneous
polynomial of degree $k$). Clearly, $\hat{C}\cdot (\sum_{i=1}^n A_i
x_i)=0$. The homogeneity of $\hat{C}$'s entries implies that $\hat{C}
A_i=0$ for each $i$. Hence, $\sum_{i=1}^n \hat{C}A_i B_i =0$ which
implies $\hat{C}=0$, contradicting the assumption that
$C\in\fR^{d\times d}$ is the inverse of $L$. The case when $L$ is left
monic is symmetric.
\end{proof}

Let $f\in\fR$ be a nonzero polynomial and $L$ be a linear matrix
obtained from $f$ by Higman linearization as in Equation~\ref{higeq}.
Clearly, $L$ is a full linear matrix. We show that we can transform
$L$ to obtain a full and right (or left) monic linear matrix $L'$ that is stably
associated to $f$. Furthermore, we can efficiently compute $L'$ and
the related transformation matrices.





\begin{theorem}{\rm\cite{Cohnfir}}\label{full-monic}
  Let $L = A_0 + \sum_{i=1}^n A_i x_i$ be a full linear matrix in
  $\fR^{d\times d}$ obtained by Higman linearization from a non
  constant polynomial $f\in\fR$. Then there are deterministic
  $\poly(n,d,\log_2 q)$ time algorithms that compute units $U, U'\in
  \fR^{d\times d}$ and invertible scalar matrices $S,
  S'\in\F_q^{d\times d}$ such that:
\begin{enumerate}  
\item $ULS= L'\oplus I_r$, and $L'$ is right monic. Moreover, if $L$
  is not right monic then $r>0$.
\item $S'LU'=L'\oplus I_{r'}$, and $L'$ is left monic. Moreover, if $L$ is
  not left monic then $r'>0$.
\end{enumerate}  
\end{theorem}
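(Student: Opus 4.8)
The plan is to prove part~1 (the right-monic reduction) by an iterative dimension-reduction procedure, and to obtain part~2 by the dual argument that works with columns instead of rows (equivalently, by applying part~1 to the transpose $L^T$, which is again a linear matrix stably associated to a non-constant polynomial). So I describe only the right-monic reduction.

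A single reduction step will go as follows. Let $L=A_0+\sum_{i=1}^n A_ix_i$ be the current linear matrix, of dimension $m$, and suppose it is not right monic. Then $[A_1~A_2~\cdots~A_n]$ fails to have full row rank, so Gaussian elimination over $\F_q$ produces a nonzero row vector $v\in\F_q^m$ with $vA_i=0$ for all $i\in[n]$. Put $c:=vA_0=vL$, a \emph{constant} row vector; note $c\ne 0$, since otherwise $vL=0$ would exhibit $L$ as a left zero-divisor in $\fF^{m\times m}$, contradicting that $L$ is full. I then extend $v$ to an invertible scalar matrix $T$ with last row $v$, and choose an invertible scalar matrix $S$ with $cS=e_m=(0,\dots,0,1)$ (possible as $c\ne 0$). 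Then $N:=TLS$ is linear and its last row equals $cS=e_m$. Writing $b_i$ for the $(i,m)$-entry of $N$ for $i<m$, set $E:=I_m-\sum_{i<m} b_i E_{i,m}$, where $E_{i,m}$ is a matrix unit; $E$ is unitriangular, hence a unit. Because the last row of $N$ is exactly $e_m$, a direct computation gives $ETLS=EN=M\oplus I_1$, where the $(m-1)\times(m-1)$ matrix $M$ has $M_{ij}=N_{ij}$ for $i,j<m$ and is therefore linear, and $M$ is full since $ETLS$ is. Record the unit $U_0:=ET$ and the invertible scalar $S_0:=S$. (If $L$ is already right monic, stop with $L'=L$, $U=S=I_m$, $r=0$.)

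Iterating, after $k$ reduction steps one obtains, by composing the matrices $U_0^{(j)}\oplus I_{j-1}$ and $S_0^{(j)}\oplus I_{j-1}$, units $U,S\in\fR^{d\times d}$ with $ULS=L'\oplus I_k$, where $L'$ is a full right-monic linear matrix of dimension $d-k$. Every operation is elementary linear algebra over $\F_q$, at most $d-1$ steps are performed, and the entries of $U$ and $S$ (products of at most $d$ matrices with entries of degree $\le 1$) are computed by $\poly(d)$-size ABPs, so the procedure runs in $\poly(n,d,\log_2 q)$ time. For correctness I must check that the loop halts while the dimension is still $\ge 1$: since $f$ is non-constant it is a non-unit in $\FX$, and as $f\oplus I_{\ell-1}=PLQ$ with $P,Q$ units, $L$ is a non-unit in $\fR^{d\times d}$; hence no matrix arising in the procedure can be a unit, each being $ULS$ up to a summand $I_k$ with $U,S$ units. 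Thus the dimension never reaches $0$, and at dimension $1$ the current matrix is a nonzero ($L$ full) non-scalar ($L$ non-unit) linear form, which is right monic; so the loop terminates at a right-monic $L'$ with $\dim L'\ge 1$. Finally, if the given $L$ is not right monic, at least one step is performed, so $r=k>0$.

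The step needing the most care is the reduction move, specifically the verification that $ETLS=M\oplus I_1$ with $M$ \emph{linear}: this rests on the last row of $N=TLS$ being precisely $e_m$, which is exactly what ensures that subtracting multiples of that row (the action of $E$) leaves the top-left block unchanged, hence of degree $\le 1$. The bookkeeping for termination is the other delicate point, and it follows cleanly from Lemma~\ref{monic-nonunit} together with the observation that a full non-unit $1\times 1$ linear matrix must be a non-constant linear form.
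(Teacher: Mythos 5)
Your proof is correct and takes essentially the same route as the paper: find a scalar row vector $v$ killing all the $A_i$ (for $i\ge 1$), use fullness of $L$ to get a nonzero constant row $vA_0$, normalize it by scalar column operations, clear the last column with a unit of linear matrices, and iterate, using non-unitness of $L$ (via Lemma~\ref{monic-nonunit} and the fact that $f$ is non-constant) to guarantee termination at dimension $\ge 1$. The only cosmetic difference is that you normalize the whole last row to $e_m$ in one scalar step before the linear row operations, whereas the paper performs a column swap, then the linear row operations, then scalar column operations; both orderings are equivalent.
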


\begin{proof}
We prove only the first part. The second part has an essentially
identical proof.

We present a proof with a polynomial-time algorithm for computing
$L'$. If $L$ is already right monic there is nothing to
show. Otherwise, the row rank of the matrix $B = [A_1~A_2~\cdots~A_n]$
is strictly less than $d$. By row operations we can drive at least one
row of $B$ to zero. So, there is an invertible scalar matrix $U_1\in
\F^{d\times d}$ such that $U_1B$ has its last row as zeros. Now
$U_1A_0$ must have its last row non-zero since $L$ is a full linear
matrix. So the last row of $U_1L$ has only scalar entries and at least
one of these is non-zero. By a column swap applied to $U_1L$ we can
bring this non-zero scalar $\alpha$ in the $(d,d)^{th}$
position. Hence, the $(d,d)^{th}$ entry of $U_1LS_1$ is nonzero, where
$S_1$ is the matrix implementing the column swap. Now, with suitable
row operations using the last row, we can make all entries above the
$(d,d)^{th}$ entry of the $d^{th}$ column zero. Applying column
operations we can make all entries of the $d^{th}$ row to the left of
the $(d,d)^{th}$ entry zero. The resulting matrix is of the form
$RU_1LS_1S' = \tilde{L}\oplus 1$, where the unit $R$ is a linear
matrix and $S'$ is an invertible scalar matrix implementing the row
and column operations.

If $\tilde{L}$ is not right monic, we can recursively apply the above
procedure on $\tilde{L}$ until we finally obtain a unit
$\tilde{U}\in\fR^{d\times d}$ and a scalar invertible matrix
$\tilde{S}\in\F^{d\times d}$ such that $\tilde{U}\tilde{L} \tilde{S}=
L' \oplus I_r$, for some positive integer $r<d$, such that $L'$ is
right monic.




To see why this recursive procedure terminates for $r<d$, note that
the dimension of matrix $\tilde{L}$ is reducing by $1$ in each
recursive step and the matrix $\tilde{L}$ obtained is a stable
associate of $L$. So, if $r=d$ it would imply $L$ is a unit which is a
contradiction as we know that $L$ is obtained via Higman linearization
on a non-constant polynomial $f$, so $L$ is noninvertible.

Putting $U=RU_1\tilde{U}$ and $S=S_1\tilde{S}$ we have $ULS=L'\oplus
I_r$ where $L'$ is right monic as desired. It is clear that the entire
construction is polynomial time bounded, and that we have small ABPs
for the entries of $U$.
\end{proof}

\begin{remark}\label{remark-full-monic}
By repeated application of the algorithm in Theorem \ref{full-monic}
we can compute units $U_1,U_2\in\fR^{d\times d}$ such that $U_1 L U_2
= L' \oplus I_r$, where $L'$ is \emph{both left and right monic}. Such
a two-sided monic $L'$ is called \emph{monic} in \cite{Cohnfir}.

For our factorization algorithm, it suffices to compute an $L'$ that
is either left or right monic that is associated to $L$ as in Theorem
\ref{full-monic}. It turns out that either a left monic or a right
monic $L'$ suffices to use Ronyai's common invariant subspace
algorithm to factorize $L'$ (and hence also $L$) as we show in Theorem
\ref{lfact1}. More importantly, the fact that matrices $S$ and $S'$ in
Theorem~\ref{full-monic} are scalar is important for the factor
extraction algorithm as discussed in Theorem \ref{cnzthm}.
\end{remark}

\begin{lemma}\label{stable-algo}
  Given polynomials $f,g\in\fR$ as input by noncommutative arithmetic
  formulas, we can check in randomized polynomial time if $f$ and $g$
  are stable associates.
\end{lemma}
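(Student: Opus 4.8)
The plan is to reduce stable associativity of the two polynomials to an equivalence of \emph{monic linear matrices} via Higman linearization, and from there to a small amount of linear algebra together with one identity test. First I would dispose of trivial cases: if $f$ or $g$ is a nonzero constant, then $f$ and $g$ are stable associates iff both are nonzero constants (a nonzero constant is a unit in $\fR$, units map to units under $P(\cdot)Q$ with $P,Q$ units, and no non-constant polynomial is a unit). So assume $f,g$ are non-constant. Using Theorem~\ref{ehigman} I would compute, in time $\poly(s,\log_2 q,n)$, full linear matrices stably associated to $f$ and $g$, and then, iterating Theorem~\ref{full-monic} as in Remark~\ref{remark-full-monic}, replace them by \emph{monic} (two-sided) linear matrices $L_f'\in\fR^{d_f\times d_f}$ and $L_g'\in\fR^{d_g\times d_g}$ that are still stably associated to $f$ and $g$. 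Since stable associativity is an equivalence relation on full matrices, $f$ and $g$ are stable associates iff $L_f'$ and $L_g'$ are.

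\textbf{Key structural input and the main obstacle.}
The next step rests on a fact from Cohn's theory of linear matrices \cite{Cohnfir}: two monic linear matrices are stably associated precisely when they have the \emph{same size} and are \emph{associated over $\F$}, i.e.\ $d_f=d_g=:d$ and there exist invertible scalar matrices $S,T\in\F^{d\times d}$ with $SL_f'T=L_g'$. This is the noncommutative analogue of the classical statement that the $\lambda$-matrices $xI-A$ and $xI-B$ are equivalent over $\F[x]$ iff $A$ and $B$ are similar over $\F$; its content is that the arbitrary, unbounded-degree units allowed in the definition of stable associativity can always be replaced by \emph{scalar} ones once the matrices are monic. I expect justifying (or citing precisely) this collapse to scalar matrices to be the main obstacle; granted it, the remainder is routine. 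If $d_f\ne d_g$ the answer is ``no''.

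\textbf{Algorithm.}
Assuming $d_f=d_g=d$, write $L_f'=A_0+\sum_{i} A_ix_i$ and $L_g'=B_0+\sum_{i} B_ix_i$ with $A_i,B_i\in\F^{d\times d}$, and set $U=T^{-1}$, so that $SL_f'T=L_g'$ becomes $SL_f'=L_g'U$; comparing coefficients of the monomials $1,x_1,\dots,x_n$ this is the \emph{linear} system $SA_i=B_iU$ for $0\le i\le n$ in the $2d^2$ unknown entries of $S$ and $U$. I would solve this system over $\F_q$ by Gaussian elimination, obtaining a basis $(S^{(1)},U^{(1)}),\dots,(S^{(m)},U^{(m)})$, $m\le 2d^2$, of the solution space $\mathcal{W}$, and then decide whether $\mathcal{W}$ contains a pair with both components invertible. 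Writing $S(z)=\sum_k z_kS^{(k)}$ and $U(z)=\sum_k z_kU^{(k)}$ for fresh variables $z_1,\dots,z_m$, such a pair exists iff $\det S(z)\cdot\det U(z)\not\equiv 0$ (the polynomial ring being an integral domain, this product is nonzero iff each factor is, i.e.\ iff the $S$- and $U$-projections of $\mathcal{W}$ each contain an invertible matrix). I would test $\det S(z)\cdot\det U(z)\not\equiv 0$ by evaluating at a random point of $\F_q^m$, or of $(\F_{q^k})^m$ for a small $k$ with $q^k$ exceeding the degree $2d$, which is correct with high probability by the DeMillo--Lipton--Schwartz--Zippel lemma. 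Every step runs in time $\poly(s,\log_2 q,n)$, and the only use of randomness is this final identity test, which can be made deterministic over small $\F_q$ by exhausting a suitable extension field.
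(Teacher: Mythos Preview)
Your proposal is correct and follows essentially the same approach as the paper: reduce to full monic linear matrices via Higman linearization and Theorem~\ref{full-monic}/Remark~\ref{remark-full-monic}, invoke Cohn's result (the paper cites \cite[Theorem~5.8.3]{Cohnfir}) that two full monic linear matrices are stable associates iff they have the same size and are related by scalar invertible matrices, set up the resulting linear system in $2d^2$ unknowns, and test for an invertible solution pair by applying the DeMillo--Lipton--Schwartz--Zippel lemma to $\det S(z)\cdot\det U(z)$ on a random linear combination of a basis of solutions. Your extra bookkeeping for the constant case and your explicit coefficient-wise formulation $SA_i=B_iU$ are fine and do not change the argument.
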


\begin{proof}
  Given $f$ and $g$, using Higman linearization we first compute in
  polynomial time full and monic linear matrices $A$ and $B$ such that
  $f$ and $A$ are stable associates and $g$ and $B$ are stable
  associates (see Theorem \ref{full-monic} and Remark
  \ref{remark-full-monic}). Now, $f$ and $g$ are stable associates iff
  $A$ and $B$ are stable associates. As both $A$ and $B$ are full and
  monic linear matrices, they are stable associates iff both $A$ and
  $B$ are matrices of the same dimension, say $d$, and there are
  scalar invertible matrices $P$ and $Q$ in $\F^{d\times d}$ such that
  $PA=BQ$ \cite[Theorem 5.8.3]{Cohnfir}, where $\F=\F_q$ or a small field extension. Letting the
  $2d^2$ entries of $P$ and $Q$ be variables, we can find a linearly
  independent set of solutions to $PA=BQ$ in polynomial time. Now,
  there exists invertible $P$ and $Q$ in the solution set iff the
  degree-$2d$ polynomial $\det P \times \det Q$ is nonzero on the
  solutions to $PA=BQ$. We can check this by the
  DeMillo-Lipton-Schwartz-Zippel Lemma \cite{DL78,Sch80,Zip79} by
  evaluating $\det P$ and $\det Q$ on a random linear combination of
  the basis of solutions to $PA=BQ$. This will be correct with high
  probablity.
\end{proof}

The next result shows how irreducibility (more generally, the property
of being an atom) is preserved by Higman linearization.

\begin{theorem}\label{associates-preseve-atoms}
Let $f\in\fR$ be a nonconstant polynomial and $L$ 
  be a full linear matrix stably associated with $f$ (obtained via
Higman linearization).Then the
polynomial $f$ is irreducible iff $L$ is an atom.
\end{theorem}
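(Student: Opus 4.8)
The plan is to reduce this statement to two facts that are already available. The first is the observation, recorded right after the definition of stable associates, that among \emph{full non-unit} matrices the property of being an atom is invariant under stable associativity. The second is the elementary identification, in the ring $\fR^{1\times 1}=\fR$, of atoms with irreducible nonconstant polynomials. Granting both, and knowing that $f$ (viewed as a $1\times 1$ matrix) and $L$ are full non-units, the first fact gives ``$f$ is an atom in $\fR$ iff $L$ is an atom'', while the second gives ``$f$ is an atom in $\fR$ iff $f$ is irreducible''; chaining the two equivalences proves the theorem.

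First I would check that $f$ and $L$ are full non-unit matrices. The $1\times 1$ matrix $f$ is full because $f\neq 0$, and it is not a unit because $\deg(gh)=\deg g+\deg h$ forces the units of $\fR$ to be exactly the nonzero scalars, whereas $\deg f>0$. The matrix $L$ is full by hypothesis (indeed, as already noted, any linear matrix obtained by Higman linearization from a nonzero polynomial is full). For the non-unitness of $L$, I would use the defining relation $L\oplus I_t=P\,(f\oplus I_{t'})\,Q$ with $P,Q$ units: if $L$ were a unit then $L\oplus I_t$ would be a unit, hence so would $f\oplus I_{t'}=P^{-1}(L\oplus I_t)Q^{-1}$, and a routine block computation then forces $f$ itself to be a unit in $\fR$---contradicting $\deg f>0$. (Equivalently, units are closed under stable associativity, since $P$, $Q$ and the identity summands are all units.)

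Next I would make the $1\times 1$ translation explicit: a full non-unit element of $\fR^{1\times 1}=\fR$ is precisely a polynomial of positive degree, and a factorization $f=f_1f_2$ into full non-units is precisely a factorization with $\deg f_1>0$ and $\deg f_2>0$, that is, a nontrivial factorization in the sense of Section~\ref{intro}. Hence ``$f$ admits no nontrivial factorization'' is literally the statement ``$f$ is an atom in $\fR$''. Combining this with the stable-associativity invariance of atomicity, which applies by the previous paragraph, yields $f$ irreducible $\Longleftrightarrow$ $f$ is an atom in $\fR$ $\Longleftrightarrow$ $L$ is an atom, as required.

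This proof is mostly bookkeeping. The only step that needs a genuine (though short) argument is verifying that $L$ is a non-unit, i.e. that non-unitness transfers across the stable-associativity relation; I expect this to be the main---and quite minor---obstacle.
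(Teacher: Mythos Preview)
Your reduction is formally valid, since the paper does record as an observation (right after the definition of stable associates) that atomicity is invariant under stable associativity among full non-units. But you are treating as a one-line remark precisely the statement that carries all the content: proving that ``$A\oplus I_t$ not an atom $\Rightarrow$ $A$ not an atom'' is not bookkeeping, and in Cohn's theory it rests on exactly the kind of machinery the paper invokes. So while your writeup is correct, your assessment that ``the only step that needs a genuine argument is the non-unitness of $L$'' inverts the difficulty.

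The paper's proof is different in that it unpacks this black box explicitly for the pair $(f,L)$ coming from Higman linearization. The forward direction (``$f$ reducible $\Rightarrow$ $L$ not an atom'') is easy and matches your idea. For the converse, the paper starts from a factorization $L=M_1M_2$ into full non-units, writes $f\oplus I_s=(PM_1)(M_2Q)$, reads off from the $(2,1)$ block a relation $CD=0$ with $C\in\fR^{s\times(s+1)}$ and $D\in\fR^{(s+1)\times 1}$, and then applies the Matrix Product Trivialization Lemma (Lemma~\ref{triv-lemma}) to produce a unit $U$ after which the modified block decomposition yields an honest nontrivial factorization $f=c_1'd_1'$ in $\fR$. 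That trivialization step is the real work, and it is not visible in your argument.

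The payoff of the paper's route is not just self-containment: the same trivialization manoeuvre is the engine of the factor-extraction algorithm (Lemma~\ref{triv-lem}, Lemma~\ref{extract}, Theorem~\ref{cnzthm}). So the paper's proof doubles as a rehearsal for the algorithmic step of recovering factors of $f$ from factors of $L$. Your argument, by contrast, is a clean non-constructive deduction from Cohn's general structure theory; it establishes the equivalence but gives no hint of how to actually extract a factor of $f$ from a given factorization of $L$.
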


We give a self-contained proof of the above theorem, using the
following (suitably paraphrased) result of Cohn.

\begin{lemma}[Matrix Product Trivialization]{\rm\cite[pp.~198]{Cohnintro}}\label{triv-lemma}
    Let $A\in\fR^{m\times n}$ and $B\in\fR^{n\times s}$ be polynomial
    matrices such that their product $AB=0$. Then there exists a unit
    $P\in \fR^{n\times n}$ such that for every index $i\in [n]$ either
    the $i^{th}$ column of the matrix product $AP$ is all zeros or the
    $i^{th}$ row of the matrix product $P^{-1}B$ is all zeros.
\end{lemma}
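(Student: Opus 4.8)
The plan is to deduce the statement from Cohn's theorem that $\fR=\FX$ is a \emph{free ideal ring} (fir): over a fir every submodule of a free module is free, every projective module is free, and $\fR$ has invariant basis number. Regard $\fR^n$ as the right $\fR$-module of column vectors, so that left multiplication by $A$ is a homomorphism of right $\fR$-modules $\alpha:\fR^n\to\fR^m$, $v\mapsto Av$; its image is the column span $\col A\subseteq\fR^m$ and its kernel is $K=\ker\alpha\subseteq\fR^n$. Both are submodules of free modules over a fir, hence free; write $\col A\cong\fR^p$. Because $\col A$ is free it is projective, so the short exact sequence
\[
0\longrightarrow K\longrightarrow \fR^n\xrightarrow{\ \alpha\ }\col A\longrightarrow 0
\]
splits, giving an internal direct sum decomposition $\fR^n=C\oplus K$ with $C\cong\col A\cong\fR^p$. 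Since $K$ is free and $\fR$ has invariant basis number, comparing $\fR^p\oplus K\cong\fR^n$ forces $K\cong\fR^k$ with $p+k=n$.

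Now fix an $\fR$-basis $c_1,\dots,c_p$ of $C$ and an $\fR$-basis $u_1,\dots,u_k$ of $K$; their concatenation $c_1,\dots,c_p,u_1,\dots,u_k$ is an $\fR$-basis of $\fR^n$, so the matrix $P\in\fR^{n\times n}$ whose columns are these vectors (the $c_j$ first, the $u_j$ last) represents an isomorphism $\fR^n\xrightarrow{\ \sim\ }C\oplus K=\fR^n$ and is therefore a unit of $\fR^{n\times n}$. The last $k$ columns of $AP$ are $\alpha(u_1),\dots,\alpha(u_k)$, which vanish since $u_j\in K=\ker\alpha$; thus $AP=(A'\mid 0)$ with $A'\in\fR^{m\times p}$. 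On the other hand $AB=0$ says every column of $B$ lies in $K$, hence is an $\fR$-combination of $u_1,\dots,u_k$ only; rewriting the columns of $B$ in the new basis is precisely the operation $B\mapsto P^{-1}B$, so the first $p$ rows of $P^{-1}B$ are zero, i.e. $P^{-1}B=\binom{0}{B'}$ with $B'\in\fR^{k\times s}$. Hence for $1\le i\le p$ the $i^{th}$ row of $P^{-1}B$ is zero, and for $p<i\le n$ the $i^{th}$ column of $AP$ is zero, which is exactly the claim; the degenerate cases $A=0$ or $B=0$ are covered by $P=I_n$.

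The only substantial input is Cohn's structural theorem that $\FX$ is a fir, which I would cite as a black box. Granting that, the argument is routine; the points that need care are the module-side conventions — so that $A$ genuinely acts as a homomorphism and $P^{-1}B$ genuinely re-expresses the columns of $B$ in the new basis — and the rank bookkeeping that makes $P$ square and invertible. An alternative, more hands-on route would reduce to the single-relation case, where trivializability of a relation $ab=0$ (with $a$ a row and $b$ a column) is one of the standard characterizations of semifirs, and then induct on the number of rows and columns; but the module-theoretic proof above seems cleanest and is the one I would write out.
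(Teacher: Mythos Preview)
The paper does not actually prove Lemma~\ref{triv-lemma}; it is quoted from Cohn \cite[pp.~198]{Cohnintro} and used as a black box in the proof of Theorem~\ref{associates-preseve-atoms}. Your argument is correct and is precisely the standard module-theoretic proof one extracts from Cohn's fir theory: freeness of submodules of free modules over a fir, splitting of the short exact sequence because the image is free (hence projective), and IBN to pin down the ranks. One small point worth making explicit is why $K$ has \emph{finite} rank: it is a direct summand of $\fR^n$, hence a homomorphic image of $\fR^n$, hence finitely generated; over a fir this forces $K\cong\fR^k$ for some finite $k$, after which IBN gives $p+k=n$. With that clarification your module conventions and the identification of the basis-change matrix with a unit of $\fR^{n\times n}$ are handled correctly.

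The closest thing in the paper to a proof of this statement is Lemma~\ref{triv-lem}, an \emph{algorithmic} version of the special case where $A$ is a linear matrix and $B$ is a single column whose entries are given by ABPs. That argument is entirely different and hands-on: it locates a highest-degree monomial in $B$, uses its scalar coefficient vector to build an explicit elementary column operation, and recurses on a problem of one smaller dimension. The payoff is an efficiently computable trivializing unit whose entries have small ABPs, which is exactly what the factor-extraction step (Lemma~\ref{extract}) needs. Your proof, being existential, would not by itself yield such complexity bounds, but it establishes the general lemma cleanly.
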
    

\begin{proofof}{Theorem~\ref{associates-preseve-atoms}}
By Higman linearization, we have upper and lower triangular matrices
$P$ and $Q$, respectively, such that
\[
f\oplus I_s = PLQ,
\]
for some positive integer $s$.

Now, if $f$ is not irreducible then we can write $f=f_1f_2$, where
$f_1$ and $f_2$ are both nonconstant polynomials in $\fR$. Hence
$f\oplus I_s$ factorizes as the product of non-units $(f_1\oplus
I_s)\cdot (f_2\oplus I_s)$, which implies the factorization
\[
L = P^{-1}(f_1\oplus I_s)(f_2\oplus I_s)Q^{-1}.
\]
Now, we claim $P^{-1}(f_1\oplus I_s)$ and $(f_2\oplus I_s)Q^{-1}$ are
non-units. Suppose $P^{-1}(f_1\oplus I_s)$ is a unit. Then $f_1\oplus
I_s$ is a unit which would imply there is an invertible matrix
$M\in\fR^{d\times d}$ such that $(f_1\oplus I_s)M=I_{s+1}$. But that
implies $f\cdot M_{1,1}=1$ which is impossible since $f_1$ is a
nonconstant polynomial. Similarly, $(f_2\oplus I_s)Q^{-1}$ cannot be a
unit. Hence $L$ is not an atom.

Conversely, suppose $L$ is not an atom. Then we can factorize it as
$L=M_1M_2$, where $M_1,M_2\in \fR^{d\times d}$ are
full non-units. Therefore, we have the factorization
\[
f\oplus I_s = (PM_1)(M_2Q).
\]

Writing the matrices $PM_1$ and $M_2Q$ as $2\times 2$ block matrices,
we have:
\[\left(
\begin{array}{c|c}
f & 0 \\
\hline
0 & I_s
\end{array}
\right) 
= \left(
\begin{array}{c|c}
c_1 & c_3 \\
\hline
c_2 & c_4
\end{array}
\right)\cdot 
\left(
\begin{array}{c|c}
d_1 & d_3 \\
\hline
d_2 & d_4
\end{array}
\right).
\]
{From} the $(2,1)^{th}$ matrix block on the left hand side of the
above equation, we obtain the following matrix identity:
\[
0 = \left(
\begin{array}{c c}
c_2 & c_4
\end{array}
\right)\cdot 
\left(
\begin{array}{c}
d_1 \\
d_2
\end{array}
\right),
\]
where $C=(c_2~c_4)$ is in $\fR^{s\times (s+1)}$ and $D=\left(
\begin{array}{c}
d_1 \\
d_2
\end{array}
\right)$ is in $\fR^{(s+1)\times 1}$. By Lemma~\ref{triv-lemma} there
is a unit $U\in \fR^{(s+1)\times (s+1)}$ such that for every $1\le
i\le s+1$ either the $i^{th}$ column of $C''=C\cdot U$ is all zeros or
the $i^{th}$ row of $D''=U^{-1} D$ is all zeros.  Note that $D''$, and
hence $D$, cannot be the all zeros column as $M_2Q$ is full. So, at
least one entry of $D''$ is nonzero. Hence, at least one column of
$C''$ is all zeros. By a suitable column permutation matrix $\Pi$ we
can ensure that the first column of $C \cdot U \Pi$ is all zeros.
Clearly, first entry of $ \Pi^{-1}U^{-1} D$ is nonzero. Writing $f
\oplus I_s$ as a product of $C'= PM_1U\Pi$ and $D'=\Pi^{-1}U^{-1}M_2Q$
we have
\[\left(
\begin{array}{c|c}
f & 0 \\
\hline
0 & I_s
\end{array}
\right) 
= \left(
\begin{array}{c|c}
c'_1 & c'_3 \\
\hline
c'_2 & c'_4
\end{array}
\right)\cdot
\left(
\begin{array}{c|c}
d'_1 & d'_3 \\
\hline
d'_2 & d'_4
\end{array}
\right),
\]

where $c_2'$ is an all zeros column and $d_1'$ is nonzero. {From} the
$(2,2)^{th}$ matrix block of the above equation, we obtain $c_4' d_4'
= I_s$ so $c_4'$ and $d_4'$ are units. By observing $(2,1)^{th}$
matrix block of the above equation we get $c_4' d_2' =0$, which
implies $d'_2$ is an all zeros column as $c_4'$ is unit.  It follows
that $f=c'_1\cdot d'_1$. Furthermore, it is a nontrivial factorization
because both $c'_1$ and $d'_1$ are non-units (because $C'$ and $D'$
are non-units, and $c'_4$ and $d'_4$ are units).
\end{proofof}


Let $L \in\fR^{d\times d}$ be a full and right (or left) monic linear matrix. Let
$L=A_0+\sum_{i=1}^n A_i x_i$. For a positive integer $\ell$ let $M_i,
i\in [n]$ be $\ell\times \ell$ scalar matrices with entries from $\F$
(or a small degree extension of $\F$). Let $Y_i, i\in [n]$ be
$\ell\times \ell$ matrices whose entries are distinct noncommuting
variables $y_{ijk}, 1\le j, k\le \ell$. Then the evaluation of the
linear matrix $L$ at $x_i\leftarrow Y_i+M_i, 1\le i\le n$ is the
$d\ell\times d\ell$ linear matrix in the $y_{ijk}$ variables:
\[
L' = A_0\otimes I_\ell + \sum_{i=1}^n A_i\otimes M_i +
\sum_{i=1}^n\sum_{j,k=1}^\ell (A_i\otimes E_{jk})\cdot y_{ijk}
\]

\begin{lemma}\label{shift-inv-lemma}
  There is a positive integer $\ell\le 2d$ such that for randomly
  picked $\ell \times \ell$ matrices $M_i, i\in[n]$ ( with entries
  from $\F$ or a small degree extension field) the matrix
  $A_0\otimes I_\ell + \sum_{i=1}^n A_i\otimes M_i$ is an
  invertible matrix.
\end{lemma}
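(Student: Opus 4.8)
The plan is to read the matrix $A_0\otimes I_\ell+\sum_{i=1}^n A_i\otimes M_i$ as the evaluation $L(M_1,\dots,M_n)$ of the linear matrix $L$ at the matrix substitution $x_i\leftarrow M_i$, and to derive invertibility of a random such substitution from the fact that $L$ is full. Recall that a full matrix over $\fR$ is precisely one that is invertible over the free skew field $\fF$; equivalently, the noncommutative rank of $L$ is $\mathrm{ncrank}(L)=d$.

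The first step is to invoke the regularity lemma for noncommutative rank (Derksen--Makam \cite{DM17}): there is a blow-up dimension $\ell\le 2d$ such that, letting $Y_1,\dots,Y_n$ be $\ell\times\ell$ matrices of fresh commuting indeterminates, the ordinary rank of the $d\ell\times d\ell$ matrix $A_0\otimes I_\ell+\sum_{i=1}^n A_i\otimes Y_i$ equals $\ell\cdot\mathrm{ncrank}(L)=d\ell$. Hence this matrix is generically nonsingular, i.e. $p:=\det\!\big(A_0\otimes I_\ell+\sum_{i=1}^n A_i\otimes Y_i\big)$ is a \emph{nonzero} polynomial in the $n\ell^2$ entries of the $Y_i$; it has coefficients in $\F$ (as the $A_i$ do) and total degree $d\ell$. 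If $\F=\F_q$ with $q\le d\ell$ we simply allow the entries to range over a small extension $\F_{q^k}$ with $q^k>d\ell$; this does not affect the fact that $p$ is a nonzero polynomial.

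The second step is a routine application of the DeMillo--Lipton--Schwartz--Zippel lemma \cite{DL78,Sch80,Zip79}: choosing the $\ell^2$ entries of each $M_i$ independently and uniformly at random from $\F$ (or from $\F_{q^k}$ when $q\le d\ell$), the probability that $p(M_1,\dots,M_n)=0$ is at most $d\ell/q$ (respectively $d\ell/q^k$), which is small. Therefore $\det\!\big(A_0\otimes I_\ell+\sum_{i=1}^n A_i\otimes M_i\big)\neq 0$ with high probability, i.e. the matrix is invertible, as required.

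The only substantive ingredient is the dimension bound $\ell\le 2d$ in the first step: fullness of $L$ guarantees on general grounds only that \emph{some} finite-dimensional matrix substitution makes $L$ invertible (for instance through the realization of $\fF$ by matrix evaluations), and cutting the dimension down to $O(d)$ is exactly the content of the Derksen--Makam regularity lemma. Everything else is bookkeeping; in particular, left or right monicity of $L$ is not actually needed for this lemma (fullness alone suffices), though it is available from the preceding construction.
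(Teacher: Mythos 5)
Your proof is correct and follows the same route as the paper: fullness of $L$ gives noncommutative rank $d$, the Derksen--Makam regularity result from \cite{DM17} yields a blow-up dimension $\ell\le 2d$ at which the generic commuting-variable matrix substitution has full commutative rank $d\ell$, and the DeMillo--Lipton--Schwartz--Zippel lemma then shows a random scalar substitution (over $\F$ or a small extension) is invertible with high probability. Your observation that monicity is not needed here — fullness alone suffices — is also consistent with the paper's proof.
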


\begin{proof}
Since $L\in\fR^{d\times d}$ is a full linear matrix, it has
noncommutative rank $d$. Hence, by the result of \cite{DM17} for the
generic $2d\times 2d$ matrix substitution $x_i \leftarrow X_i,
i\in[n]$, where $X_i$ is a matrix of distinct commuting variables, the
commutative rank of $L(X_1,X_2,\ldots,X_n)$ is $2d^2$ (which means it
is invertible). Hence there is a least $\ell\le 2d$ such that the
commutative rank of $L(X_1,X_2,\ldots,X_n)$ is $d\ell$, where $X_i$
are generic $\ell\times \ell$ matrices with commuting variables.
Hence, by the DeMillo-Lipton-Schwarz-Zippel lemma \cite{DL78,Sch80,Zip79} the
rank of the scalar matrix $L(M_1,M_2,\ldots,M_n)$ is $d\ell$, where
$M_i$ is a random scalar matrix with entries from $\F$ or a small
extension.
\end{proof}

Finally, we state and prove a \emph{modified version} of a result due
to Cohn that allows us to relate the factorization of a polynomial
$f\in\fR$ to the factorization of its Higman linearization $L$. The
proof is given in the appendix.

\begin{theorem}{\rm\cite[Theorem 5.8.8]{Cohnfir}}\label{cohnthm}
  Let $C\in\fR^{d\times d}$ be a full and right monic (or left monic)
  linear matrix for $d>1$. Then $C$ is not an atom if and only if
  there are $d\times d$ invertible scalar matrices $S$ and $S'$ such
  that
  \begin{equation}\label{eq-cohnthm}
  SCS' =
  \left(
\begin{array}{cc}
A & 0 \\
D & B
\end{array}
\right)
\end{equation}
where $A$ is an $r\times r$ full right (respec. left) monic linear matrix and $B$ is
an $s\times s$ full right (respec. left) monic linear matrix such that $r+s=d$.
\end{theorem}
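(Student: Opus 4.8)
The plan is to prove the two directions separately; the backward implication is short and the forward one carries the substance.

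\textbf{($\Leftarrow$).} Assume $SCS'=\left(\begin{smallmatrix} A & 0\\ D & B\end{smallmatrix}\right)$ with $A,B$ full right monic linear of sizes $r,s\ge 1$, $r+s=d$. I would split the block matrix as $\left(\begin{smallmatrix} A & 0\\ D & I_s\end{smallmatrix}\right)\left(\begin{smallmatrix} I_r & 0\\ 0 & B\end{smallmatrix}\right)$, so that $C$ is the product of $S^{-1}\left(\begin{smallmatrix} A & 0\\ D & I_s\end{smallmatrix}\right)$ and $\left(\begin{smallmatrix} I_r & 0\\ 0 & B\end{smallmatrix}\right)S'^{-1}$. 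Both block-triangular factors are invertible over $\fF$ (invert them formally over $\fF$, using that $A,B$ are full), hence full, and $S,S'$ are units, so both factors of $C$ are full. Neither is a unit: comparing the blocks of a two-sided inverse of $\left(\begin{smallmatrix} A & 0\\ D & I_s\end{smallmatrix}\right)$ forces $A$ to be a unit in $\fR^{r\times r}$, contradicting Lemma~\ref{monic-nonunit}, and symmetrically for $B$; multiplying by the units $S^{-1},S'^{-1}$ does not change this. So $C$ is a product of two full non-units, i.e.\ not an atom.

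\textbf{($\Rightarrow$): reduction to a linear block form.} Suppose $C$ is not an atom. I would first reduce to producing a \emph{linear} block form: a unit $U\in\fR^{d\times d}$ and an invertible scalar $T$ with $UCT^{-1}=\left(\begin{smallmatrix} A & 0\\ D & B\end{smallmatrix}\right)$, where $A$ (size $r$) and $B$ (size $s$) are full right monic linear and $0<r<d$. Given this, $UC=\left(\begin{smallmatrix} A & 0\\ D & B\end{smallmatrix}\right)T$ is linear; writing $C=A_0+\sum_i A_ix_i$ (right monic), the degree-$(\deg U+1)$ component of $UC$ equals $U_{\mathrm{top}}\sum_i A_ix_i$, and since $UC$ is linear this component vanishes, forcing $U_{\mathrm{top}}[A_1~\cdots~A_n]=0$ and hence $U_{\mathrm{top}}=0$ because the rows of $[A_1~\cdots~A_n]$ are independent; thus $\deg U=0$, i.e.\ $U$ is scalar, and $S:=U$, $S':=T^{-1}$ work. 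The left monic case is symmetric (one instead gets a scalar $V$ with $CV$ linear).

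\textbf{($\Rightarrow$): the linear block form.} For this I would pass to the cokernel module $M:=\fR^{1\times d}/\fR^{1\times d}C$ and use the standard bijection between proper factorizations of the full matrix $C$ and proper nonzero submodules of $M$. From a proper factorization $C=PQ$ one gets the submodule $M':=\fR^{1\times d}Q/\fR^{1\times d}C$: it is nonzero because $P$ is a non-unit (its rows do not generate $\fR^{1\times d}$, a surjective endomorphism of a finitely generated free $\fR$-module being injective), and proper because $Q$ is a non-unit. A short degree computation using independence of the rows of $[A_1~\cdots~A_n]$ gives $\deg(vC)=\deg v+1$ for $v\ne 0$, so the image $W$ of the constant row vectors in $M$ has $\dim_\F W=d$ and generates $M$. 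Put $W':=M'\cap W$ and $r:=\dim_\F W'$. The heart of the matter is to show $0<r<d$, that $M'$ is generated by $W'$ and is the cokernel of a full monic linear $r\times r$ matrix $A$, and dually that $M/M'$ is generated by the image of $W$ and is the cokernel of a full monic linear $(d-r)\times(d-r)$ matrix $B$; then an $\F$-basis of $W$ adapted to $W'$ presents $M$ by $\left(\begin{smallmatrix} A & 0\\ D & B\end{smallmatrix}\right)$, which equals $UCT^{-1}$ for the change-of-generators scalar $T$ and a suitable unit $U$ recombining the defining relations.

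\textbf{The main obstacle} is exactly this last claim: that a proper submodule of the cokernel of a monic linear matrix is itself presented by a \emph{smaller monic linear} matrix, compatibly with the degree-$0$ part — equivalently, that a non-atomic monic linear matrix admits a proper \emph{linear} factorization. Proving this seems to need Cohn's structure theory over free ideal rings (the weak algorithm, and freeness of finitely generated submodules of free $\fR$-modules). A possible alternative is to linearize an arbitrary factorization $C=PQ$ by applying Higman linearization (Theorem~\ref{ehigman}) to $P$ and $Q$ and then invoking the trivialization lemma (Lemma~\ref{triv-lemma}) together with the degree bookkeeping above; in that route the delicate point is controlling the matrix sizes and verifying that the transformations involved can be taken scalar.
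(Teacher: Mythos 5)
Your $(\Leftarrow)$ direction is correct and is essentially the paper's argument: one splits the block matrix into a lower-triangular factor and a block-diagonal factor, shows each is full, and uses Lemma~\ref{monic-nonunit} (via the given right-monicity of $A$ and $B$) to conclude each is a non-unit, so $C$ is not an atom. Your degree observation in the $(\Rightarrow)$ direction — that if $UCT^{-1}$ is linear with $U$ a unit and $T$ scalar, then the top-degree part of $U$ annihilates $[A_1~\cdots~A_n]$ and hence vanishes by right-monicity, forcing $U$ scalar — is also correct.

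However, the forward direction has a genuine gap, and you flag it yourself. The whole proof hinges on converting an arbitrary nontrivial factorization $C = FG$ (with $F,G$ full non-units in $\fR^{d\times d}$) into a factorization where both factors are \emph{linear}. You correctly identify this as the obstacle but do not prove it; your module-theoretic sketch (cokernel $M = \fR^{1\times d}/\fR^{1\times d}C$, proper submodule from $G$, degree-$0$ part $W$, adapted basis) is the right framework but leaves ``the heart of the matter'' — that the proper submodule is itself the cokernel of a smaller full monic linear matrix, compatibly with $W$ — as an unproven claim. The paper closes exactly this gap by citing Cohn's result \cite[Lemma~5.8.7]{Cohnfir} that the factors of a linear matrix may be taken linear, and then proceeds more directly than you propose: it applies Theorem~\ref{full-monic} to the linear factor $F$ to obtain $S_1 F U_1 = A \oplus I$ with $A$ \emph{left} monic and $S_1$ scalar, writes $S_1 C = (A\oplus I)\, G'$, and argues from linearity of $S_1 C$ and left-monicity of $A$ (a dual of your degree argument) that the top block row $[G'_1~G'_3]$ of $G'$ is scalar; since $S_1C$ is full this block has full row rank, so there is a scalar $S'$ normalizing it to $[I_r~0]$, giving $S_1 C S'$ in the desired block form. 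This produces scalar $S,S'$ directly, so your extra step showing a unit $U$ must be scalar is not needed. Note also that in the paper's construction $A$ emerges as left monic, not right monic; the right-monicity of $A$ and $B$ demanded by the theorem statement is then recovered automatically from the block form, by the row-rank argument in the other direction of the paper's proof. In summary: your proposal is correct in outline and for $(\Leftarrow)$, but the $(\Rightarrow)$ direction stops short at precisely the lemma (linearization of the factorization) that does the real work, which the paper imports from Cohn.
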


\begin{remark}
  In \cite{Cohnfir} the theorem is proved under the stronger
  assumption that $C$ is monic. However, as we show, it holds even for
  $C$ that is right monic or left monic with minor changes to Cohn's
  proof. We require the above version for our factorization algorithm.
\end{remark}


\section{Polynomial factorization: commutatively non-zero case}\label{cnz-sec}

Recall that $\fR$ denotes the free noncommutative polynomial ring
$\F\angle{x_1,x_2,\ldots,x_n}$ and our goal is to give a randomized
polynomial-time factorization algorithm for input polynomials in $\fR$
given as arithmetic formulas when $\F=\F_q$ is a finite field of size
$q$.

A polynomial $f\in\fR$ is \emph{commutatively nonzero} if
$f(\alpha_1,\alpha_2,\ldots,\alpha_n)\ne 0$ for scalars $\alpha_i\in
\F$ (or a small extension field of $\F$).

In this section we will present the factorization algorithm for
commutatively nonzero polynomials.\footnote{In the next section we
  will deal with the general case. The algorithm is more technical in
  detail, although in essence the same.}  It has three broad steps:
\begin{itemize}
\item[(i)] We transform the given polynomial $f$ to a full and \emph{right (or left)
  monic} linear matrix $L$ by first the Higman linearization of $f$
  followed by the algorithm in the proof of Theorem~\ref{full-monic}.
\item[(ii)] Next, we factorize the full and right (or left) monic linear matrix $L$
  into atoms.
\item[(iii)] Finally, we recover the irreducible factors of $f$ from
  the atomic factors of $L$.
\end{itemize}  

We formally state the three problems of interest in this paper.

\begin{problem}[$\prob{FACT}(\F)$]\hfill{~}\\
\textbf{Input}~A noncommutative polynomial $f\in\FX$ given by an
arithmetic formula.\\
\textbf{Output}~Compute a factorization
$f=f_1f_2\cdots f_r$, where each $f_i$ is irreducible, and each $f_i$
is output as an algebraic branching program.
\end{problem}

\begin{problem}[$\prob{LIN{-}FACT}(\F)$]\hfill{~}\\
\textbf{Input}~A full and right (or left) monic linear matrix $L\in\fR^{d\times 
  d}$.\\  
\textbf{Output}~Compute a factorization
$L=F_1F_2\cdots F_r$, where each $F_i$ is a full linear matrix that is
an atom.
\end{problem}

\begin{problem}[$\prob{INV}(\F)$]\hfill{~}\\
\textbf{Input}~A list of scalar matrices $A_1,A_2,\ldots,A_n\in
\F^{d\times d}$.\\
\textbf{Output}~Compute a nontrivial invariant
subspace $V\subset \F^d$ or report that the only invariant subspaces
are $0$ and $\F^d$.
\end{problem}  

In the three-step outline of the algorithm, for the second step we
will show that factoring a full and right (or left) 
monic linear matrix is randomized polynomial-time reducible to the
problem of computing a common invariant subspace for a collection of
scalar matrices. For the third step, we will give a polynomial-time
algorithm (based on Lemma~\ref{triv-lemma}) to recover the irreducible
factors of $f$ from the atomic factors of $L$.

\begin{remark}
  We use Ronyai's randomized polynomial-time algorithm \cite{Ronyai2}
  to solve the problem of computing a a common invariant subspace for
  a collection of matrices over $\F_q$.  Over rational numbers $\Q$,
  even for a special case the problem of computing a common invariant
  subspace turns out to be at least as hard as factoring square-free
  integers~\cite{FR85}. Hence, our approach to noncommutative
  polynomial factorization does not yield an efficient algorithm over
  $\Q$.
\end{remark}

Suppose $f\in\fR$ is given by a noncommutative arithmetic
formula. Since $f$ has small degree we can check if it is
commutatively nonzero in randomized polynomial-time by the
DeMillo-Lipton-Schwatrtz-Zippel test \cite{DL78,Sch80,Zip79} and, if
so, find $\alpha_i\in\F, i\in[n]$ such that
$f(\alpha_1,\alpha_2,\ldots,\alpha_n)\ne 0$ (if $\F$ is small, we pick
$\alpha_i$ from a small extension field). Furthermore, by a linear
shift of the variables $x_i\leftarrow x_i + \alpha_i, i\in[n]$
followed by scaling we can assume $f(\overline{0}) = 1$. Note that
from the factorization of the linear shift of $f$ we can recover the
factors of $f$ by shifting the variables back, and irreducibility is
preserved by linear shift. For the rest of this section we will
assume $f(\overline{0})=1$. 

Let $L=A_0+\sum_{i=1}^n A_i x_i$. As $f(\overline{0})=1$, we have
$L(\overline{0})=A_0$ is an invertible matrix. We now present an
efficient algorithm for factoring $L$ as a product of linear matrices
$L_1L_2\cdots L_r$, where each $L_i$ is an atom.

\begin{remark}
The factorization algorithm for arbitrary full and right (or left)
monic linear matrices (in which $A_0$ need not be invertible) is
similar but more involved. It is based on Lemma~\ref{shift-inv-lemma}
and is dealt with in the next section.
\end{remark}

\subsection{Algorithm for a special case of $\prob{LIN{-}FACT}(\F_q)$}


\begin{theorem}\label{lfact1}
  There is a randomized polynomial-time algorithm for the following
  two special cases of the $\prob{LIN{-}FACT}(\F_q)$ problem:
  \begin{enumerate}
 \item Given a full right monic matrix $L$ as input such that
   $L(\overline{0})$ is an invertible matrix, the algorithm outputs a
   factorization of $L$ as a product of linear matrices that are
   atoms.
  \item Given a full left monic matrix $L$ as input such that
   $L(\overline{0})$ is an invertible matrix, the algorithm outputs a
   factorization of $L$ as a product of linear matrices that are
   atoms.  
\end{enumerate}
\end{theorem}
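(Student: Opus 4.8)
The plan is to reduce the factorization of $L$ to repeated calls of Ronyai's common‑invariant‑subspace algorithm (Theorem~\ref{thm-ronyai}), using Cohn's block‑triangularization criterion (Theorem~\ref{cohnthm}) as the bridge, and then to reassemble from the invariant‑subspace data an explicit factorization of $L$ into linear atoms.

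First I would dispose of the easy reductions. Part~(2) follows from part~(1) by transposition: if $L$ is full and left monic with $L(\overline{0})$ invertible then $L^T$ is full and right monic with $L^T(\overline{0})$ invertible, and transposition is a degree‑preserving anti‑automorphism of $\fR^{d\times d}$ carrying full matrices to full matrices and units to units, so a factorization $L^T=P_1\cdots P_r$ into linear atoms yields $L=P_r^T\cdots P_1^T$ into linear atoms. For part~(1), since $A_0=L(\overline{0})$ is invertible I replace $L$ by $A_0^{-1}L=I_d+\sum_i(A_0^{-1}A_i)x_i$; left multiplication by the scalar unit $A_0^{-1}$ preserves fullness and right monicity, and from a factorization of $A_0^{-1}L$ into linear atoms one recovers one of $L$ by absorbing $A_0$ into the first factor (a unit times an atom is again an atom, being a full non‑unit stably associated to it). It is then cleanest to describe a single recursive procedure that factors \emph{any} full linear matrix $L$ with $L(\overline{0})=I_d$ into linear atoms; this subsumes both parts.

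The procedure distinguishes three situations. If $\dim L=1$ then $L$ is a degree‑$1$ polynomial, hence either a scalar unit (returned as the empty product) or an atom; more generally $L=I_d+\sum_i A_ix_i$ is a unit exactly when $\sum_iA_ix_i$ is nilpotent, i.e.\ the associative algebra generated by $A_1,\dots,A_n$ is nilpotent, which is a polynomial‑time test. If $L$ is a full non‑unit that is \emph{not} right monic, I apply one elementary step of the reduction in the proof of Theorem~\ref{full-monic} (a scalar row operation, a scalar column swap, then elementary degree‑$\le1$ row and column operations using the freshly created scalar last row) to peel off an $I_1$ summand, $U_1LS_1=\tilde L\oplus 1$ with $U_1$ an elementary degree‑$\le1$ unit, $S_1$ scalar, $\dim\tilde L=d-1$, $\tilde L(\overline 0)$ invertible; I recurse on $\tilde L$, and absorbing $U_1^{-1}$ into one atom of the recursively obtained product keeps that atom linear thanks to its off‑diagonal support. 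Finally, if $L$ is a full right monic non‑unit of dimension $d\ge2$, I run Ronyai's algorithm on $A_1,\dots,A_n$: by the standard correspondence between a Cohn decomposition $SLS'=\left(\begin{smallmatrix}A&0\\D&B\end{smallmatrix}\right)$ and a nontrivial common invariant subspace of the coefficient matrices (and the contrapositive of the ``only if'' direction of Theorem~\ref{cohnthm}), if Ronyai reports \emph{no} subspace then $L$ is an atom and I output $[L]$. If it returns a subspace $V$ I pick a scalar invertible $S$ with $SV=\span(e_{r+1},\dots,e_d)$, so that $SLS^{-1}=\left(\begin{smallmatrix}A&0\\D&B\end{smallmatrix}\right)$ with $A$ of size $r=d-\dim V$ and $B$ of size $s=\dim V$; here $A$ is again full right monic with $A(\overline 0)=I_r$ (the right‑inverse relation $\sum_iA_iC_i=I_d$ descends to the $(1,1)$ block, and fullness of $L$ forces fullness of $A$ and $B$), while $B$ is full with $B(\overline 0)=I_s$ but in general neither monic nor a non‑unit, so I recurse on both with the general procedure. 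I reassemble via $\left(\begin{smallmatrix}A&0\\D&B\end{smallmatrix}\right)=(A\oplus I_s)\left(\begin{smallmatrix}I_r&0\\D&I_s\end{smallmatrix}\right)(I_r\oplus B)$, the block‑diagonal identities $A\oplus I_s=\prod_i(P_i\oplus I_s)$ and $I_r\oplus B=\prod_j(I_r\oplus Q_j)$, and absorption of $S^{\pm1}$ and the triangular unit into adjacent atoms (which again leaves all factors linear by the off‑diagonal structure). Every recursive call strictly decreases the dimension, so the recursion tree has $O(d)$ nodes, each doing polynomial linear algebra plus at most one Ronyai call; a union bound over the $O(d)$ randomized calls gives correctness with high probability and running time $\poly(n,d,\log q)$.

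The step I expect to be the main obstacle is making this recursion robust, not the reduction idea itself. Three points need care. (i) A common invariant subspace of the $A_i$ need not yield a genuine factorization of $L$: the lower‑right block $B$ can be a unit, the standard example being $\left(\begin{smallmatrix}1+x_1&0\\x_2&1\end{smallmatrix}\right)$, which is an atom even though $\span(e_2)$ is invariant for its coefficient matrices — so one must carry only the valid one‑directional implication (no subspace $\Rightarrow$ atom) and let the recursion on the blocks decide whether the decomposition was trivial. (ii) The lower‑right block inherits neither right‑ nor left‑monicity for free, which is why the general procedure must also handle arbitrary full linear matrices with invertible constant term, via the one‑step version of Theorem~\ref{full-monic}. (iii) All the auxiliary units — scalar shifts, column swaps, elementary linear row/column operations, and the triangular $\left(\begin{smallmatrix}I_r&0\\D&I_s\end{smallmatrix}\right)$ — must be threaded through so that every factor in the final output is genuinely a linear matrix; this works precisely because each such unit is either scalar or an elementary degree‑$\le1$ unit with off‑diagonal support, and absorbing at most one of them per atom never raises the degree. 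Checking that all of this fits together, together with the $\poly(n,d,\log q)$ bound and the correctness of Ronyai's routine (Theorem~\ref{thm-ronyai}) and of Theorems~\ref{cohnthm},~\ref{full-monic},~\ref{monic-nonunit}, is the bulk of the work.
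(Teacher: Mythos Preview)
Your approach is correct and is essentially the paper's: normalize to constant term $I_d$, reduce atomicity to common invariant subspaces of the coefficient matrices via Theorem~\ref{cohnthm} (using $L'(\overline{0})=I_d$ to turn the two-sided decomposition $S_1L'S_2$ into a conjugation $S_1L'S_1^{-1}$), iterate Ronyai, and peel off linear atoms from the resulting block-triangular form using $\left(\begin{smallmatrix}A&0\\D&B\end{smallmatrix}\right)=(A\oplus I)\left(\begin{smallmatrix}I&0\\D&I\end{smallmatrix}\right)(I\oplus B)$.

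Two remarks on where you diverge. Your case (b) is unnecessary: a non-unit block $I+\sum_i A_ix_i$ that fails right monicity already has a nontrivial common invariant subspace (if $vA_i=0$ for all $i$ with $v\ne0$, the image $W=\sum_{k\ge1}\sum_{i_1,\dots,i_k}\mathrm{im}(A_{i_1}\cdots A_{i_k})$ of the non-unital algebra generated by the $A_i$ is $A_i$-invariant, nonzero since $L\ne I$, and contained in $\ker v$), so Ronyai finds it and the paper simply stays in your case (c) throughout, never invoking Theorem~\ref{full-monic} inside the recursion. On the other hand, your point (i) is a genuine subtlety the paper glosses over: the paper asserts the full biconditional ``$L'$ not an atom $\Leftrightarrow$ the $A_0^{-1}A_i$ have a nontrivial common invariant subspace,'' but your example $\left(\begin{smallmatrix}1+x_1&0\\x_2&1\end{smallmatrix}\right)$ refutes the $\Leftarrow$ direction; the algorithm is unaffected because the only way a terminal block can be a unit is as a $1\times1$ identity, which is absorbed into an adjacent factor exactly as you describe.
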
  

\begin{proof}
  We present the algorithm only for the first part, as the second part
  has essentially the same solution.
  
  Let $L=A_0+\sum_{i=1}^n A_i x_i$ in $\fR^{d\times d}$ be such an
  instance of $\prob{LIN{-}FACT}(\F_q)$. We can write $L=A_0\cdot
  L'$ where $L'$ is the full and right monic linear matrix
  \[
  L'= I_d + \sum_{i=1}^n A_0^{-1}A_i x_i.
  \]
 Clearly, it suffices to factorize the linear matrix $L'$ into
 atoms. 
 
First we show that $L'$ is an not atom iff matrices $A_0^{-1}A_i$, $1\leq i \leq n$ have a nontrivial common invariant subspace.  
 By Theorem~\ref{cohnthm}, $L'$ is not an atom if and only if
 we can write $S_1L'S_2 = \left(
\begin{array}{cc}
B & 0 \\
D & C
\end{array}
\right)$ for invertible scalar matrices $S_1$ and $S_2$, where $B$ and
$C$ are full and right monic linear matrices, and $D$ is some linear
matrix. Equating the constant terms on both sides of 
the above equation we have $S_1S_2= \left(
\begin{array}{cc}
  B_0 & 0 \\
  D_0 & C_0
\end{array}
\right)$ as the constant term of $L'$ is $I_d$.  Thus the matrices $S_1S_2$ and its inverse also has the
same block form which implies that
$S_1L'S_1^{-1}=S_1L'S_2(S_1S_2)^{-1}$ also has the same block form. It follows that the $n$ matrices $A_0^{-1}A_i, 1\le i\le n$ have
a nontrivial common invariant subspace. Conversely, if the matrices
$A_0^{-1}A_i, 1\le i\le n$ have a nontrivial common invariant subspace
then we have a basic change scalar matrix $S$ such that $SL'S^{-1}$ has
the block form $\left(
\begin{array}{cc}
  L_1 & 0 \\
  * & L_2
\end{array}
\right)$,
where $L_1$ and $L_2$ are full and right monic linear matrices. So by Theorem~\ref{cohnthm} $L'$ is not an atom. 
 So we have established, $L'$  (and hence $L$) is not an atom iff matrices $A_0^{-1}A_i$, $1\leq i \leq n$ have a nontrivial common invariant subspace.
We will use Ronyai's randomized polynomial-time algorithm for finding
a nontrivial common invariant subspace for matrices $A_0^{-1}A_i, 1\le
i\le n$ over finite field $\F_q$. 

If there is no nontrivial invariant
subspace then the linear matrix $L'$ (and hence $L$) is an
atom. 
Otherwise, by repeated application of Ronyai's algorithm we will
obtain a basis change scalar matrix $T$ which when applied to $L'$
yields a linear matrix in the following \emph{atomic block diagonal
  form}:
\begin{equation}\label{eq3}
TL' T^{-1} = 
\left(
\begin{array}{ccccc}
L_1 & 0 &0 &\ldots & 0\\
* & L_2 &0 &\ldots & 0\\
* & * &L_3 &\ldots & 0\\
& & &\ddots &\\
* & * &* &\ldots & L_r\\
\end{array}
\right),
\end{equation}
where for each $j\in[r]$, the full right monic linear matrix $L_j \in \fR^{d_j \times
  d_j}$ is an atom, and each $*$ stands for some unspecified linear
matrix. It is now easy to factorize $TL'T^{-1}$ as a product of atoms
by noting one step of the factorization of $TL'T^{-1}$ from its form:
\[
TL'T^{-1}=  \left(
\begin{array}{cc}
A & 0 \\
D & L_r
\end{array}
\right) =   \left(
\begin{array}{cc}
A & 0 \\
0 & I
\end{array}
\right)\cdot 
  \left(
\begin{array}{cc}
I & 0 \\
D & I
\end{array}
\right)\cdot
\left(
\begin{array}{cc}
I & 0 \\
0 & L_r
\end{array}
\right).
\]
We note that  $\left(
\begin{array}{cc}
I & 0 \\
D & I
\end{array}
\right)$ is a unit. Since $L_r$ is an atom the product $\left(
\begin{array}{cc}
I & 0 \\
D & I
\end{array}
\right)\cdot \left(
\begin{array}{cc}
I & 0 \\
0 & L_r
\end{array}
\right)$ is also an atom and a linear matrix, and it is the rightmost
factor of $TL'T^{-1}$. Continuing thus with $A$ now, we can factorize
$TL'T^{-1}$ as a product $F'_1F'_2\cdots F'_r$ of $r$ atoms, each of
which is a linear matrix. It follows that $L=A_0T^{-1}F'_1F'_2\cdots
F'_rT$ is a complete factorization of $L$ as a product of atomic
linear matrices (both $A_0$ and $T$ are scalar invertible matrices).
\end{proof}

\begin{remark}
We note that Ronyai's algorithm \cite{Ronyai2} for $\prob{INV}(\F_q)$
is actually a deterministic polynomial-time \emph{reduction} from
$\prob{INV}(\F_q)$ to univariate polynomial factorization over $\F_q$.
\end{remark}


Based on whether we want to work with right monic or left
monic case we will express $f \oplus I_s$ in an appropriate form using
Higman linearization and Theorem~\ref{full-monic} as described in the
equation below:
\begin{equation}\label{eq2}
f\oplus I_s = 
\begin{cases}
~~PU(L'\oplus I_t) SQ,~ \text{in the right monic case}\\ 
~~PS(L'\oplus I_{t}) UQ,~ \text{in the left monic case} 
\end{cases}
\end{equation}

where $d+t=s+1$, $L'\in\fR^{d\times d}$ is a full and right (or left) monic linear
matrix, $P$ is upper triangular with all $1$'s diagonal, $Q$ is lower
triangular with all $1$'s diagonal, $U\in\fR^{(d+t)\times (d+t)}$ is a
unit, and $S\in\F^{(d+t)\times (d+t)}$ is an invertible scalar
matrix.

\subsection*{Algorithm for $\prob{FACT}(\F_q)$}

We are now ready to describe the polynomial factorization algorithm
for commutatively nonzero polynomials in $\fR$. Starting with the
Higman linearization of the input polynomial $f\in\fR$ as in
Equation~\ref{eq2}, by an application of the first parts of Theorems
\ref{full-monic} and \ref{lfact1} we obtain the factorization
$f\oplus I_s = PUF'_1F'_2\cdots F'_r SQ$ using the structure in
Equation~\ref{eq3}. 

Alternatively, by applying the second part of Theorem \ref{full-monic}
we can compute a left monic linear matrix $L'$ that is a stable
associate of $f$ and, applying the second part of Theorem~\ref{lfact1}
we can compute the factorization
\begin{equation}\label{eq4}
f\oplus I_s = PS'F'_1F'_2\cdots F'_r U'Q.
\end{equation}
where each linear matrix $F'_i$ is an atom, $P$ is upper triangular
with all $1$'s diagonal, $Q$ is lower triangular with all $1$'s
diagonal, $U'$ is a unit and $S'$ is a scalar invertible matrix.
Equation~\ref{eq4} is the form we will use for the algorithm (we could
equally well use the other factorization).

{From} the structure of the atomic block diagonal matrix $TL'T^{-1}$
in Equation~\ref{eq3} notice that the product $S'F'_1F'_2\cdots F'_i $
is a linear matrix for each $1\le i<r$.

The next lemma presents an algorithm that is crucial for extracting
the factors of $f$.

\begin{lemma}\label{triv-lem}
  Let $C\in\fR^{u\times d}$ be a linear matrix and $v\in\fR^{d\times
    1}$ be a column of polynomials such that $Cv=0$. Each entry $v_i$ of $v$ is given by an
  algebraic branching program as input. Then, in polynomial time we
  can compute a invertible matrix $N\in\fR^{d\times d}$ such that
  \begin{itemize}
  \item For $1\le i\le d$ either the $i^{th}$ column of $CN$ is all zeros
    or the $i^{th}$ row of $N^{-1}v$ is zero.
  \item Each entry of $N$ is a polynomial of degree at most $d^2$ and
    is computed by a polynomial size ABP, and also each entry of
    $N^{-1}$ is computed by a polynomial size ABP.
  \end{itemize}
\end{lemma}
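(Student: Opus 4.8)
The plan is to realize the desired $N$ as a \emph{column-reducing unit} for the linear matrix $C$ alone; the condition on $N^{-1}v$ then comes for free. First I would observe that since $\fR=\F\angle{X}$ is a domain, the set $\mathcal S=\{w\in\fR^{d\times 1}\mid Cw=0\}$ is a free right $\fR$-submodule of $\fR^{d\times 1}$, say of rank $d-\rho$ where $\rho$ is the column rank of $C$; moreover $v\in\mathcal S$ because $Cv=0$. Lemma~\ref{triv-lemma} already guarantees that \emph{some} unit trivializes the relation $Cv=0$, but to get the efficiency and the degree/ABP bounds the goal I would aim for is sharper: compute a unit $P\in\fR^{d\times d}$ with $CP=(C'\mid 0)$, where $C'\in\fR^{u\times\rho}$ has full column rank and the last $d-\rho$ columns of $P$ form a basis of $\mathcal S$.

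Given such a $P$, taking $N=P$ finishes the lemma. Indeed, for $j>\rho$ the $j$-th column of $CN=CP$ is $0$; and since $v\in\mathcal S$ is an $\fR$-combination of the last $d-\rho$ columns of $P$, the vector $N^{-1}v=P^{-1}v$ is supported on its last $d-\rho$ coordinates, so for $j\le\rho$ the $j$-th row of $N^{-1}v$ is $0$. Hence for every $j$ either the $j$-th column of $CN$ vanishes or the $j$-th row of $N^{-1}v$ vanishes. Note that this shows $N$ may be taken to depend only on $C$; the ABP presentation of $v$ is used below only to carry out the construction efficiently, $v$ itself being a (small-ABP) syzygy that seeds the reduction.

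To compute $P$ I would run a column reduction on $C$ mirroring the algorithm in the proof of Theorem~\ref{full-monic}, but applied to columns rather than rows and without the fullness hypothesis. If the columns of $C$ are already right $\fR$-independent we are done. Otherwise $C$ has a nontrivial right syzygy, and because $C$ is \emph{linear} it has one of degree $O(d)$; such a syzygy can be produced either by the dimension-reduction argument underlying Theorem~\ref{full-monic} or directly from the given $v$. Use it to drive one column of $C$ to zero by a column operation implemented by a unit of degree $O(d)$, re-linearize the resulting matrix by a Higman-linearization step when needed (Theorem~\ref{ehigman}, which also supplies polynomial-size ABPs for the transforms and their inverses), and recurse on the remaining $(d-1)$-wide linear matrix. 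The recursion has depth at most $d$ and each round raises degrees by $O(d)$, so the accumulated $P$ and $P^{-1}$ have entries of degree at most $d^2$; since each elementary transform and its inverse has a polynomial-size ABP, so do $P$ and $P^{-1}$.

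The main obstacle is precisely this efficient, degree-controlled column reduction of a rank-deficient, possibly non-square \emph{linear} matrix: one must prove that a right syzygy of degree $O(d)$ always exists and can be found in polynomial time, since a naive search over all low-degree coefficient vectors has exponential size. Here the linearity of $C$ (and, for the algorithmic step, the ABP presentation of the seed syzygy $v$) must be exploited, just as in Cohn's treatment of linear factorizations and in the proof of Theorem~\ref{full-monic}. Once that is in place the degree bound $d^2$ and the ABP-size bounds fall out of the termination analysis, and the trivialization property is immediate from the first two paragraphs.
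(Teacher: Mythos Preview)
Your high-level plan --- compute a unit $P$ with $CP=(C'\mid 0)$ where $C'$ has full column rank, then observe $P^{-1}v$ is supported on the last $d-\rho$ coordinates --- is module-theoretically sound, but the proof has a genuine gap exactly where you flag it: you never actually produce the low-degree syzygies needed to carry out the column reduction. The only syzygy you have in hand is $v$ itself, which may have large degree; using it directly to zero a column would multiply the degree of the remaining columns of $C$ by $\deg v$, destroying linearity. Your suggested fix, ``re-linearize by a Higman step,'' does not help: Higman linearization returns a \emph{stably associated} linear matrix of strictly larger dimension, not a column-equivalent one, so the recursion would blow up rather than shrink. And the appeal to Theorem~\ref{full-monic} is a non-sequitur --- that theorem detects when $[A_1\ \cdots\ A_n]$ has deficient \emph{scalar} row rank, which is unrelated to $C$ having a nontrivial right $\fR$-syzygy. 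After the first column is zeroed you would need a fresh syzygy for the remaining $(d{-}1)$-wide matrix, and you give no mechanism for producing one.

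The paper's proof avoids all of this by a different, more elementary idea that uses $v$ in a much sharper way. Write $C=C_0+C_1$ with $C_0$ scalar and $C_1$ homogeneous of degree~$1$, and let $m$ be a highest-degree monomial occurring in $v$ with (scalar) coefficient vector $v(m)\in\F^d$; this can be read off in polynomial time from the ABPs via Raz--Shpilka. From $Cv=0$ at top degree one gets $C_1\,v(m)=0$. Now take a \emph{scalar} invertible $T_0$ whose first column is $v(m)$: the first column of $CT_0$ is then $C_0 v(m)$, which is scalar. If it is zero, that column of $C$ is already killed; otherwise a nonzero scalar pivot in some row~$i$ lets you clear row~$i$ in all other columns by column operations of the form $\mathrm{col}_j\leftarrow \mathrm{col}_j+\mathrm{col}_1\cdot \ell_j$ with $\ell_j$ affine, which keeps the matrix linear. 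Since row~$i$ of $CT_0T_1$ now has a single nonzero (scalar) entry in column~$1$, the relation forces the first entry of $T_1^{-1}T_0^{-1}v$ to vanish. Drop that first column and first entry and recurse on a genuinely smaller linear matrix. The point is that each step either zeroes a column of $C$ or an entry of the transformed $v$; the algorithm never attempts a full column reduction of $C$, and $N$ will in general depend on $v$, contrary to your claim. The degree and ABP bounds fall out immediately because each round's $T_0T_1$ is a product of at most $d$ elementary matrices with affine entries.
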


\begin{proof}
 We will describe the algorithm as a recursive procedure Trivialize
 that takes matrix $C$ and column vector $v$ as parameters and returns
 a matrix $N$ as claimed in the statement.

  \begin{enumerate}
  \item[] Procedure Trivialize$(C\in\fR^{u\times d},v\in\fR^{d\times 1})$
  \item If $d=1$ then (since $Cv=0$ iff either $C=0$ or $v=0$)
    \textbf{return} the identity matrix.
  \item  If $d>1$ then
  \item write $C=C_0+C_1$, where $C_0$ is a scalar matrix and $C_1$ is the
    degree $1$ homogeneous part of $C$. Let $k$ be the degree of the
    highest degree nonzero monomials in the polynomial vector $v$, and
    let $m$ be a nonzero degree $k$ monomial. Let
    $v(m)\in\F_q^{d\times 1}$ denote its (nonzero) coefficient in $v$.
    Then $Cv=0$ imples $C_1 v(m)=0$. Let $T_0\in\F_q^{d\times d}$ be a
    scalar invertible matrix with first column $v(m)$ obtained by
    completing the basis.
    \begin{enumerate}
    \item If $C_0v(m)=0$ then the first column of $CT_0$ is zero.
    \item Otherwise, $CT_0$ has first column as the nonzero scalar
      vector $Cv(m)=C_0v(m)$. Suppose $i^{th}$ entry of $Cv(m)$ is a
      nonzero scalar $\alpha$. With column operations we can drive the
      $i^{th}$ entry in all other columns of $CT_0$ to zero. Let the
      resulting matrix be $CT_0T_1$ (where the matrix $T_1$ is
      invertible as it is a product of elementary matrices
      corresponding to these column operations, each of which is of
      the form $\col_i\leftarrow (\col_i +
      \col_1\cdot \alpha_0+\sum_i \alpha_i x_i)$). Notice that $CT_0T_1$ is still
      linear.
    \item As $Cv=(CT_0T_1)(T_1^{-1}T_0^{-1}v)$, and in the $i^{th}$
      row of $CT_0T_1$ the only nonzero entry is $\alpha$ which is in
      its first column, we have that the first entry of
      $T_1^{-1}T_0^{-1}v$ is zero.
    \end{enumerate}
  \item Let $C'\in\fR^{u\times (d-1)}$ obtained by dropping the first
    column of $CT_0T_1$. Let $v'\in\fR^{(d-1)\times 1}$ be obtained by
    dropping the first entry of $T_1^{-1}T_0^{-1}v$. Note that $C'$
    is still linear.
  \item Recursively call Trivialize$(C'\in\fR^{u\times
    (d-1)},v'\in\fR^{(d-1)\times 1})$.  and let the matrix returned by
    the call be $T_2\in\fR^{(d-1)\times (d-1)}$.
  \item Putting it together, return the matrix $T_0T_1(I_1\oplus T_2)$.
  \end{enumerate}

  To complete the proof, we note that a highest degree monomial $m$
  such that $v(m)\ne 0$ is easy to compute in deterministic polynomial
  time if each $v_i$ is given by an algebraic branching program using
  the PIT algorithm of Raz and Shpilka \cite{RS05}. Notice that for
  the recursive call we need $C'$ to be also a linear matrix and each
  entry of $v'$ to have a small ABP. $C'$ is linear because $CT_0T_1$
  is a linear matrix since $CT_0$ is linear, its first column is
  scalar, and each column operation performed by $T_1$ is scaling the
  first column of $CT_0$ by a linear form and subtracting from another
  column of $CT_0$. Each entry of $v'$ has a small ABP because
  $T_0^{-1}$ is scalar and it is easy to see that the entries of
  $T_1^{-1}$ have ABPs of polynomial size. Finally, we note that $T_1$
  is a product of at most $d-1$ linear matrices (each corresponding to
  a column operation), and $N$ is an iterated product of $d$ such
  matrices. Hence, each entry of $N$ as well as $N^{-1}$ is a
  polynomial of degree at most $d^2$ and is computable by a small ABP.

\end{proof}


Turning back to our algorithm for $\prob{FACT}(\F_q)$, in the next
lemma we design an efficient algorithm that will allow us to extract
all the irreducible factors of $f$ (given Equation~\ref{eq4}).

\begin{lemma}[Factor Extraction]\label{extract}
  Let $f\in\fR$ be a polynomial and $G\in\fR^{(d-1)\times (d-1)}$ be a
  unit such that
  \begin{equation}\label{eq5}
  \left(
\begin{array}{cc}
f & u \\
0 & G
\end{array}
\right)
= P CD,
\end{equation}
such that
\begin{itemize}
\item $C$ is a full linear matrix that is a non-unit, $P$ is upper
  triangular with all $1$'s diagonal, and $D\in\fR^{d\times d}$ is a
  full non-unit matrix which is also an atom.
\item The polynomial $f$, and the entries of $u, G, P, D$ are all
  given as input by algebraic branching programs.
\end{itemize}
Then we can compute in deterministic polynomial time a nontrivial
factorization $f=g\cdot h$ of the polynomial $f$ such that $h$ is an
irreducible polynomial.
\end{lemma}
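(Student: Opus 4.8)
The plan is to peel off one irreducible factor of $f$ ``from the right'' by applying the trivialization algorithm of Lemma~\ref{triv-lem} to the vanishing bottom-left block of Equation~\ref{eq5}. (The case $d=1$ is immediate: then $C$ is a non-constant affine form and $D$ an irreducible polynomial, so assume $d\ge 2$.) First I would absorb the triangular unit $P$: since $P^{-1}$ is again upper triangular with all $1$'s on the diagonal, multiplying Equation~\ref{eq5} on the left by $P^{-1}$ leaves the $(1,1)$-entry equal to $f$ and the $(2,1)$-block equal to $0$, so $CD=\left(\begin{array}{cc} f & v \\ 0 & H\end{array}\right)$ for some polynomial row $v$ and some unit $H\in\fR^{(d-1)\times(d-1)}$. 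Writing $C=\left(\begin{array}{cc} c_1 & c_3 \\ c_2 & c_4\end{array}\right)$ and $D=\left(\begin{array}{cc} d_1 & d_3 \\ d_2 & d_4\end{array}\right)$ with $c_1,d_1$ the $1\times 1$ corner blocks, the $(2,1)$-block of $CD$ gives $\big(c_2~~c_4\big)\left(\begin{array}{c} d_1 \\ d_2\end{array}\right)=0$, where $\big(c_2~~c_4\big)\in\fR^{(d-1)\times d}$ is linear and $\left(\begin{array}{c} d_1 \\ d_2\end{array}\right)\in\fR^{d\times 1}$ is the first column of $D$, whose entries have small ABPs. So I would invoke Lemma~\ref{triv-lem} to obtain an invertible $N\in\fR^{d\times d}$, with $N$ and $N^{-1}$ of polynomial ABP size, trivializing this product. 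As $D$ is full its first column is nonzero, hence so is $N^{-1}\left(\begin{array}{c} d_1 \\ d_2\end{array}\right)$; choosing an index $i_0$ with nonzero entry there, the trivialization property forces the $i_0$-th column of $\big(c_2~~c_4\big)N$ to vanish. Replacing $N$ by $N':=N\Pi$, where $\Pi$ swaps columns $1$ and $i_0$, the matrix $\big(c_2~~c_4\big)N'$ then has zero first column and $N'^{-1}\left(\begin{array}{c} d_1 \\ d_2\end{array}\right)$ has nonzero first entry.

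Next I would set $C':=CN'$ and $D':=N'^{-1}D$, so that $C'D'=CD=\left(\begin{array}{cc} f & v \\ 0 & H\end{array}\right)$, with $C'=\left(\begin{array}{cc} c'_{11} & c'_3 \\ 0 & c'_4\end{array}\right)$ (its bottom $d-1$ rows are $\big(c_2~~c_4\big)N'$, whose first column is zero) and $D'=\left(\begin{array}{cc} d'_{11} & d'_3 \\ d'_2 & d'_4\end{array}\right)$ with $d'_{11}\ne 0$. Reading off the blocks of $C'D'$: the $(2,2)$-block gives $c'_4 d'_4=H$, so $c'_4(d'_4H^{-1})=I_{d-1}$ and hence $c'_4$ is a unit (as $\fR$, and therefore $\fR^{(d-1)\times(d-1)}$, is weakly finite), whence $d'_4=(c'_4)^{-1}H$ is a unit as well; the $(2,1)$-block gives $c'_4 d'_2=0$, so $d'_2=0$; and the $(1,1)$-block gives $f=c'_{11}d'_{11}+c'_3 d'_2=c'_{11}d'_{11}$.

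Finally I would check that $h:=d'_{11}$ is irreducible and $g:=c'_{11}$ is non-constant. Since $N'$ is a unit and $D$ is an atom, $D'=N'^{-1}D$ is an atom; using $d'_2=0$ we can write $D'=\left(\begin{array}{cc} 1 & d'_3 \\ 0 & d'_4\end{array}\right)\big(d'_{11}\oplus I_{d-1}\big)$, and since $d'_4$ is a unit the first factor is a unit, so $d'_{11}\oplus I_{d-1}$ is an atom. Hence $d'_{11}$ is non-scalar, and it is irreducible, because a nontrivial factorization $d'_{11}=ab$ would give $d'_{11}\oplus I_{d-1}=(a\oplus I_{d-1})(b\oplus I_{d-1})$, a product of two full non-units (by Theorem~\ref{associates-preseve-atoms} this is the same as $d'_{11}$ being irreducible). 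Symmetrically, $C'=CN'$ is a non-unit; writing $C'=\left(\begin{array}{cc} 1 & c'_3(c'_4)^{-1} \\ 0 & I_{d-1}\end{array}\right)\big(c'_{11}\oplus c'_4\big)$ with a unit first factor, and using that $c'_4$ is a unit, $c'_{11}$ must be a non-constant polynomial (it is nonzero since $c'_{11}d'_{11}=f\ne 0$). Thus $f=g\cdot h$ is a nontrivial factorization with $h$ irreducible. For the complexity claim, the trivialization of Lemma~\ref{triv-lem}, the Raz--Shpilka identity tests \cite{RS05} used to locate the vanishing column of $\big(c_2~~c_4\big)N$ and a nonzero entry of $N^{-1}\left(\begin{array}{c} d_1 \\ d_2\end{array}\right)$, and the matrix products $CN'$ and $N'^{-1}D$ are all deterministic polynomial time and produce polynomial-size ABPs; $g$ and $h$ are single entries of $C'$ and $D'$ output as such ABPs.

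The step I expect to be the crux is ensuring that the extracted $h$ is genuinely \emph{irreducible} rather than just one factor in some nontrivial split of $f$: this is what forces the bookkeeping with units inside the block-triangular form of $D'$, and it relies essentially on $D$ being an atom and on $G$ (hence $H$) being a unit, which is exactly what pins $c'_4$ and $d'_4$ down as units and makes the argument collapse cleanly.
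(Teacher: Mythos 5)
Your proposal is correct and follows essentially the same route as the paper: reduce the $(2,1)$-block to a relation $(c_2~c_4)\binom{d_1}{d_2}=0$, trivialize it via Lemma~\ref{triv-lem}, use fullness of $D$ and a column permutation to isolate a nonzero first entry, read off units in the $(2,2)$-block to kill $d_2'$ and conclude $f=c_{11}'d_{11}'$, then use $D$ being an atom to certify irreducibility of $d_{11}'$. The only cosmetic difference is that you absorb $P^{-1}$ at the outset whereas the paper reintroduces it at the end, and your explicit unit factorizations of $D'$ and $C'$ make the unit-and-atom bookkeeping slightly more self-contained than the paper's appeal to row operations.
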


\begin{proof}
Let
\[C=\left(
\begin{array}{cc}
c_1 & c_3 \\
c_2 & c_4
\end{array}
\right) \text{ and }
D = 
  \left(
\begin{array}{cc}
d_1 & d_3\\
d_2 & d_4
\end{array}
\right),
\]
written as $2\times 2$ block matrices where $c_1$ and $d_1$ are
$1\times 1$ blocks. By dropping the first row of the matrix in the
left hand side of Equation~\ref{eq5} and the first row of $P$ we
get
\[
  (0~G) = (0~P') C D,
\]
where $P'$ is also an upper triangular matrix with all $1$'s diagonal.
Equating the first columns on both sides we have
\begin{eqnarray*}
  0 & =&  (0~P')\left(
\begin{array}{cc}
c_1 & c_3 \\
c_2 & c_4
\end{array}
\right)\left(\begin{array}{c} d_1\\ d_2 \end{array} \right),
  \text{ which implies that}\\
  0 & = & P'(c_2~c_4)\left(\begin{array}{c} d_1\\ d_2 \end{array} \right),
  \text{ and hence}\\
  0 & = & (c_2~c_4)\left(\begin{array}{c} d_1\\ d_2 \end{array} \right),
  \text{ since $P'$ is invertible.}
\end{eqnarray*}


Since $(c_2~c_4)\in\fR^{(d-1)\times d}$ is a matrix with linear
entries and $\left(\begin{array}{c} d_1\\ d_2 \end{array} \right)\in\fR^{d\times 1}$ is a column vector of
polynomials which are given by ABPs as input, we can apply the
algorithm of Lemma~\ref{triv-lem} to compute a unit
$N$ such that its entries are all given by
ABPs such that for $1\le i\le d$, either the $i^{th}$ column of
$(c'_2~c'_4)=(c_2~c_4)N$ is zero or the $i^{th}$ row of
$\left(\begin{array}{c} d_1'\\ d_2' \end{array}
 \right)=N^{-1}\left(\begin{array}{c} d_1\\ d_2 \end{array}
 \right)$ is zero.


Now the following argument is almost identical with the argument towards the end of the proof of the Theorem \ref{associates-preseve-atoms}. We give it below for completeness. 
Since $D$ is a full matrix, the matrix $N^{-1}D$ is also full which
implies its first column $\left(\begin{array}{c} d_1'\\ d_2' \end{array} \right)$ cannot be all zeros. So there is at least one nonzero entry in $\left(\begin{array}{c} d_1'\\ d_2' \end{array} \right)$ and the corresponding column in $(c'_2~c'_4)$ is all zero. This implies there exist a permutation matrix $\Pi$ such that the first column of $C(c'_2~c'_4)\Pi$ is all zero and first entry of $\Pi^{-1} \left(\begin{array}{c} d_1'\\ d_2' \end{array} \right)$ is non zero.


Consider the matrices $C''=CN\Pi=\left(
\begin{array}{cc}
c''_1 & c''_3 \\
c_2'' & c''_4
\end{array}
\right)
$ and
$D''=\Pi^{-1}N^{-1}D=\left(
\begin{array}{cc}
d''_1 & d''_3 \\
d_2'' & d''_4
\end{array}
\right)
$. We have

\[  \left(
\begin{array}{cc}
f & * \\
0 & G'
\end{array}
\right)
=P^{-1}
\left(
\begin{array}{cc}
f & u \\
0 & G
\end{array}
\right)
=
\left(
\begin{array}{cc}
c''_1 & c''_3 \\
c_2'' & c''_4
\end{array}
\right)
  \left(
\begin{array}{cc}
d''_1 & d''_3\\
d_2'' & d''_4
\end{array}
\right)
\]

, where $G'=(P')^{-1}G$ is a unit, $c_2''$ is all zero column matrix and $d_1''$ is non-zero. Now observing $(2,1)^{th}$ matrix block in the above equation, we get $d_2''$ is all zero column. Hence, by looking at $(2,2)^{th}$ block in the above equation, we can see that $c_4''$ and $d_4''$ are units as $G'$ is a unit. Clearly, we have $f= c_1'' \cdot d_1 ''$. Now,
since $C$ and $D$ are non-units (by assumption), the matrices $C''$
and $D''$ are also non-units. Therefore, $c''_1$ is not a scalar for
otherwise $C''$ would be a unit. Similarly, $d''_1$ is not a scalar.
It follows that $f=c''_1d''_1$ is a nontrivial factorization of $f$.

Furthermore, since $D$ is an atom by assumption and $D''$ is a stable
associate of $D$, $D''$ is an atom. As $D'' = \left(
\begin{array}{cc}
d''_1 & d''_3\\
0 & d''_4
\end{array}
\right)$ and $d_4''$ is invertible, we get $\left(
\begin{array}{cc}
1 & 0\\
0 & (d''_4)^{-1}
\end{array}
\right) \cdot D'' = \left(
\begin{array}{cc}
d''_1 & d''_3\\
0 & I_{s}
\end{array}
\right)$. Now applying suitable row operations to the matrix $(1 \oplus (d''_4)^{-1})D''$ we can drive $d_3''$ to zero. So we have $U(1 \oplus (d''_4)^{-1})D''=(d_1'' \oplus I_s)$ for a unit $U$. Hence $d_1''$ is an associate of $D''$ and therefore $d_1''$ is irreducible as $D''$ is an atom.


\end{proof}

Finally, we describe the factorization algorithm for commutatively
nonzero polynomials $f\in\fR$ over finite fields $\F_q$.

\begin{theorem}\label{cnzthm}  
  Let $\fR=\F_q\angle{X}$ and $f\in \fR$ be a commutatively nonzero
  polynomial given by an algebraic branching program of size $s$ as
  input instance of $\prob{FACT}(\F_q)$. Then there is a
  $\poly(s, \log q)$ time randomized algorithm that outputs a
  factorization $f=f_1f_2\cdots f_r$ such that each $f_i$ is
  irreducible and is output as an algebraic branching program.
\end{theorem}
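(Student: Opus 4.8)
The plan is to run the three-step pipeline sketched just before the statement --- linearize $f$, factor the resulting linear matrix into atoms, recover the polynomial factors --- and then iterate the third step until $f$ is fully factored. First, since $f$ is given by a size-$s$ formula we have $\deg f\le s$, so by the DeMillo--Lipton--Schwartz--Zippel test I would, in randomized polynomial time, find $\alpha_i\in\F_q$ (or in a small extension, handled as in Section~\ref{small-field}) with $f(\alpha_1,\ldots,\alpha_n)\ne 0$; after the affine substitution $x_i\leftarrow x_i+\alpha_i$ and rescaling I may assume $f(\overline{0})=1$, since irreducibility and the multiset of irreducible factors are preserved by an affine substitution and the factors of $f$ are recovered by shifting back.

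Next I would apply Higman linearization (Theorem~\ref{ehigman}), then Theorem~\ref{full-monic}(2), to reach the form in Equation~\ref{eq2}, and then apply Theorem~\ref{lfact1}(2) --- which reduces, via repeated calls to Ronyai's common-invariant-subspace algorithm (Theorem~\ref{thm-ronyai}), to factoring into atoms the full left monic linear matrix $L'$ stably associated to $f$ --- to obtain Equation~\ref{eq4}: \[ f\oplus I_s = PS'\,F_1'F_2'\cdots F_r'\,U'Q, \] where $P$ is unipotent upper triangular, $Q$ is unipotent lower triangular, $U'$ is a unit, $S'$ is an invertible scalar matrix (all of dimension $m=O(s)$, with polynomial-size ABPs for $P,Q,U'$ and their inverses), each $F_i'$ is a full linear atom, and --- by the atomic block form in Equation~\ref{eq3} --- each partial product $S'F_1'\cdots F_i'$ is a linear matrix for $1\le i<r$. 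Here Theorem~\ref{lfact1}(2) applies because $f(\overline{0})=1$ makes the constant term of $L$, hence of $L'$, invertible; and $r\le\dim L'\le O(s)$ since each atom has dimension at least $1$.

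For the third step, if $r=1$ then $f$ is stably associated to the atom $L'$, hence irreducible (Theorem~\ref{associates-preseve-atoms}), and I would output its ABP. If $r\ge 2$, I would set $C=S'F_1'\cdots F_{r-1}'$ (a full non-unit linear matrix, by the previous paragraph) and $D=F_r'U'Q$ (a full non-unit atom, being an atom times a unit, with polynomial-size ABP entries); then Equation~\ref{eq5} holds with this $P$, with $u=0$ and $G=I_s$, so Lemma~\ref{extract} yields a nontrivial factorization $f=g_1f_r$ with $f_r$ irreducible, both output as ABPs. The crucial point is that the proof of Lemma~\ref{extract} also produces, via Lemma~\ref{triv-lem}, a unit $N$ and a permutation $\Pi$ (with polynomial-size ABPs for $N^{\pm1}$) such that $CN\Pi$ is block upper triangular with $(1,1)$ entry $g_1$ and $(2,2)$ block a unit, all entries having polynomial-size ABPs; and since $CN\Pi=(S'F_1')\,F_2'\cdots F_{r-2}'\,(F_{r-1}'N\Pi)$, this matrix is exactly $I\cdot G_1G_2\cdots G_{r-1}$ with $G_1=S'F_1'$, $G_i=F_i'$ for $2\le i\le r-2$, and $G_{r-1}=F_{r-1}'N\Pi$ full non-unit atoms whose partial products $G_1\cdots G_i$ are linear for $i\le r-2$ --- another instance of the same shape, now with $r-1$ atoms, $P$ replaced by $I$, and the corner polynomial $f$ replaced by $g_1$. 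Recursing, I would peel off $f_{r-1},f_{r-2},\ldots$ in turn; after $r-1$ steps the remaining single-atom instance has a corner polynomial $f_1$ stably associated to an atom, hence irreducible, and $f=f_1f_2\cdots f_r$.

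The only randomized ingredients are the initial shift and Ronyai's algorithm (inside Theorem~\ref{lfact1}(2)); everything else --- Higman linearization, Theorems~\ref{full-monic} and \ref{lfact1}, Lemmas~\ref{triv-lem} and \ref{extract}, and the Raz--Shpilka PIT inside Lemma~\ref{triv-lem} --- is deterministic polynomial time, so $\poly(s,\log q)$ will follow once the iteration is under control. I expect the iterated extraction to be the main obstacle: besides checking that the leftover pieces reassemble into a same-shape instance (sketched above), one must verify that the $r-1=O(s)$ rounds do not blow up ABP sizes. This should hold because at every round the ``$C$'' is one of the original linear partial products (ABP size $\poly(s)$, degree $O(s)$), the ``$D$'' is one original linear atom times a freshly produced unit, and --- crucially --- the unit $N$ returned by Lemma~\ref{triv-lem} has ABP size bounded purely in terms of the matrix dimension $O(s)$ (it is an iterated product of polynomially many linear matrices), independently of the ABP sizes of its inputs; hence all intermediate polynomials retain $\poly(s)$-size ABPs of $\poly(s)$ degree, and the overall procedure runs in randomized time $\poly(s,\log q)$, outputting each $f_i$ as an ABP.
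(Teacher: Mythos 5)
Your proposal follows the paper's own proof essentially step for step: reduce to $f(\overline{0})=1$ by a random affine shift, Higman-linearize and apply Theorems~\ref{full-monic} and \ref{lfact1} to get the atomic factorization of Equation~\ref{eq4}, then iterate Lemma~\ref{extract} (via Lemma~\ref{triv-lem}) to peel off irreducible factors from the right, with the crucial observation for the running-time bound being exactly the paper's: the trivializing unit $N$ is a product of at most $d^2$ linear matrices and hence has ABP size depending only on the $O(s)$ matrix dimension, not on the ABP sizes of its inputs, so the iteration does not blow up. The only cosmetic difference is that you make explicit that the new ``$P$'' in the recursive instance is the identity and that the new ``$D$'' is atom-times-unit (hence an atom), which the paper leaves implicit; this is a fair reading and does not change the argument.
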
 

\begin{proof}
  Given $f$ as input, we apply Higman linearization followed by the
  algorithm for $\prob{LIN{-}FACT}(\F_q)$ described in
  Theorem~\ref{lfact1} to obtain the factorization of $f\oplus I_s=PSS_1F_1F_2 \ldots F_rS_2UQ$ where 
each linear matrix $F_i$ is an atom, $P$ is upper triangular
with all $1$'s diagonal, $Q$ is lower triangular with all $1$'s
diagonal, $U$ is a unit and $S$ is a scalar invertible matrix, as given in Equation~\ref{eq4}. 
   We can now apply Lemma~\ref{extract} to
  extract irreducible factors of $f$ (one by one from the right).

  For the first step, let $C=S S_1F_1F_2\cdots F_{r-1}$ and $D=F_rS_2UQ$
  in Lemma~\ref{extract}. The proof of Lemma~\ref{extract} yields the
  matrix $N_r=N\Pi$ such that both matrices $C''=PS S_1F_1F_2\cdots
  F_{r-1}N_r$ and $D''=N_r^{-1}F_rS_2UQ$ has the first column all zeros
  except the $(1,1)^{th}$ entries $c''_1$ and $d''_1$ which yields the
  nontrivial factorization $f=c''_1d''_1$, where $d''_1=f_r$ is 
  irreducible. Renaming $c''_1$ as $g_r$ we have from the structure of
  $C''$:
  \[
  \left( \begin{array}{cc} g_r & * \\ 0 & G_r \end{array} \right) =
  P(SS_1F_1F_2\cdots F_{r-2}) (F_{r-1}N_r).
  \]
Setting $C= SS_1F_1F_2\cdots F_{r-2}$ and $D=F_{r-1}N_r$ in
Lemma~\ref{extract} we can compute the matrix $N_{r-1}$ using which we
will obtain the next factorization $g_r=g_{r-1}f_{r-1}$, where
$f_{r-1}$ is irreducible because the linear matrix $F_{r-1}$ is an
atom. Lemma~\ref{extract} is applicable as all conditions are met by
the matrices in the above equation (note that $G_r$ will be a unit).

Continuing thus, at the $i^{th}$ stage we will have
$f=g_{r-i+1}f_{r-i+1}f_{r-i+2}\cdots f_r$ after obtaining the
rightmost $i$ irreducible factors by the above process. At this stage
we will have
  \[
  \left( \begin{array}{cc} g_{r-i+1} & * \\ 0 & G_{r-i+1} \end{array}
  \right) = P(SS_1F_1F_2\cdots F_{r-i-1}) (F_{r-i}N_{r-i+1}),
  \]
  where $G_{r-i+1}$ is a unit and all other conditions are met to
  apply Lemma~\ref{extract}.

  Thus, after $r$ stages we will obtain the complete factorization
  $f=f_1f_2\cdots f_r$. For the running time, it suffices to note that
  the matrix $N$ computed in Lemma~\ref{extract} is a product of
  degree at most $d^2$ many linear matrices (corresponding to the
  column operations). Thus, at the $i^{th}$ of the above iteration,
  the sizes of the ABPs for the entries of $N_{r-i+1}$ are independent
  of the stages. Hence, the overall running time is easily seen to be
  polynomial in $s$ and $\log q$.
\end{proof}

\begin{corollary}\label{sparse-cor1}
If $f\in\fR$ is commutatively nonzero polynomial given as input in
sparse representation (as an $\F_q$-linear combination of its
monomials) then in randomized polynomial time we can compute
a factorization into irreducible factors in sparse representation.
\end{corollary}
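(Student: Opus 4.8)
The plan is to reduce to Theorem~\ref{cnzthm} and then render each irreducible factor, which that theorem produces as an algebraic branching program, in sparse form.

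First I would note that a sparse $f$ with $m$ monomials, each of degree at most $d$, is computed by a noncommutative formula (in fact an ABP) of size $O(md)$, obtained by summing the $m$ monomials and realising each as a directed path of length at most $d$; this is built in time polynomial in the sparse input. Running the algorithm of Theorem~\ref{cnzthm} on this formula gives, in randomized polynomial time, a factorization $f = f_1 f_2 \cdots f_r$ into irreducible $f_i \in \fR$, each presented as an ABP of size $\poly(md,\log q)$.

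Next I would extract the monomials of each $f_i$ from its ABP one at a time, using a fixed length-lexicographic order on $X^*$. Given an ABP for a nonzero polynomial $p$, one can isolate its top nonzero homogeneous component (introduce a scaling variable and interpolate, or argue as in the proof of Lemma~\ref{triv-lem}), and within that component read off the length-lex largest monomial $w$ together with its coefficient $p(w)\in\F_q$, all in deterministic polynomial time via the Raz--Shpilka ABP identity test \cite{RS05} and a few ABP evaluations. Subtracting $p(w)\,w$ enlarges the ABP by $O(d)$; iterating this $|\mon(p)|$ times lists $p$ completely. Performing this for each $f_i$ outputs $f_1,\dots,f_r$ in sparse representation, in time polynomial in the ABP sizes above and in the total output size $\sum_i |\mon(f_i)|\cdot d$.

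The main obstacle is to argue that this output is of size polynomial in the input, i.e.\ that the irreducible factors are themselves sparse. I would first record the easy reductions: $\deg f_i \le \deg f$, and, since $\fR$ is a domain, for every variable $x_j$ the components of maximal $x_j$-degree multiply without cancellation, so $\deg_{x_j} f = \sum_i \deg_{x_j} f_i$ and in particular no $f_i$ involves a variable absent from $f$. The substantive claim is combinatorial: a one-sided factor in $\fR$ of a polynomial with $m$ monomials and degree $d$ has at most $\poly(m,d)$ monomials. The identity $1-x^{d+1}=(1-x)(1+x+\cdots+x^d)$ shows a dependence on $d$ is necessary; I would attempt the upper bound through a leading-monomial (long-division) analysis bounding the number of new monomials created in one division step, and then iterate over the at most $d$ factors. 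Should an explicitly polynomial bound prove elusive, the construction above still yields an output-sensitive polynomial-time algorithm, which is sufficient for the intended applications.
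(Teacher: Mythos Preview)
The reduction to Theorem~\ref{cnzthm} and the idea of reading monomials off the resulting ABPs are both fine. The genuine gap is precisely the point you yourself flag: you never establish that each irreducible factor $f_i$ has only polynomially many monomials. Your fallback to an ``output-sensitive'' guarantee does not prove the corollary as stated, which promises randomized time polynomial in the size of the sparse input alone. The leading-monomial/long-division heuristic you suggest is not carried out, and the example $1-x^{d+1}=(1-x)(1+\cdots+x^d)$ you cite already shows that any bound must use both $t$ and $d$; you give no such bound.

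The missing ingredient is a one-line combinatorial fact that the paper exploits: in $\FX$, if $f=gh$ then every monomial of $g$ is a prefix, and every monomial of $h$ a suffix, of some monomial of $f$. (For suffixes: let $m_g$ be the length-lex leading monomial of $g$, and for a given monomial $m$ of $h$ let $m^*$ be the length-lex largest monomial of $h$ with $m$ as a suffix; then the only decomposition $m_g m^*=ab$ with $g(a),h(b)\ne 0$ is $a=m_g,\ b=m^*$, so $m_g m^*\in\mon(f)$.) Iterating across $f=f_1\cdots f_r$, every monomial of every $f_i$ is a contiguous subword of some monomial of $f$, giving at most $td^2$ candidates when $f$ is $t$-sparse of degree $d$. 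With this candidate set in hand the paper's algorithm is simpler than your extract-and-subtract loop: for each candidate word $w$ and each $f_i$, compute the coefficient of $w$ in $f_i$ from its ABP (via Raz--Shpilka) in deterministic polynomial time, and output the nonzero ones. This simultaneously proves the sparsity bound and gives the algorithm.
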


\begin{proof}
  Let $f$ be given as input in sparse representation. Suppos $\deg f =
  d$ and it is $t$-sparse. Then there are at most $td^2$ many
  monomials that can occur as a substring of the monomials of $f$.  We
  can apply the randomized algorithm of Theorem~\ref{cnzthm} to obtain
  the factorization $f=f_1f_2\cdots f_r$, where each $f_i$ is given by
  an ABP. Now, for each of the $td^2$ many candidate monomials of
  $f_i$ we can find its coefficient in $f_i$ in polynomial time (using
  the Raz-Shpilka algorithm \cite{RS05}). Hence we can obtain the
  factorization $f=f'_1f'_2\cdots f'_r$, where each $f'_i$ is a
  $t$-sparse polynomial.
\end{proof}

\section{Factorization of Commutatively zero polynomials}\label{cz-sec}

In this section we will describe the general case of the factorization
algorithm when the input polynomial $f\in\fR$ is a commutatively zero
polynomial. That is, $f$ evaluates to zero on all scalar substitutions
from $\F_q$ or any (commutative) extension field.


The factorization algorithm will follow the three broad steps
described in Section \ref{cnz-sec} for the commutatively nonzero case:
first, using Higman linearization and Theorem \ref{full-monic},
transform the polynomial $f$ to a stably associated linear matrix $L$
that is full and left (or right) monic. Next, factorize the linear
matrix $L$ into atoms. Finally, recover the irreducible factors of $f$
from the atomic factors of the linear matrix $L$ using the factor
extraction procedure described in Lemma~\ref{extract}.

The step that requires a new algorithm is factorizing a full and right (or left)
monic linear matrix $L\in\fR$ into atoms when $f$ is commutatively
zero, which means there is no scalar substitution $x_i\leftarrow
\alpha_i, i\in[n]$ such that $L(\alpha_1,\alpha_2,\ldots,\alpha_n)$ is
invertible. Note that in this case we cannot apply the algorithm for
factorizing a linear matrix as discussed in the proof of Theorem
\ref{lfact1}).

\subsection{Factorization of full and monic linear matrices}

Let $f\in \fR$ be the input polynomial given by a size $s$ formula and
let $L \in\fR^{d\times d}$ be a full, right monic linear matrix stably
associated with $f$ obtained via Higman linearization and an
application of Theorem \ref{full-monic}.

Recall, by Equation~\ref{eq2} we have $f \oplus I_s= PU(L \oplus
I_t)SQ$ where, $P$, $Q$ are respectively upper triangular and lower
triangular units with diagonal entries $1$, $U$ is a unit and $S$ is
scalar invertible. 


Let $L=A_0+\sum_{i=1}^n A_n x_i\in\fR^{d\times d}$ be the given full
and right monic linear matrix.  First, by Lemma~\ref{shift-inv-lemma},
we will find a suitable scalar matrix $n$-tuple
$\bar{M}=(M_1,M_2,\ldots,M_n)$, each $M_i\in \F_q^{\ell\times \ell}$
for $\ell\le 2d$, such that under the substitution $x_i\leftarrow M_i$
the matrix $L(\bar{M})$ is invertible.

For $1\le i\le n$ let $Y_i$ be an $\ell\times \ell$ matrix of distinct
noncommuting variables $y_{ijk}$. We consider the dilated linear
matrix
\begin{equation}\label{dilate-eq1}
L' = A_0\otimes I_\ell + \sum_{i=1}^n A_i\otimes (Y_i+M_i).
\end{equation}
It is not hard to see that $L'$ is full and $L'$ is right monic as  $L$ is right monic. Additionally, its constant term is 
invertible. So, we can apply Theorem~\ref{lfact1} to factorize $L'$ as
a product of two linear matrices, both non-units.

The following lemma \cite{HKV20} has an important role in our
algorithm for recovering the factorization for $L$ from a
factorization of $L'$.

\begin{lemma}\cite{HKV20}\label{hkv20-main-lemma}

Let $L \in \fR^{d \times d}$ be a full linear matrix with $L= A_0+ A_1x_1+\ldots +A_n x_n$ such that $A_i \neq 0$ for at least one $i$, $1\leq i \leq n$ and $L'\in
R^{d\ell \times d\ell}$ be a matrix obtained from $L$ by substituting
variable $x_i$ by $Y_i$ for $i \in [n]$, where $Y_i$ is $\ell \times
\ell$ matrix whose $(j,k)^{th}$ entry is a fresh noncommuting variable
$y_{i,j,k}$ for $1 \leq j,k \leq \ell$. Then
\begin{enumerate}
\item If $L'$ is of the form $G L' H = \left(
\begin{array}{c|c}
A' & 0 \\
\hline
D' & B'
\end{array}
\right)$, where $A'$ is $d' \times d'$ matrix and $B'$ is $d'' \times d''$ matrix for $0< d', d''$, with $d'+d''=d\ell$ and $G,H$ are $d\ell \times d\ell$ invertible scalar matrices then there exist $d \times d$ invertible scalar matrices $U, V$ such that  $U L V = \left(
\begin{array}{c|c}
A & 0 \\
\hline
D & B
\end{array}
\right)$, where $A$ is $e' \times e'$ matrix and $B$ is $e'' \times e''$ matrix for $0<e', e''$, with $e' +e''= d$. 

\item Moreover, given $L'$ explicitly along with its representation
  mentioned above, we can find the matrices $U, V$ in deterministic
  polynomial time (in $n,\ell,d$).
\end{enumerate}
\end{lemma}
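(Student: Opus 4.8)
The plan is to descend the hollow‑block decomposition of the blown‑up matrix $L'$ (of size $d\ell$) to one for $L$ (of size $d$), via a pair of subspaces of $\F^d$. Throughout I identify $\F^{d\ell}$ with $(\F^d)^{\oplus\ell}$ through $\F^d\otimes\F^\ell$, write $\pi_q\colon(\F^d)^{\oplus\ell}\to\F^d$ and $\iota_p\colon\F^d\to(\F^d)^{\oplus\ell}$ for the $q$-th coordinate projection and the $p$-th coordinate inclusion, and for a subspace $W\subseteq(\F^d)^{\oplus\ell}$ set $\operatorname{supp}(W):=\sum_q\pi_q(W)$; since $W\subseteq\operatorname{supp}(W)^{\oplus\ell}$ we have $\dim\operatorname{supp}(W)\ge\tfrac1\ell\dim W$. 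Under this identification $A_0\otimes I_\ell$ acts as $A_0$ on each block and $A_i\otimes E_{pq}=\iota_p A_i\pi_q$. Because $G$ and $H$ are scalar, the hypothesis $GL'H=\begin{pmatrix}A'&0\\ D'&B'\end{pmatrix}$ is an identity of linear matrices in the $y$-variables, so its upper‑right $d'\times d''$ block vanishes identically; equivalently, putting $\mathcal{T}':=H\langle e_{d'+1},\dots,e_{d\ell}\rangle$ and $\mathcal{S}':=G^{-1}\langle e_{d'+1},\dots,e_{d\ell}\rangle$ (both of dimension $d''$, with $0<d''<d\ell$), one gets $L'(\omega)\mathcal{T}'\subseteq\mathcal{S}'$ for every substitution $\omega$ of the $y$-variables by scalars. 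Reading off the coefficients of the affine‑linear pencil $L'$, this says exactly that $(A_0\otimes I_\ell)\mathcal{T}'\subseteq\mathcal{S}'$ and $A_i\,\pi_q(\mathcal{T}')\subseteq\{w:\iota_p(w)\in\mathcal{S}'\}$ for all $i\in[n]$ and $p,q\in[\ell]$.

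Next set $\mathcal{T}:=\operatorname{supp}(\mathcal{T}')$ and $\mathcal{S}:=\sum_{i=0}^n A_i\mathcal{T}$, both subspaces of $\F^d$. By construction $A_i\mathcal{T}\subseteq\mathcal{S}$ for every $i$, so after a scalar change of basis sending $\mathcal{T}$ to the last $\dim\mathcal{T}$ coordinates and $\mathcal{S}$ to the last $\dim\mathcal{S}$ coordinates, $L$ acquires a zero block in the rows outside $\mathcal{S}$ and the columns inside $\mathcal{T}$. Moreover $\mathcal{T}\ne 0$ since $\mathcal{T}'\ne 0$, and $\mathcal{T}\ne\F^d$: if $\mathcal{T}=\F^d$ then $\range(A_i)=A_i\mathcal{T}\subseteq\{w:\iota_p(w)\in\mathcal{S}'\}$ for all $i\in[n]$ and $p\in[\ell]$, and since $L$ is right monic $\sum_{i\in[n]}\range(A_i)=\F^d$, hence $\iota_p(\F^d)\subseteq\mathcal{S}'$ for every $p$ and so $\mathcal{S}'=\F^{d\ell}$, contradicting $\dim\mathcal{S}'<d\ell$. (Plain fullness of $L$ is not enough here; for a general full $L$ one would first replace it by a stably associated right‑ or left‑monic linear matrix via Theorem~\ref{full-monic}, which is harmless since in all our applications $L$ is already monic on one side.)

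The crux is the equality $\dim\mathcal{S}=\dim\mathcal{T}$, which forces the two diagonal blocks to be square. One inequality is immediate: the zero block of $L$ above has size $(d-\dim\mathcal{S})\times\dim\mathcal{T}$, and Cohn's remark that a full $d\times d$ linear matrix has no zero $r\times s$ submatrix with $r+s>d$ gives $\dim\mathcal{T}\le\dim\mathcal{S}$. The reverse inequality is the technical heart. Summing the coefficient conditions, $\mathcal{S}'$ contains both $(A_0\otimes I_\ell)\mathcal{T}'$ and the subspace $\mathcal{R}^{\oplus\ell}$ of dimension $\ell\dim\mathcal{R}$, where $\mathcal{R}:=\sum_{i\in[n]}A_i\mathcal{T}$; since $\dim\mathcal{S}'=\dim\mathcal{T}'$, the modular law forces $A_0\otimes I_\ell$ to collapse a large part of $\mathcal{T}'$ into $\mathcal{R}^{\oplus\ell}$, precisely $\dim\bigl(\mathcal{T}'\cap(A_0^{-1}\mathcal{R})^{\oplus\ell}\bigr)\ge\ell\dim\mathcal{R}$, where $A_0^{-1}\mathcal{R}:=\{u\in\F^d:A_0 u\in\mathcal{R}\}$. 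Passing to horizontal supports and using $\dim\operatorname{supp}(W)\ge\tfrac1\ell\dim W$ gives $\dim(\mathcal{T}\cap A_0^{-1}\mathcal{R})\ge\dim\mathcal{R}$, and a final bookkeeping step — expanding $\dim\mathcal{S}=\dim A_0\mathcal{T}+\dim\mathcal{R}-\dim(A_0\mathcal{T}\cap\mathcal{R})$ and using this bound together with $\dim A_0\mathcal{T}=\dim\mathcal{T}-\dim(\mathcal{T}\cap\ker A_0)$ — yields $\dim\mathcal{S}\le\dim\mathcal{T}$. I expect this chain of dimension estimates to be the only genuine difficulty; the rest is elementary linear algebra.

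Finally, put $e'':=\dim\mathcal{T}=\dim\mathcal{S}$ and $e':=d-e''$, both positive. Choose an invertible scalar matrix $V$ whose last $e''$ columns form a basis of $\mathcal{T}$, and an invertible scalar matrix $U$ with $U\mathcal{S}=\langle e_{e'+1},\dots,e_d\rangle$; since every $A_i$ maps $\mathcal{T}$ into $\mathcal{S}$, the linear matrix $ULV$ has the block form $\begin{pmatrix}A&0\\ D&B\end{pmatrix}$ with $A$ of size $e'\times e'$, $B$ of size $e''\times e''$, $D$ linear and $e'+e''=d$ — which is part~1. Part~2 is immediate: $\mathcal{T}'$ is read off as the span of the last $d''$ columns of $H$, then $\mathcal{T}=\operatorname{supp}(\mathcal{T}')$ and $\mathcal{S}=\sum_{i=0}^n A_i\mathcal{T}$ are computed by Gaussian elimination, and $U,V$ by basis completion; the whole construction is deterministic and runs in $\poly(n,\ell,d)$ time over $\F_q$.
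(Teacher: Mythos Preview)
Your argument is genuinely different from the paper's and, where it works, cleaner. The paper never builds the single pair of subspaces $(\mathcal{T},\mathcal{S})$ you use; instead it extracts $\ell$ row blocks $P_j$ and $\ell$ column blocks $Q_k$ from $G,H$, proves an inclusion--exclusion rank inequality (Lemma~\ref{hkv-rank-lemma}) giving $\sum_j(\rank P_j+\rank\hat P_j)\ge 2d'$ and the analogue for $Q$, and then uses a pigeonhole claim to locate a single index $j$ with $\rank\hat P_jP_j+\rank Q_j\ge d$ and both ranks positive. Your $\mathrm{supp}$ construction and the chain of modular-law estimates leading to $\dim\mathcal{S}\le\dim\mathcal{T}$ replace all of that with a direct dimension count; I checked the key step $\dim(\mathcal{T}'\cap (A_0^{-1}\mathcal{R})^{\oplus\ell})\ge\ell\dim\mathcal{R}$ and the subsequent bookkeeping, and they are correct.

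The one genuine gap is exactly where you flag it: the nontriviality $\mathcal{T}\ne\F^d$. Your argument for this uses right monicity of $L$, which is \emph{not} a hypothesis of the lemma (only fullness and some $A_i\ne 0$ for $i\ge 1$ are assumed). The paper's pigeonhole argument is precisely what handles this: it rules out the degenerate Case~1(a) by showing it would force $P_jA_iQ_k=0$ with $P_j,Q_k$ full rank, hence $A_i=0$ for all $i\ge 1$. Your proposed workaround---replace $L$ by a right-monic stable associate via Theorem~\ref{full-monic}---does not obviously go through: that theorem produces $ULS=L_0\oplus I_r$ with $U$ a \emph{polynomial} unit, not a scalar matrix, so after blowing up you cannot transport the hypothesis ``$GL'H$ is block lower-triangular for \emph{scalar} $G,H$'' from $L'$ to $L_0'$. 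You are right that in the paper's only use of the lemma (Theorem~\ref{lfact2}) the matrix $L$ is already right or left monic, so your proof suffices for the application; but as a proof of Lemma~\ref{hkv20-main-lemma} as stated it is incomplete at this point.
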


\begin{remark}
We give a self-contained complete proof of the above linear-algebraic
lemma in the appendix for $\F_q$, because the proof given in
\cite{HKV20} is sketchy in parts with some details missing, and also
their lemma is stated only for complex numbers and they are not
concerned about computing the matrices $U$ and $V$.
\end{remark}

Now, we can apply Lemma \ref{hkv20-main-lemma} to transform the
factorization of $L'$ to a factorization of $L$ as a product of two
linear matrices, both non-units. Repeating the above on both the
factors of $L$ we will get a complete atomic factorization of
$L$. Formally, we prove the following.

\begin{theorem}\label{lfact2}
On input a full and right (or left) monic linear matrix $L= A_0 +
\sum_{i=1}^n A_i x_i$ where $A_i \in \mathbb{F}^{d \times d}$ for $i\in
    [n]$, there is a randomized polynomial time ($poly(n,d)$)
    algorithm to compute scalar invertible matrices $S,S'$ such that
    $SLS'$ has atomic block diagonal form.
\end{theorem}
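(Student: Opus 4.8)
The plan is to mimic the commutatively-nonzero case (Theorem~\ref{lfact1}) while replacing the direct invariant-subspace reduction by a ``lift and descend'' argument through the dilated matrix $L'$ of Equation~\ref{dilate-eq1}. First I would invoke Lemma~\ref{shift-inv-lemma} to compute scalar matrices $M_i\in\F_q^{\ell\times\ell}$ with $\ell\le 2d$ such that $L(\bar M)$ is invertible; this can be done in randomized polynomial time by picking the $M_i$ at random (or deterministically over small fields, using the small-field discussion in Section~\ref{small-field}). Form the dilated matrix $L' = A_0\otimes I_\ell + \sum_{i=1}^n A_i\otimes(Y_i+M_i)$, which is a linear matrix in the $y_{ijk}$ variables whose constant term $A_0\otimes I_\ell + \sum_i A_i\otimes M_i = L(\bar M)$ is invertible, which is full since $L$ is full, and which is right monic since $L$ is right monic (the coefficient of $y_{ijk}$ is $A_i\otimes E_{jk}$, and $[\,A_i\otimes E_{jk}\,]_{i,j,k}$ has full row rank whenever $[A_1\cdots A_n]$ does). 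Hence $L'$ satisfies the hypotheses of Theorem~\ref{lfact1}, so we obtain in randomized polynomial time scalar invertible matrices bringing $L'$ to atomic block diagonal form.

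The next step is descent: whenever Ronyai's algorithm (inside Theorem~\ref{lfact1}) produces a nontrivial block decomposition $G L' H = \left(\begin{smallmatrix} A' & 0 \\ D' & B'\end{smallmatrix}\right)$ with $G,H$ scalar invertible, I would apply Lemma~\ref{hkv20-main-lemma} to pull this back to a nontrivial block decomposition $U L V = \left(\begin{smallmatrix} A & 0 \\ D & B\end{smallmatrix}\right)$ of $L$ itself, with $A,B$ scalar and, crucially, still linear matrices (the block structure of $L'$ is inherited by each block through the same lemma, and linearity of $A,B$ is immediate since $ULV$ is linear). So whenever $L$ is not an atom, we can detect this and split $L$ as a product of two linear non-units of strictly smaller dimension; if $L$ is an atom, then $L'$ admits no nontrivial block decomposition (if it did, Lemma~\ref{hkv20-main-lemma} would contradict atomicity of $L$ via Theorem~\ref{cohnthm}), so Theorem~\ref{lfact1} applied to $L'$ returns that $L'$, and hence $L$, is an atom. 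Recursing on the two blocks $A$ and $B$ (each of which is again full and right monic, being a block of a monic linear matrix after scalar row/column operations, and each of strictly smaller dimension so that termination is guaranteed) assembles the scalar invertible $S,S'$ with $SLS'$ in atomic block diagonal form.

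I expect the main obstacle to be the bookkeeping that keeps the recursion ``linear and monic'' at every level: one has to check that the off-diagonal block $D$ in $ULV$ is itself linear (it is, since $ULV$ is linear) and that the diagonal blocks $A$ and $B$, obtained as blocks of $ULV$, are not merely full but also right (or left) monic so that the recursive call to Theorem~\ref{lfact1} is legal --- this is exactly the content already built into Theorem~\ref{cohnthm}, whose conclusion guarantees that the blocks $A,B$ of a full right-monic $C$ can be taken full right monic. A secondary point is the efficiency accounting: each descent step costs one invocation of Lemma~\ref{hkv20-main-lemma}, which is deterministic polynomial time in $n,\ell,d$, and the dilated matrix $L'$ has dimension $d\ell\le 2d^2$, so the total running time over the at most $d$ recursive levels remains $\poly(n,d)$ (times $\log q$ for the field arithmetic inside Ronyai's algorithm). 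Putting these together yields the claimed randomized $\poly(n,d)$ algorithm computing $S,S'$ with $SLS'$ in atomic block diagonal form; from this form the product structure of atoms (and hence the complete factorization of $L$ into linear atoms, up to units) is read off exactly as in the last paragraph of the proof of Theorem~\ref{lfact1}.
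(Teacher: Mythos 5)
Your proposal follows the same route as the paper's proof of Theorem~\ref{lfact2}: lift $L$ to the dilated matrix $L'$, detect a block split of $L'$ via Theorem~\ref{lfact1}, pull the split back to $L$ via Lemma~\ref{hkv20-main-lemma}, and recurse on the diagonal blocks. The one step you omit, which as written makes the appeal to Lemma~\ref{hkv20-main-lemma} unlicensed, is the linear shift of the dilation variables $y_{ijk}\leftarrow y_{ijk}-M_i(j,k)$ that must be performed before the lemma is invoked. Lemma~\ref{hkv20-main-lemma} is stated for the \emph{unshifted} substitution $\tilde L = L(Y_1,\ldots,Y_n)$, i.e.\ $x_i\mapsto Y_i$, whereas Theorem~\ref{lfact1} hands you a scalar block decomposition $G L' H = \left(\begin{smallmatrix}A'&0\\D'&B'\end{smallmatrix}\right)$ of the shifted dilation $L' = L(Y_1+M_1,\ldots,Y_n+M_n)$. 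The bridge is to observe that $L'(\bar Y)=\tilde L(\bar Y+\bar M)$, so substituting $\bar Y\mapsto \bar Y-\bar M$ in the block identity yields $G\tilde L(\bar Y)H$ in the same block shape with the same scalar $G,H$; only now does Lemma~\ref{hkv20-main-lemma} apply to descend to $L$. This is an easy, one-line repair rather than a wrong approach, but the step would fail as literally stated.

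Everything else matches the paper's argument: the verification that $L'$ is full (its constant term $L(\bar M)$ is invertible), right monic (your computation with $A_i\otimes E_{jk}$ is correct), and satisfies the hypotheses of Theorem~\ref{lfact1}; the equivalence ``$L$ atom $\iff$ $L'$ atom'' via Theorem~\ref{cohnthm} and Lemma~\ref{hkv20-main-lemma}; the preservation of the full and right-monic property for the diagonal blocks $A$ and $B$, which is precisely Claim~\ref{monic-monic} in the paper (itself a consequence of the argument in the proof of Theorem~\ref{cohnthm}); the termination by strict dimension decrease; and the $\poly(n,d,\log q)$ running-time accounting. (A small notational slip: in your second paragraph you write ``with $A,B$ scalar and, crucially, still linear matrices''; presumably you mean $U,V$ are scalar while $A,B$ are the linear diagonal blocks.)
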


\begin{proof}
  We present the algorithm only for right monic $L$; the left monic
  case has essentially the same solution.

  If the input $L$ is not full or right monic the algorithm can
  efficiently detect that and output ``failure''. If $L$ is an atom
  the algorithm will output that $L$ is an atom and set the matrices
  $S$ and $S'$ to $I_d$. Otherwise, the algorithm will compute
  invertible scalar matrices $S$ and $S'$ such that
\begin{equation}\label{eq-block-atom}
S L S' = 
\left(
\begin{array}{ccccc}
L_1 & 0 &0 &\ldots & 0\\
* & L_2 &0 &\ldots & 0\\
* & * &L_3 &\ldots & 0\\
& & &\ddots &\\
* & * &* &\ldots & L_r\\
\end{array}
\right),
\end{equation}
where the matrix on the right is in atomic block diagonal form, that
is, each linear matrix $L_i$ is an atom.

\begin{enumerate}
\item[]   {\bf Procedure Factor(L)}.
\item Test if $L$ has full noncommutative rank using the algorithm in
  \cite{IQS17} or \cite{GGOW20}. Test if $L$ is right monic by checking if
  the matrix $[A_1 A_2 \ldots A_n]$ has full row rank (which is $d$).
  If $L$ is not full and right monic the algorithm outputs ``fail''.
\item Assume $L$ is full and right monic. Using Lemma
  \ref{shift-inv-lemma}, find smallest positive integer $\ell \leq 2d$
  and $\ell\times \ell$ scalar matrices $M_i, i\in [n]$ with entries
  from $\F$ (or a small degree extension of $\F$) such that
  $W=L(\bar{M})$ is $d \cdot \ell \times d \cdot \ell$ invertible
  scalar matrix. Compute the dilated linear matrix $L'$ in the
  $y_{ijk}$ variables as in Equation~\ref{dilate-eq1} which can be
  rewritten as:
\[
L' = A_0\otimes I_\ell + \sum_{i=1}^n A_i\otimes M_i +
\sum_{i=1}^n\sum_{j,k=1}^\ell (A_i\otimes E_{jk})\cdot y_{ijk}.
\]
Let $L''= W^{-1} L'$. Clearly $L''(\overline{0})= I_{d\ell}$. Hence,
by the algorithm of Theorem \ref{lfact1} we can either detect that
$L''$ is an atom or factorize $L''$. If $L''$ is an atom then $L$ is
also an atom and the algorithm can output that and stop.  Otherwise,
$L'$ is not an atom and by Theorem \ref{lfact1} we will obtain a basis  
change matrix $T$ such that $T W^{-1} L' T^{-1}= T L'' T^{-1} = \left(
\begin{array}{cc}
C'' & 0 \\
* & D''
\end{array}
\right)$ where $C''$ and $D''$ are linear matrices of dimension $c''
\times c''$ and $d'' \times d''$ respectively, such that $c''+ d'' = d
\ell$.

  
\item By linear shift of variables $y_{ijk} \leftarrow y_{ijk} -
  M_i(j,k)$ we obtain $\tilde{T} \tilde{L} \tilde{T'} = \left(
\begin{array}{cc}
C' & 0 \\
* & D'
\end{array}
\right)$    
for some scalar invertible matrices $\tilde{T}, \tilde{T'}$ where
$\tilde{L} = L( Y_1, \ldots, Y_n)$.

\item Applying the algorithm of Lemma \ref{hkv20-main-lemma} to
  $\tilde{L}$, $\tilde{T}$, and $\tilde{T'}$, in deterministic
  polynomial time we obtain scalar invertible matrices $\tilde{S}, \tilde{S'}$ such
  that $\tilde{S}L\tilde{S'} = \left(
\begin{array}{cc}
C & 0 \\
* & D
\end{array}
\right) $
where $C$, $D$ are square matrices of dimensions $e \times e$ and $g
\times g$, respectively, such that $e+g = d$.

\item Recursively call Factor$(C)$ and Factor$(D)$. Let $S_1, S'_1$ be
  the matrices returned by Factor$(C)$ and $S_2, S'_2$ be the matrices
  returned by Factor$(D)$.

\item Let $S =(S_1 \oplus S_2)\tilde{S}$ and $S' = \tilde{S'}(S'_1
  \oplus S'_2)$. Return the invertible scalar matrices $S$ and
  $S'$. Note that at this stage $SLS'$ has the desired atomic block
  diagonal form.
\end{enumerate}

Next we give a brief argument for proving correctness of the above
algorithm. Firstly, the algorithm declares $L$ as an atom iff $L$ is
indeed an atom. To see this, we will prove $L$ is not an atom iff
$L''$ is not an atom. Forward direction is obvious. To prove the
reverse direction of implication, let $L''$ is not an atom. Which
implies $L' = W L''$ is not an atom. $\tilde{L}$ is a linear matrix
obtained by substituting $M_i(j,k)=0$ for all $i,j,k$ in
$L'$. Clearly, $\tilde{L}$ is not an atom as $L'$ is not an
atom. Using Lemma \ref{hkv20-main-lemma} it follows that $L$ is not an
atom. So we have established $L$ is not a atom iff $L''$ is not an
atom. So if input linear matrix $L$ is an atom, the algorithm will
correctly declare it to be an atom in step 2.

Now we argue that we will get correct atomic block diagonal form in
the last step of the algorithm. Firstly, for giving recursive calls to
the Factor procedure for the matrices $C$, $D$, we must have $C, D$ to
be right monic as stated in the claim below. This is proved by the
same argument as in the proof of Theorem~\ref{cohnthm}.


\begin{claim} \label{monic-monic}
Let $L \in \fR^{d \times d}$ be a full and right monic linear matrix
such that $P'LQ' = \left(
\begin{array}{cc}
C & 0 \\
E & D
\end{array}
\right)$ where $C$ and $D$ are linear matrices of dimensions $e \times
e$, $g \times g$, respectively, such that $e+g = d$. Then both $C, D$
are right monic.
\end{claim}

By recursive calls Factor($C$) and Factor($D$) obtain matrices $S_1,
S'_1, S_2, S'_2$ such that $S_1C S'_1=C'$ and $S_2 D S'_2=D'$ are in
atomic block diagonal form.  We can write $\tilde{S}L\tilde{S'}$ as
\begin{eqnarray*}
&=& \left(
\begin{array}{cc}
C & 0 \\
E & D
\end{array}
\right)\\
&=& \left(
\begin{array}{cc}
C & 0 \\
0 & I_g
\end{array}
\right)\left(
\begin{array}{cc}
I_e & 0 \\
E & I_g
\end{array}
\right)\left(
\begin{array}{cc}
I_e & 0 \\
0 & B
\end{array}
\right)\\
&=& (S_1^{-1} \oplus I_g)(C' \oplus I_g)({S'}_1^{-1} \oplus I_g)\left(
\begin{array}{cc}
I_e & 0 \\
E & I_g
\end{array}
\right)(I_e \oplus S_2^{-1})(I_e \oplus D')(I_e \oplus {S'}_2^{-1})\\
&=&  (S_1^{-1} \oplus I_g)(C' \oplus I_g)(I_e \oplus {S'}_2^{-1})\left(
\begin{array}{cc}
I_e & 0 \\
S_2ES'_1 & I_g
\end{array}
\right)({S'}_1^{-1}\oplus I_g)(I_e \oplus D')(I_e \oplus {S'}_2^{-1})\\
&=& (S_1^{-1} \oplus I_g)(I_e \oplus S_2^{-1})(C' \oplus I_g)\left(
\begin{array}{cc}
I_e & 0 \\
S_2ES'_1 & I_g
\end{array}
\right)(I_e \oplus D')({S'}_1^{-1}\oplus I_g)(I_e \oplus {S'}_2^{-1})\\
&=& (S_1^{-1} \oplus I_g)(I_e \oplus S_2^{-1})\left(
\begin{array}{cc}
C' & 0 \\
S_2ES'_1 & D'
\end{array}
\right)({S'}_1^{-1}\oplus I_g)(I_e \oplus {S'}_2^{-1})\\
&=& (S_1^{-1} \oplus S_2^{-1})\left(
\begin{array}{cc}
C' & 0 \\
S_2ES'_1 & D'
\end{array}
\right)({S'}_1^{-1}\oplus {S'}_2^{-1}).
\end{eqnarray*}
Thus we have
\[
(S_1 \oplus S_2) \tilde{S}L\tilde{S'}(S'_1 \oplus S'_2) = \left(
\begin{array}{cc}
C' & 0 \\
S_2ES'_1 & D'
\end{array}
\right).
\]

As $C'$ and $D'$ are in atomic block diagonal form, it follows that
$\left(
\begin{array}{cc}
C' & 0 \\
S_2ES'_1 & D'
\end{array}
\right)$ is also in atomic block diagonal form. Letting $S=(S_1 \oplus
S_2) \tilde{S}$ and $S'=\tilde{S'}(S'_1 \oplus S'_2)$, it follows that
$SLS'$ is in the desired atomic block diagonal form which proves the
correctness of Factor procedure. In each call to the procedure
(excluding the recursive calls) the algorithm takes
$\poly(n,d,\log_2q)$ time. The total number of recursive calls overall
is bounded by $d$. Hence, the overall running time is
$\poly(n,d,\log_2q)$. This completes the proof of the theorem.
\end{proof}

For the factorization of $f$, we assume the stably associated full
linear matrix $L$ is left monic. After we obtain atomic block
diagonal form as in Equation \ref{eq-block-atom}, we can factorize $L$
into atomic factors by Theorem~\ref{lfact1}. Combined with
Equation~\ref{eq4} we have
\begin{equation*}
f\oplus I_s = PS'F'_1F'_2\cdots F'_r U'Q,
\end{equation*}
where each linear matrix $F'_i$ is an atom, $P$ is upper triangular
with all $1$'s diagonal, $Q$ is lower triangular with all $1$'s
diagonal, and $S'$ is scalar invertible and $U'$ is a unit. Now,
applying Lemma \ref{extract} and Theorem \ref{cnzthm} we obtain the
complete factorization of $f$ into irreducible factors. This is
summarized in the following.

\begin{theorem}\label{comzthm}  
  Let $f\in \fR$ be a polynomial given by an algebraic branching
  program as input instance of $\prob{FACT}(\F_q)$. Then there is a
  $\poly(s, \log q,|X|)$ time randomized algorithm that outputs a
  factorization $f=f_1f_2\cdots f_r$ such that each $f_i$ is
  irreducible and is output as an algebraic branching program.
\end{theorem}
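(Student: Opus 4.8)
The plan is to reduce to the two cases already handled: the commutatively nonzero case of Theorem~\ref{cnzthm} and the commutatively zero case built in this section. First I would note that $f$ has small degree (it is computed by a size-$s$ formula), so a single DeMillo-Lipton-Schwartz-Zippel test decides in randomized polynomial time whether $f$ is commutatively nonzero; if it is, Theorem~\ref{cnzthm} already gives the desired randomized $\poly(s,\log q,|X|)$ algorithm. So assume henceforth that $f$ is commutatively zero.

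For a commutatively zero $f$ I would follow the three-step template of Section~\ref{cnz-sec}. \emph{Step one:} apply Higman linearization (Theorem~\ref{ehigman}) and then Theorem~\ref{full-monic} to obtain a full and left monic linear matrix $L\in\fR^{d\times d}$ stably associated with $f$, together with the transformation data of Equation~\ref{eq2} (left monic case), i.e. $f\oplus I_s = PS(L\oplus I_t)UQ$ with $P$ upper unitriangular, $Q$ lower unitriangular, $U$ a unit, $S$ scalar invertible, and all entries of $P,Q,U,S$ and their inverses computed by polynomial-size ABPs. \emph{Step two:} factorize $L$ into atoms. Since $f$ is commutatively zero, $L(\bar\alpha)$ is singular for every scalar tuple, so Theorem~\ref{lfact1} is not directly applicable; instead I would invoke Theorem~\ref{lfact2}, whose Factor procedure uses Lemma~\ref{shift-inv-lemma} to pick a random matrix tuple $\bar M$ making $L(\bar M)$ invertible, forms the dilated linear matrix $L'$ of Equation~\ref{dilate-eq1} (which remains full and right/left monic and now has invertible constant term after dividing by $L(\bar M)$), factorizes it via Theorem~\ref{lfact1}, pulls the block decomposition back to $L$ using Lemma~\ref{hkv20-main-lemma}, and recurses on the two blocks. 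This yields scalar invertible $S_1,S_2$ with $S_1LS_2$ in atomic block diagonal form; peeling off atomic linear factors from the right exactly as in the final part of the proof of Theorem~\ref{lfact1} rewrites $L$ as a product $S_1^{-1}F'_1F'_2\cdots F'_rS_2^{-1}$ of atomic linear matrices (up to units).

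\emph{Step three:} substitute this factorization of $L$ into Equation~\ref{eq2} to land in the form of Equation~\ref{eq4}, namely $f\oplus I_s = PS'F'_1F'_2\cdots F'_rU'Q$ with $P,Q$ unitriangular, $S'$ scalar invertible, $U'$ a unit, and each $F'_i$ an atom. This is precisely the situation treated in the proof of Theorem~\ref{cnzthm}, so I would apply Lemma~\ref{extract} iteratively: at the $i$-th stage set $C$ to the current prefix $S'F'_1\cdots F'_{r-i-1}$ and $D$ to $F'_{r-i}N_{r-i+1}$, invoke the trivialization algorithm of Lemma~\ref{triv-lem} inside each call, and read off the irreducible factors $f_r, f_{r-1}, \ldots, f_1$ one at a time from the right, outputting each as an ABP. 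For the running time: every individual step runs in time $\poly(s,\log q,|X|)$, the number of recursive/iterative stages in both Theorem~\ref{lfact2} and the extraction loop is bounded by $d=O(s)$, and the matrix $N$ returned by Lemma~\ref{triv-lem} always has degree at most $d^2$, so the ABP sizes do not blow up across stages (as already argued for Theorem~\ref{cnzthm}); hence the overall bound follows.

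The main obstacle is Step two in the commutatively zero case: one must be sure that the dilated matrix $L'$ is still full and right (respectively left) monic with invertible constant term, so that Theorem~\ref{lfact1} applies, and — more delicately — one must faithfully recover a genuine two-block factorization of $L$ from a two-block factorization of $L'$ rather than just of the dilation. This is exactly what Lemma~\ref{hkv20-main-lemma} (proved in the appendix) provides, together with its constructive second part giving the transformation matrices in deterministic polynomial time; it is the crux on which the whole commutatively-zero case, and hence this theorem, rests.
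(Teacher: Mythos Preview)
Your proposal is correct and follows essentially the same route as the paper: obtain a left monic linear matrix $L$ stably associated with $f$ via Higman linearization and Theorem~\ref{full-monic}, put $L$ in atomic block diagonal form using Theorem~\ref{lfact2} (which encapsulates the dilation, Theorem~\ref{lfact1}, and Lemma~\ref{hkv20-main-lemma}), peel off atomic linear factors to land in Equation~\ref{eq4}, and then extract the irreducible factors of $f$ by repeated application of Lemma~\ref{extract} exactly as in Theorem~\ref{cnzthm}. Your explicit preliminary case split on commutative (non)zeroness is harmless but not strictly needed, since Theorem~\ref{lfact2} already handles both cases uniformly.
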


Analogous to Corollary \ref{sparse-cor1}, when the polynomial is given
in a sparse representation, we have 

\begin{corollary} \label{sparse-comnz}
If $f\in\fR$ is a polynomial given as input in sparse representation
(that is, an $\F_q$-linear combination of its monomials) then in
randomized polynomial time we can compute a factorization into
irreducible factors in sparse representation.
\end{corollary}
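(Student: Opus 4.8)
The plan is to run the general-case factorization algorithm of Theorem~\ref{comzthm} and then read off sparse representations of the irreducible factors, exactly as in the proof of Corollary~\ref{sparse-cor1}; the only change is that Theorem~\ref{comzthm} is invoked in place of Theorem~\ref{cnzthm}, so that commutatively zero $f$ are also covered (in particular this corollary subsumes Corollary~\ref{sparse-cor1}). Suppose $f\in\fR$ is given as a $t$-sparse polynomial of degree $d$ over $\F_q$ in the $|X|=n$ variables. First I would observe that $f$ has a noncommutative arithmetic formula of size $O(td)$: every monomial $x_{i_1}x_{i_2}\cdots x_{i_k}$ with $k\le d$ is computed by a formula of size $O(k)$, and $f$ is an $\F_q$-linear combination of at most $t$ such monomials. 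Feeding this formula into the algorithm of Theorem~\ref{comzthm} produces, in randomized time $\poly(t,d,n,\log_2 q)$, a factorization $f=f_1f_2\cdots f_r$ into irreducible polynomials, each $f_i$ output as an algebraic branching program of size $\poly(t,d,n)$, with $r\le d$ (since $\sum_{i}\deg f_i=\deg f=d$ and each $\deg f_i\ge 1$).

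Next I would convert each ABP $f_i$ into a sparse representation exactly as in Corollary~\ref{sparse-cor1}. Since $f=f_1f_2\cdots f_r$ as an identity in $\fR$, every monomial occurring with a nonzero coefficient in any $f_i$ is a contiguous subword of some monomial of $f$; as $f$ is $t$-sparse of degree $d$, there are at most $td^2$ such candidate subwords, and they can be listed directly from the sparse input. For each factor $f_i$ and each candidate monomial $m$ I would compute the coefficient of $m$ in $f_i$ in deterministic polynomial time using the coefficient-extraction procedure of Raz and Shpilka~\cite{RS05} applied to the ABP for $f_i$. Retaining the monomials with nonzero coefficient yields an (at most $td^2$)-sparse representation of each $f_i$; since these are literally the same polynomials as before, $f=f_1f_2\cdots f_r$ remains a valid factorization with all factors irreducible and in sparse form. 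The total running time is $\poly(t,d,n,\log_2 q)$, and the only randomness is that inherited from Theorem~\ref{comzthm}.

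The only step that is not completely routine is the claim used above that every monomial of a factor is a subword of a monomial of $f$. This is a purely combinatorial fact about the free monoid algebra: if $f=gh$ with $g,h\ne 0$ then the supports of $g$ and of $h$ consist only of subwords of monomials in the support of $f$. It can be proved by taking a shortest word $w$ that is a product of a support monomial of $g$ with a support monomial of $h$ and that contains the offending monomial as a subword, and then analysing the at least two factorizations of $w$ forced by the cancellation $f(w)=0$ (an overlap/minimality argument on the free monoid). This fact is already implicitly used in the proof of Corollary~\ref{sparse-cor1}, so I would simply invoke it; the rest of the argument is immediate.
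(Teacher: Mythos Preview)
Your proposal is correct and follows exactly the route the paper intends: the paper does not spell out a proof of Corollary~\ref{sparse-comnz} but merely notes it is analogous to Corollary~\ref{sparse-cor1}, and your argument is precisely that analogy with Theorem~\ref{comzthm} substituted for Theorem~\ref{cnzthm}. Your write-up is in fact slightly more careful than the paper's proof of Corollary~\ref{sparse-cor1}, since you state the sparsity of each factor as at most $td^2$ rather than $t$ (the latter can fail, e.g.\ $1-x^{n+1}=(1+x)(1-x+\cdots\pm x^n)$), and you explicitly flag the combinatorial subword lemma that both proofs rely on.
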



\subsection{Factorization over small finite fields}

Finally, we briefly discuss the factorization problem over small
finite fields.  As explained in Section \ref{small-field}, the two
steps in our factoring algorithm requiring randomization can be
replaced with deterministic $\poly(s, q, |X|)$ time
computation. Furthermore, as explained in Section~\ref{small-field},
the matrix shift $(M_1, M_2, \ldots, M_n)$ required for the Theorem
\ref{lfact2} can be obtained in deterministic polynomial time such
that the entries of the matrices $M_i$ are from $\F_q$ for each $i$.
Putting it together, it gives us a deterministic factorization
algorithm for noncommutative polynomials that are input as arithmetic
formulas over $\F_q$. In summary, we have the following.

\begin{theorem}
Given as input a multivariate polynomial $f\in\F_q\angle{X}$ for a
finite field $\F_q$ by a noncommutative algebraic branching program of size
$s$, a factorization of $f$ as a product $f=f_1f_2\cdots f_r$ can be
computed in deterministic time $\poly(s,q,|X|)$, where each
$f_i\in\F_q\angle{X}$ is an irreducible polynomial that is output as
an algebraic branching program.
\end{theorem}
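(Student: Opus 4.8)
The plan is to revisit the randomized algorithm of Theorem~\ref{comzthm} and derandomize it. Every ingredient other than two specific steps is already deterministic: Higman linearization (Theorem~\ref{ehigman}), the reduction to a full and left (or right) monic linear matrix (Theorem~\ref{full-monic}), the trivialization procedure (Lemma~\ref{triv-lem}), the transfer lemma (Lemma~\ref{hkv20-main-lemma}), the factor extraction lemma (Lemma~\ref{extract}), and the final assembly in Theorem~\ref{cnzthm}. Inspecting the algorithm, randomization is used in exactly two places: (i) in Lemma~\ref{shift-inv-lemma}, to produce an explicit scalar matrix tuple $\bar M=(M_1,\dots,M_n)$ for which the dilated linear matrix $L'$ of Equation~\ref{dilate-eq1} has invertible constant term $L(\bar M)$ (equivalently, $f(\bar M)$ is invertible); and (ii) inside Theorem~\ref{lfact1}, in the call to Ronyai's common invariant subspace algorithm.

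For (ii), I would use the fact that Ronyai's algorithm for $\prob{INV}(\F_q)$ is a \emph{deterministic} polynomial-time reduction to univariate polynomial factorization over $\F_q$. Replacing the univariate-factorization oracle by Berlekamp's deterministic algorithm, which factors a degree-$D$ polynomial over $\F_q$ in time $\poly(D,q)$, turns Ronyai's algorithm into a deterministic $\poly(n,d,q)$ algorithm for $\prob{INV}(\F_q)$, and hence makes Theorem~\ref{lfact1} and Theorem~\ref{lfact2} deterministic, since the block-diagonalization and atom-extraction portions of those proofs are pure linear algebra.

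For (i), the key point is that $f$ is supplied by a formula of size $s$, so it has a $\poly(s)$-size ABP and degree at most $s$. Using the deterministic constructions of \cite{RS05,F14,ACDM20} one can explicitly compute, in deterministic polynomial time, a matrix substitution $x_i\leftarrow M'_i$ over $\F_q$ of dimension $\poly(s)$ with $f(\bar M')$ invertible, which is precisely what Lemma~\ref{shift-inv-lemma} and the first step of Theorem~\ref{lfact2} require. When $q=\Omega(\deg f)$ one obtains such $M'_i\in\F_q^{2s\times 2s}$ directly. When $q<\deg f$, one first works over a small extension $\F_{q^k}$ with $q^k=\Omega(\deg f)$ and $k=O(\log_q\deg f)$, obtains $M_i\in\F_{q^k}^{2s\times 2s}$ with $f(\bar M)$ invertible there, and then replaces each entry by its $k\times k$ regular representation over $\F_q$, producing $M'_i\in\F_q^{2sk\times 2sk}$; since the regular representation $\F_{q^k}\hookrightarrow\F_q^{k\times k}$ is an injective $\F_q$-algebra homomorphism, $f(\bar M')$ is the block-wise image of $f(\bar M)$ and hence remains invertible over $\F_q$. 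Because $\deg f\le s$, the dimension blow-up factor $k$ is $O(\log_q s)$, so the substitution dimension stays $\poly(s)$ and the remaining argument of Theorem~\ref{lfact2} is unchanged.

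The main obstacle is item (i) in the regime of very small $q$: one cannot afford a brute-force search over $\F_q^{d\times d}$ for the substitution, so one genuinely needs the ABP-based hitting-set machinery of \cite{RS05,F14,ACDM20}, combined with the regular-representation descent from $\F_{q^k}$ to $\F_q$ carried out so as to preserve both the invertibility of $f(\bar M')$ and a polynomial bound on the matrix dimension. Once these two derandomizations are in place, running the algorithm of Theorem~\ref{comzthm} verbatim --- Higman linearize $f$, make the linear matrix full and (say) left monic, factor it into atoms via the now deterministic Theorems~\ref{lfact2} and~\ref{lfact1}, and extract the irreducible factors $f_1,\dots,f_r$ using Lemma~\ref{extract} --- yields the claimed deterministic $\poly(s,q,|X|)$ algorithm, where the $q$-dependence enters only through Berlekamp's algorithm and the finite-field-extension bookkeeping, and each $f_i$ is output as an algebraic branching program exactly as in the randomized case.
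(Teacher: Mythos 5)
Your proposal is correct and matches the paper's argument essentially verbatim: both pin down the same two sources of randomness (the invertibility-preserving matrix substitution $\bar M$ feeding Theorem~\ref{lfact2}, and the invariant-subspace calls inside Theorem~\ref{lfact1}) and eliminate them in the same way, via the ABP-based deterministic constructions of \cite{RS05,F14,ACDM20} together with regular-representation descent from $\F_{q^k}$ to $\F_q$, and by replacing the univariate-factorization oracle inside Ronyai's reduction with Berlekamp's deterministic $\poly(q,D)$ algorithm. There is nothing substantively different from the paper's treatment.
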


\section{Concluding Remarks}

In this paper we present a randomized polynomial-time algorithm for
the factorization of noncommutative polyomials \emph{over finite
  fields} that are input as \emph{algebraic branching programs}. The
irreducible factors are output as algebraic branching programs.

Several open questions arise from our work. We mention two of them.
The first question is the complexity of factorization over rationals
of noncommutative polynomials given as ABPs or arithmetic formulas. Our
approach involves the crucial use of Ronyai's algorithm for invariant
subspace comptutation which turns out to be a hard problem over
rationals.  We believe a different approach may be required for the
rational case.

The use of Higman linearization prevents us from generalizing this
approach to noncommutative polynomials given as arithmetic
circuits. We do not know any nontrivial complexity upper bound for the
factorization problem for noncommutative polynomials given as
arithmetic circuits.\\

\noindent\textbf{Acknowledgements.}~~We thank anonymous referees
and Partha Mukhopadhyay for asking about extension of our earlier
factorization algorithm for noncommutative formulas to algebraic
branching programs.

\bibliographystyle{alpha}
\bibliography{references}
\appendix

\newpage
\section{Appendix}


\subsection{Higman Linearization of Algebraic Branching Programs}

In this section we present a proof of Theorem~\ref{block-higman}. More
precisely, we give an efficient deterministic algorithm to compute
Higman Linearization for a non-commutative polynomial given by an
algebraic branching program. We obtain this by generalizing the Higman
Linearization process (described in Theorem~\ref{ehigman}) to what we call
Block-Higman linearization. We restate Theorem~\ref{block-higman}) and
present its proof.


\begin{theorem}[Block Higman Linearization]\label{blockhigthm}
  Given as input an algebraic branching program of size $s$ computing
  a noncommutative polynomial $f\in\fR$, we can compute in
  deterministic polynomial time matrices
  $P, Q\in\fR^{\ell \times \ell}$ and a linear matrix
  $L\in\fR^{\ell\times \ell}$ such that
\begin{equation}\label{higeq}
P\left(
\begin{array}{c|c}
f & 0 \\
\hline
0 & I_{\ell-1}
\end{array}
\right)Q ~=~L
\end{equation}
with $P$ upper triangular, $Q$ lower triangular, and the diagonal
entries of both $P$ and $Q$ are all $1$'s (hence, $P$ and $Q$ are both
units in $\fR^{\ell \times \ell}$). Futhermore, the algorithm computes
the entries of matrices $P$ and $Q$ as algebraic branching programs.
The entries of $P^{-1}$ and $Q^{-1}$ are also computable as algebraic
branching programs.

\end{theorem}

\begin{proof}
  Let $f\in \fR$ be the input noncommutative polynomial of degree $d$
  computed by an ABP of size $s$ and $d+1$ layers, where the $i^{th}$
  layer has say $n_i$ nodes. Then we have linear matrices
  $A_1,A_2,\ldots,A_d$ such that
  \[
    f=A_1\cdot A_2\cdots A_d,
  \]
  where $A_i$ is $n_i\times n_{i+1}$ for each $i$ and $n_1=n_{d+1}=1$.
  Note that the $(j,k)^{th}$ entry of $A_i$ is the linear form
  labeling the edge from $j^{th}$ node in layer $i$ to $k^{th}$
  node in layer $i+1$.

  We will prove the following more general result, dropping the
  constraint that $n_1=n_{d+1}=1$. Suppose $A_1,A_2,\ldots,A_d$ are
  linear matrices of compatible dimensions ($A_i$ is
  $n_i\times n_{i+1}$ for each $i$) such that the matrix product
\[
  M=A_1\cdot A_2\cdots A_d
\]  
is well defined. The algorithm we will describe will compute matrices
$P, Q, L$ such that

\begin{equation}\label{higeq}
P\left(
\begin{array}{c|c}
M & 0 \\
\hline
0 & I_{\ell-1}
\end{array}
\right)Q ~=~L
\end{equation}

and $P, Q, L$ have the properties as stated in the theorem.



The proof is by an easy induction on number $d$ of linear matrices
whose product is $M$. We set up this induction by describing a single
step of Block-Higman linearization writing the input matrix $M=AB$,
where $A=A_1\cdot A_2\cdots A_{d-1}$ and $B=A_d$ is a linear matrix.

So, let $M$ be the $r \times t$ matrix where $M=A\cdot B$ with
$A \in \fR^{r\times s}$ and $B \in \fR^{s\times t}$, and $B$ is a linear
matrix. We apply the following steps to $M$.

\begin{itemize}
\item Expand $M$ to a $(r+s) \times (t+s)$ matrix of the following
  shape by adding $s$ new rows and $s$ new columns, with the bottom
  right diagonal block being $I_s$ and the remaining entries zero to
  obtain the following:
\[
\left[
\begin{array}{c|c}
M & 0 \\
\hline
0 & I_s
\end{array}
\right]. 
\]
\item Use suitable block row and column operations to transform the
  matrix as follows
\[
\left(
\begin{array}{cc}
AB & 0 \\
0 & I_s
\end{array}
\right)\rightarrow
\left(
\begin{array}{cc}
AB & A \\
0 & I_s
\end{array}
\right)\rightarrow
\left(
\begin{array}{cc}
0 & A \\
-B & I_s
\end{array}
\right)
\]

Here the first step is realized by computing the matrix product
$A\cdot [0~|~I_s]$ and adding it to the respective blocks of the first
$r$ rows. The second step is realized by computing the matrix product
\[
 \left(
\begin{array}{c}
  A \\
  I_s
\end{array}
\right)\cdot (-B)
\]
and adding this to the respective blocks of the first $t$ columns.
These two steps are realized by left multiplication by $\left(
\begin{array}{c|c}
I_r & A \\
\hline
0 & I_s
\end{array}
\right)$ and right multiplication by $\left(
\begin{array}{c|c}
I_t & 0 \\
\hline
 -B & I_s
\end{array}
\right).$
\end{itemize}
So we have
\[\left(
\begin{array}{cc}
I_r & A \\
0 & I_s
\end{array}
\right)
\left(
\begin{array}{cc}
AB & 0 \\
0 & I_s
\end{array}
\right)
\left(
\begin{array}{cc}
I_t & 0 \\
 -B & I_s
\end{array}
\right) = 
\left(
\begin{array}{cc}
0 & A \\
-B & I_s
\end{array}
\right).
\]
In the above, we note that the row operation matrix $\left(
\begin{array}{cc}
I_r & A \\
0 & I_s
\end{array}
\right)$ is upper triangular with all diagonal entries $1$ where as the column operation matrix $\left(
\begin{array}{cc}
I_t & 0 \\
 -B & I_s
\end{array}
\right)$ is lower triangular with all diagonal entries $1$. 

Crucially, if $A=A_1\cdot A_2\cdots A_{d-1}$ and $B=A_d$, the
resulting matrix has only $A$ as the nonlinear block which is a
product of $d-1$ linear matrices. We can now apply induction to
the matrix $A$ to obtain the Block-Higman linearization of $M$
as claimed. 

We describe the intermediate steps of the induction in more detail, in
order to see the final shape of the linear matrix.

Let $M= A_1A_2\ldots A_d$ where $A_i \in \fR ^{n_i \times n_{i+1}}$,
are linear matrices. At the $i^{th}$ step of the induction
Block-Higman linearization transforms a matrix of the form $\left(
\begin{array}{cc}
* & A_1A_2\ldots A_{d-i} \\
* & *
\end{array} 
\right)$ into a matrix of the form $\left(
\begin{array}{cc}
* & A_1A_2\ldots A_{d-i-1} \\
* & *
\end{array} 
\right)$ 
where each $*$ indicates matrix blocks with linear entries. After
$d-1$ steps of Block-Higman linearization we obtain a linear matrix
which is an associate of $M$. 

In more detail, let $M_0=A_1A_2\ldots A_d, P_0=I_{n_1}$,
$Q_0=I_{n_{d+1}}$ and $t_0=0$. We have
$P_0 (M \oplus I_{t_0}) Q_0 = M_0$. Inductively, assume that we have
upper triangular matrix $P_i$ with all diagonal entries $1$ and a
lower triangular matrix $Q_i$ with all diagonal entries $1$ such that
\[ P_i \left(
\begin{array}{cc}
M & 0 \\
0 & I_{t_i}
\end{array} 
\right)Q_i = M_i.
\] 
Here, $M_i$ is a polynomial matrix which has top right block equal to
$A_1A_2 \ldots A_{d-i}$ and the other entries of $M$ are linear and
the entries of $P_i$ and $Q_i$ are all computable by ABPs. Let
$t_{i+1}=t_i + n_{d-i}$. Clearly,
$(P_i \oplus I_{n_{d-i}}) (M \oplus I_{t_{i+1}})(Q_i \oplus
I_{n_{d-i}})= M_i \oplus I_{n_{d-i}}$. Let $M_i'$ be a matrix obtained
from $M_i$ by replacing top right block by $0$. Let $M_{i+1}$ be a
$2\times 2$ block matrix with $M_i'$ as top left block and the
structure as shown below
\[
M_{i+1}= \left(
\begin{array}{ccccc|c|c}
 &&&&& 0& A_1A_2\ldots A_{d-i-1} \\
\hline
&&&&& & 0\\
&&&&& & \vdots\\
&&&&& & 0\\
\hline
0& 0& \ldots & 0 & 0& -A_{d-i} & I_{n_{d-i}}
\end{array} 
\right),
\]
where matrix blocks $-A_{d-i}$ and $A_1A_2\ldots A_{d-i-1}$ align with top right $0$ block in $M_i'$. Now we define suitable block row and column operations which transforms matrix $M_i \oplus I_{n_{d-i}}$ to $M_{i+1}$. 

By applying the Block-Higman linearization step we will obtain



\[
P'(M_i \oplus I_{n_{d-i}})Q' = M_{i+1},
\] 
where $P'$ and $Q'$ are upper and lower triangular matrices performing
the block row and column operations and their entries are computable
by ABPs. Letting $P_{i+1} = P' (P_i \oplus I_{n_{d-i}})$ and
$Q_{i+1}= (Q_i \oplus I_{n_{d-i}}) Q'$ we get
\[
P_{i+1} (M \oplus I_{t_{i+1}}) Q_{i+1} = M_{i+1}
\]
where $P_{i+1}$ and $Q_{i+1}$ are upper and lower triangular matrices
with diagonal entries $1$, the top right block (consisting of top
$n_1$ rows and last $n_{d-i}$ columns) of $M_{i+1}$ is
$A_1A_2 \ldots A_{d-i-1}$ and all other entries of $M_{i+1}$ are
linear. Continuing thus, we obtain upper and lower triangular matrices
$P_{d-1}$ and $Q_{d-1}$ with all diagonal entries $1$ such that
$P_{d-1} (M \oplus I_{t_{d-1}})Q_{d-1}= M_{d-1}$ which is a linear
matrix. Moreover, it is easy to see that entries of $P_{d-1}$ and
$Q_{d-1}$ are given by polynomial size ABPs (of $O(s^2)$ size to be
precise).

Carefully observing the shapes of the matrices $M_i$ we note that the
final linearized matrix $M_{d-1}$ has the form
\[
\left(
\begin{array}{ccccccc}
0 & 0 & 0 & 0 & \ldots & 0 & A_1\\
A_d & I_{n_{d}} & 0 & 0 & \ldots & 0 & 0\\
0 & A_{d-1} & I_{n_{d-1}} & 0 & \ldots & 0 & 0 \\
0 & 0 & A_{d-2} & I_{n_{d-2}}& \ldots & 0 & 0 \\
0 & 0 & \ldots & \vdots & \ddots & \vdots & 0\\
0& 0 & 0& 0 & \ldots & A_2 & I_{n_2}
\end{array} 
\right)
\]

$M_{d-1}$ is a $d\times d$ block matrix with
\begin{itemize}
\item[-]  $n_1\times n_2$ sized top right block in $A_1$.
\item[-]  $(i,i)^{th}$ block is $I_{n_{d+2-i}}$ for $2\leq i \leq d$.
\item[-] $(i,i-1)^{th}$ block is $A_{d+2-i}$ for $2\leq i \leq d$.
\item[-] all other entries are $0$.
\end{itemize}

\end{proof}

\subsection{Missing proofs from Section~\ref{prelim}}

\begin{proofof}{Theorem~\ref{cohnthm}}
Let $C\in\fR^{d\times d}$ be a full and right monic linear
matrix. Suppose Equation~\ref{eq-cohnthm} holds for some invertible
scalar matrices $S,S'$. Then we can write

\[
SCS'=  \left(
\begin{array}{cc}
A & 0 \\
D & B
\end{array}
\right) =   \left(
\begin{array}{cc}
A & 0 \\
0 & I
\end{array}
\right)\cdot 
  \left(
\begin{array}{cc}
I & 0 \\
D & I
\end{array}
\right)\cdot
\left(
\begin{array}{cc}
I & 0 \\
0 & B
\end{array}
\right).
\]
Since $C$ is right monic and $S, S'$ are invertible scalar matrices
the linear matrix $SCS'=\left(\begin{array}{cc} A & 0 \\ D & B
\end{array}\right)$ is also full and right monic. Writing it as
\[
\left(
\begin{array}{cc}
A & 0 \\
D & B
\end{array}
\right) =   \left(
\begin{array}{cc}
A_0 & 0 \\
D_0 & B_0
\end{array}
\right) + \sum_{i=1}^n \left(
\begin{array}{cc}
A_i & 0 \\
D_i & B_i
\end{array}
\right)\cdot x_i,
\]
it means the matrix
\[
\left[
\begin{array}{cc|}
A_1 & 0 \\
D_1 & B_1
\end{array}
\begin{array}{cc|}
A_2 & 0 \\
D_2 & B_2
\end{array}
 \ldots
\begin{array}{|cc}
A_n & 0 \\
D_n & B_n
\end{array}
\right]
\]
is full row rank. With suitable row operations applied to the above we
can see that both $[A_1 A_2\ldots A_n]$ and $[B_1 B_2 \ldots B_n]$ are
full row rank. Therefore, both $A$ and $B$ are full right monic
matrices hence they are nonunits by Lemma~\ref{monic-nonunit}.  Hence $A\oplus I$ and $B\oplus I$ are both non-units  which
implies that the factorization of $SCS'$ is nontrivial and hence $C$
is not an atom. 

Conversely, suppose $C$ is not an atom and $C=F\cdot G$ is a
nontrivial factorization. That means both $F$ and $G$ are full and
non-units. As $C$ is a linear matrix, applying \cite[Lemma
  5.8.7]{Cohnfir} we can assume that both $F$ and $G$ are linear
matrices. Now, since $F$ is a full linear matrix, by
Theorem~\ref{full-monic} (and Remark~\ref{remark-full-monic}) there
are a scalar invertible matrix $S_1$ and polynomial matrix $U_1$,
which is a unit, such that $S_1FU_1 = A\oplus I$ such that $A$ is
\emph{left} monic. Therefore, we have
\[
S_1 C = S_1 F U U^{-1} G = \left(
\begin{array}{cc}
A & 0 \\
0 & I
\end{array}
\right) \cdot \left(
\begin{array}{cc}
G'_1 & G'_3 \\
G'_2 & G'_4
\end{array}
\right) = \left(
\begin{array}{cc}
AG'_1 & AG'_3 \\
G'_2 & G'_4
\end{array}
\right).
\]
As $S_1 C$ is a linear matrix and $A$ is a left monic linear matrix we
can assume that $G'_1$ and $G'_3$ are scalar matrices. Since $S_1 C$ is
full rank, it forces the matrix $[G'_1 G'_3]$ to be full row rank (say
$r$, where $A$ is $r\times r$). Therefore, there is an invertible
scalar matrix $S'$ such that $[G'_1 G'_3]S' = [I_r 0]$. Putting it together,
we get the factorization
\[
SCS' = \left(
\begin{array}{cc}
A & 0 \\
0 & I
\end{array}
\right) \cdot
\left(
\begin{array}{cc}
I_r & 0 \\
G''_2 & G''_4
\end{array}
\right) =  \left(
\begin{array}{cc}
A & 0 \\
G''_2 & G''_4
\end{array}
\right)
\]
as claimed by the theorem.
\end{proofof}

\subsection{Proof of Lemma~\ref{hkv20-main-lemma}}

We present a self-contained proof of Lemma~\ref{hkv20-main-lemma} of
\cite{HKV20}.

\begin{definition}\label{def-setminus}
Let $U, V \subseteq \F^D$ be subspaces of $\F^D$ and $d= \dim U$. Fix
a basis $u_1, u_2, \ldots, u_\ell \in \F^D$ for $U \cap V$ and extend
it to a basis $u_1, u_2, \ldots, u_\ell, u_{\ell+1}, \ldots, u_d$ for
$U$. Further, let $u_1, u_2, ..., u_D$ be a basis for $\F^D$ obtained
by extending the above basis for $U$. Then $U\setminus V$ is defined
as $span(u_{\ell+1}, u_{\ell+2}, \ldots, u_d)$, i.e.
\[
U\setminus V = \{ \sum_{i= \ell+1}^{d} \alpha_i u_i | \alpha_i \in \mathbb{F} \textrm{ for } \ell < i \leq d \} 
\]
\end{definition}

Clearly $\dim U\setminus V = \dim U - \dim U \cap V$. Notice that
although the subspace $U\setminus V$ is basis dependent, the number
$\dim U\setminus V$ is independent of the construction of $U\setminus
V$.

\begin{definition}\label{projection}
Let $\mathcal{U}=\{U_1, U_2, \ldots, U_d \}$ be a collection of
subspaces of $\mathbb{F}^{D}$. For each $i \in [d]$ define
$\hat{U_i}^{(\mathcal{U})} = U_i \setminus (\sum_{k \neq i} U_k)$ as
above with respect to fixed bases for the subspaces.
\end{definition}

We first prove a technical lemma, essentially using the
inclusion-exclusion principle.

\begin{lemma}\label{hkv-rank-lemma}
Let $\mathcal{U}= \{U_1, U_2, \ldots, U_d\}$ be a collection of
subspaces of $\mathbb{F}^D$ for $d\geq 1$. Then
\[
\sum_{i=1}^{d}~ \left[\dim U_i + \dim \hat{U}_i^{(\mathcal{U})}
  \right]~~ \geq ~~2~\cdot~\dim \sum_{i=1}^d U_i.
\]
\end{lemma}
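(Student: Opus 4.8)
The plan is to derive the inequality from a cleaner identity, namely
\[
\sum_{i=1}^d \dim\hat U_i^{(\mathcal U)} \;=\; \dim\sum_{i=1}^d U_i ,
\]
combined with the standard subadditivity bound $\sum_{i=1}^d \dim U_i \ge \dim\sum_{i=1}^d U_i$. Granting both, adding them termwise gives
\[
\sum_{i=1}^d\bigl[\dim U_i + \dim\hat U_i^{(\mathcal U)}\bigr] \;=\; \sum_{i=1}^d \dim U_i \;+\; \dim\sum_{i=1}^d U_i \;\ge\; 2\,\dim\sum_{i=1}^d U_i ,
\]
which is exactly the claimed bound. So the whole content is the first identity, and this is where the inclusion--exclusion reasoning enters.

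To prove that identity I would process the subspaces one at a time and show that the spaces $\hat U_i$ assemble into a direct-sum decomposition of $S := \sum_{i=1}^d U_i$. Set $S_0 = 0$ and $S_i = U_1 + \cdots + U_i$. For each $i$, the space $\hat U_i$ is, by construction, a complement of $U_i \cap S_{i-1}$ inside $U_i$, so $U_i = \hat U_i \oplus (U_i\cap S_{i-1})$; since the second summand already lies in $S_{i-1}$ we get $S_i = S_{i-1} + \hat U_i$, and this sum is in fact direct because $\hat U_i \cap S_{i-1} = \hat U_i \cap (U_i \cap S_{i-1}) = 0$. Iterating over $i$ shows $S_i = \hat U_1 \oplus \cdots \oplus \hat U_i$ for all $i$, hence $S = \bigoplus_{i=1}^d \hat U_i$; taking dimensions yields the identity. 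Equivalently one may phrase this as the telescoping $\dim\hat U_i = \dim U_i - \dim(U_i\cap S_{i-1}) = \dim S_i - \dim S_{i-1}$, whose sum collapses to $\dim S$.

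The step that needs care, and which I would flag as the main (in fact only) obstacle, is the basis-dependence in the definition of $\hat U_i$: since $U\setminus V$ is only determined up to a choice of bases, one must fix a single coherent choice (extend a basis of $U_i\cap S_{i-1}$ to a basis of $U_i$, for each $i$) so that the direct-sum argument above is literally valid, and then observe that the conclusion depends only on the numbers $\dim\hat U_i = \dim U_i - \dim(U_i\cap S_{i-1})$, which are choice-independent. A low-tech alternative, avoiding the direct sum, is a direct induction on $d$: the base case $d=1$ gives equality since $\hat U_1 = U_1$; passing from $d-1$ to $d$, the left side increases by $\dim U_d + \dim\hat U_d = 2\dim U_d - \dim(U_d\cap S_{d-1})$ while the right side increases by $2(\dim S_d - \dim S_{d-1}) = 2\dim U_d - 2\dim(U_d\cap S_{d-1})$, so the inductive hypothesis is preserved with slack exactly $\dim(U_d\cap S_{d-1})\ge 0$. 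Either route finishes the proof; all remaining manipulations are routine linear algebra.
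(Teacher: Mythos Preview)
There is a genuine gap: you have misread the definition of $\hat U_i^{(\mathcal U)}$. By the paper's definition, $\hat U_i^{(\mathcal U)} = U_i \setminus \bigl(\sum_{k\neq i} U_k\bigr)$, where the sum ranges over \emph{all} indices $k\neq i$, not just those with $k<i$. Thus $\dim\hat U_i = \dim U_i - \dim(U_i\cap T_i)$ with $T_i=\sum_{k\neq i}U_k$, not with $S_{i-1}=\sum_{k<i}U_k$. Since $T_i\supseteq S_{i-1}$, one only has $\dim\hat U_i \le \dim U_i - \dim(U_i\cap S_{i-1})$, and your claimed identity $\sum_i\dim\hat U_i = \dim\sum_i U_i$ is false in general. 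For a concrete counterexample take $d=2$ and $U_1=U_2$ a one-dimensional line; then $\hat U_1=\hat U_2=0$, so $\sum\dim\hat U_i=0$ while $\dim(U_1+U_2)=1$.

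The same misreading breaks your inductive alternative. When you enlarge the collection from $\{U_1,\ldots,U_{d-1}\}$ to $\{U_1,\ldots,U_d\}$, each of the first $d-1$ hat-spaces \emph{changes} (because $U_d$ now enters every $T_i$), so the left side does not increase by the single term $\dim U_d+\dim\hat U_d$. The paper's proof is engineered precisely around this difficulty: it merges the last two subspaces into $V_{d-1}=U_{d-1}+U_d$ to form a $(d-1)$-element collection $\mathcal V$, applies induction there, notes that for $i\le d-2$ one has $\sum_{k\neq i}V_k=\sum_{k\neq i}U_k$ so $\dim\hat V_i^{(\mathcal V)}=\dim\hat U_i^{(\mathcal U)}$, and then establishes the nontrivial comparison
\[
\dim\hat V_{d-1}^{(\mathcal V)} \;\le\; \dim\hat U_{d-1}^{(\mathcal U)} + \dim\hat U_d^{(\mathcal U)} + \dim(U_{d-1}\cap U_d)
\]
via an inclusion--exclusion computation. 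That last inequality is the actual content of the lemma and is missing from your argument.
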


\begin{proof}
The proof will be by induction on $d$. The base case, $d=1$, is
obvious. Suppose it is is true for all $t <d$. I.e. for any subspace
collection $\mathcal{V} = \{V_1, V_2, \ldots, V_t\}$ we have
\[
\sum_{i=1}^{t}~ \left[\dim V_i + \dim  \hat{V}_i^{(\mathcal{V})} \right]~~ \geq ~~2~\cdot ~\dim\sum_{i=1}^t V_i.
\]
Letting $V_i= U_i$ for $1\leq i \leq d-2$ and $V_{d-1}= U_{d-1}+U_d$
in the above, we have
\[
\sum_{i=1}^{d-1}~ \left[\dim V_i + \dim \hat{V}_i^{(\mathcal{V})}
  \right] ~~\geq ~~2~\cdot~ \dim\sum_{i=1}^{d-1} V_i = 2\cdot\dim\sum_{i=1}^d U_i. 
\]

For the induction we need to show that $\sum_{i=1}^{d-1} (\dim V_i +
\dim \hat{V}_i^{(\mathcal{V})})\leq \sum_{i=1}^{d} (\dim U_i + \dim
\hat{U}_i^{(\mathcal{U})})$.

Now, \[\sum_{i=1}^{d-1}~ (\dim V_i + \dim
  \hat{V}_i^{(\mathcal{V})})\] is
\begin{eqnarray*}
 &=& \dim V_{d-1} + \dim \hat{V}_{d-1}^{(\mathcal{V})} + \sum_{i=1}^{d-2}~ \left[  ~  \dim U_i + \dim(U_i \setminus ( U_{d-1}+U_{d} + \sum_{k \neq i, k<d-1}U_k ))~ \right]  \\
 &=&  \dim V_{d-1} + \dim \hat{V}_{d-1}^{(\mathcal{V})} + \sum_{i=1}^{d-2}~ \left[ ~\dim U_i + \dim( U_i \setminus \sum_{k \neq i, k\leq d} U_k) \right] \\
 &=&  \dim V_{d-1} + \dim \hat{V}_{d-1}^{(\mathcal{V})}+ \sum_{i=1}^{d-2} ~\left[ \dim U_i + \dim \hat{U}_i^{(\mathcal{U})} \right]\\
 &=& \dim U_{d-1} + \dim U_d -dim( U_{d-1} \cap U_d ) + \dim \hat{V}_{d-1}^{(\mathcal{V})}+ \sum_{i=1}^{d-2}~\left [ \dim U_i + \dim \hat{U}_i^{(\mathcal{U})} \right] \\
  &=& \left[ \sum_{i=1}^{d} \dim U_i \right] + \left[ \sum_{i=1}^{d-2}~ \dim \hat{U}_i^{(\mathcal{U})} \right] + \dim \hat{V}_{d-1}^{(\mathcal{V})}- \dim (U_{d-1} \cap U_d).
\end{eqnarray*}


Hence, to complete the proof it suffices to show the following claim.

\begin{claim}
\[ \dim \hat{V}_{d-1}^{(\mathcal{V})}~ \leq~ \dim  \hat{U}_{d-1}^{(\mathcal{U})}  + \dim  \hat{U}_d^{(\mathcal{U})} + \dim (U_{d-1} \cap U_d) \]
\end{claim}
\begin{proofof}{Claim}
Let $T= U_1 + U_2 + \ldots + U_{d-2}$. Let $D, D_1, D_2$, and $D_3$ denote
dimensions of $T+U_{d-1}+U_d, U_{d-1}, U_d$, and $T$ respectively.

We have 
\begin{eqnarray*}
\dim \hat{V}_{d-1}^{(\mathcal{V})} &=& \dim (U_{d-1}+U_d)-\dim ((U_{d-1}+U_d) \cap T)\\
&=&\dim (U_{d-1}+U_d)-\dim (U_{d-1}+U_d)-\dim (T) + \dim (U_{d-1}+U_d+T)\\
&=& D-D_3
\end{eqnarray*}


Similarly,
\begin{eqnarray*}
\dim  \hat{U}_{d-1}^{(\mathcal{U})}  &=& \dim (U_{d-1}) - \dim (U_{d-1} \cap (T+ U_d))\\
&=&D_1 + \dim (T+U_{d-1}+U_d)- \dim (U_{d-1}) - \dim (T+U_d)\\
&=&D_1 +D -D_1- \dim (T)-\dim (U_d)+\dim (T \cap U_d)\\
&=& D-D_3-D_2 + \dim(T \cap U_d). 
\end{eqnarray*}
Likewise, we also have
\[
\dim ( \hat{U}_{d}^{(\mathcal{U})} )=D-D_3-D_1+ \dim (T \cap U_{d-1}).
\]
It is clear that the claim is equivalent to
\[
 D \geq D_1 + D_2 + D_3 - \dim (T \cap U_d) - \dim (T \cap U_{d-1})-
 \dim (U_{d-1} \cap U_d)
 \]
which follows immediately from the Inclusion-Exclusion Principle. 
\end{proofof}
\end{proof}


Now we will present a complete proof for Lemma \ref{hkv20-main-lemma}
from \cite{HKV20} where the proof is sketchy.\smallskip
 
{\bf Proof of Lemma \ref{hkv20-main-lemma} }

Let $L=A_0 + \sum_{i=1}^n A_i x_i$ Where $A_i \in \mathbb{F}_q^{d
  \times d}$ for $i \in [n]$. So, $L'= A_0 \otimes I_\ell +
\sum_{i=1}^{n} A_i \otimes Y_i$ . From standard properties of the
Kronecker product of matrices, there are a row permutation matrix $R$
and a column permutation matrix $C$ such that $L''= RL'C = I_\ell
\otimes A_0 + \sum_{i=1}^n Y_i \otimes A_i$. So $L'' = I_\ell \otimes
A_0 + \sum_{i=1}^n \sum_{1 \leq j, k \leq \ell} (E_{j,k} \otimes A_i)~
y_{i,j,k}$, where $E_{j,k}$ is a a matrix with $(j,k)^{th}$ entry one
and rest all entries equal to zero.
     
 We have $G L' H = \left(
\begin{array}{c|c}
A' & 0 \\
\hline
D' & B'
\end{array}
\right)$
Where $A'$ is $d' \times d'$ linear matrix for $d'>0$, $B'$ is $d''
\times d''$ linear matrix with $d' + d'' = d \ell$. Hence, $GR^{-1}
L''C^{-1} H = \left(
\begin{array}{c|c}
A' & 0 \\
\hline
D' & B'
\end{array}
\right)$.  Let $GR^{-1}=P_0$ and $C^{-1}H=Q_0$. Let $[ P_1 P_2 \ldots
  P_\ell]$ be the (full row rank) matrix obtained by picking the top
$d'$ rows of $P_0$ where each $P_i$ is $d' \times d$ scalar
matrix. Similarly let $[ Q_1^T Q_2^T \ldots Q_\ell^T]^T$ be the (full column
rank) matrix obtained by picking the rightmost $d''$ columns of $Q_0$
where each $Q_i$ is $d \times d''$ scalar matrix. Clearly,
\[   [ P_1 P_2 \ldots P_\ell] L'' [ Q_1^T Q_2^T \ldots Q_\ell^T]^T = 0, \]
which implies,
\[   [ P_1 P_2 \ldots P_\ell]\left[ I_\ell \otimes A_0 + \sum_{i=1}^n \sum_{1 \leq j, k \leq \ell} (E_{j,k} \otimes A_i)~ y_{i,j,k} \right] [ Q_1^T Q_2^T \ldots Q_\ell^T]^T = 0. \]

Equating the coefficients of each $y_{i,j,k}$ to zero we get the following.

\begin{equation} \label{eqn-A0}
\sum _{i=1}^\ell P_i A_0 Q_i = 0.
\end{equation} 

\begin{equation} \label{eqn-An0}
P_j A_i Q_k =0 \textrm{ for each } i>0 \textrm { and } 1\leq j,k \leq
\ell.
\end{equation} 

For each $i\in [\ell]$ the matrix $P_i$ is a linear transformation
from $\F^d$ to $\F^{d'}$. Let $U_i= \range(P_i)=\{P_i u | u \in
\F^{d}\}$ for each $i$, and $\mathcal{U}= \{U_1, U_2, \ldots, U_{\ell}
\}$. Let $T_i = \sum _{j \neq i} U_j$ for $i\in [\ell]$. Clearly,
$U_1+ U_2 + \ldots + U_\ell= \mathbb{F}_q^{d'}$ as $[ P_1 P_2\ldots
  P_\ell]$ is a full row rank matrix. For $i\in[\ell]$, let
$\hat{P}_i$ be a linear transformation from $\mathbb{F}^{d'}$ to
$\mathbb{F}^{d'}$ defined as follows. Fix a basis $u_{i,1}, u_{i,2},
\ldots, u_{i,r_i}$ of the subspace $U_i \cap T_i$. Extend it to a
basis $u_{i,1}, u_{i,2}, \ldots, u_{i,r_i}, u_{i,r_i+1}, \ldots,
u_{i,k_i}$, $k_i \geq r$, for $U_i$. Further, extend this basis of
$U_i$ to a complete basis $u_{i,1}, u_{i,2}, \ldots, u_{i,d'}$, for
$\mathbb{F}^{d'}$, where $d' \geq k_i$. For any vector $u =
\sum_{j=1}^{d'} \alpha_j u_{i,j}$ in $\mathbb{F}^{d'}$ let $\hat{P}_i
(u) = \sum_{j=r+1}^k \alpha_j u_{i,j}$. So, $\hat{P}_i(u)$ is the
vector obtained by projecting to the subspace $U_i\setminus T_i$
(which is defined w.r.t.\ the above basis). Hence,
$\hat{P}_i(u_{i,t})= u_{i,t}$ for $r_i < t \leq k_i$ and
$\hat{P}_i(u_{i,t})= 0$ otherwise. This defines a $d' \times d'$
matrix for each $\hat{P}_i$ for $i \in [\ell]$, which we also refer to
as $\hat{P}_i$ by abuse of notation. From the Definition
\ref{projection}, it follows that $\range(\hat{P}_i) =
\hat{U}_i^{(\mathcal{U})}$, so $\rank(P_i)=\dim
\hat{U}_i^{(\mathcal{U})}$.  Clearly, $\rank(\hat{P}_i P_i) =
\rank(\hat{P}_i)$ for $i \in [\ell]$. Now, by Lemma
\ref{hkv-rank-lemma} applied to the collection $\mathcal{U}= \{U_1,
U_2, \ldots, U_{\ell} \}$ we get
\[
\sum_{i=1}^{\ell}~ \left[ \rank P_i ~+~ \rank \hat{P}_i \right]~~ \geq
~~ 2\cdot ~ \dim \sum_{i=1}^{\ell} \range(P_i) = 2d'.
\]      

Similarly, each $Q_i:\F^{d''}\to \F^d$ is a linear map. We can define
the corresponding linear maps $\hat{Q}_i : \mathbb{F}^{d''} \to
\mathbb{F}^{d''}$ and associated $d'' \times d''$ sized matrices and
we will have $\rank Q_i\hat{Q}_i = \rank \hat{Q_i}$ for each
$i$. Applying the above argument we will get
 \[
\sum_{i=1}^{\ell}~ \left[ \rank Q_i ~+~ \rank \hat{Q}_i \right]~~ \geq
~~ 2\cdot ~ \dim \sum_{i=1}^{\ell} \range(Q_i) = 2d''.
\] 

Adding the two inequalities yields

\begin{equation}\label{eqn-php}
\sum_{i=1}^{\ell}~ \left( \rank \hat{P}_i ~+~ \rank Q_i \right)~ +  \left( \rank {P}_i ~+~ \rank \hat{Q}_i \right) ~~\geq ~~ 2\cdot ~ (d'+d'')= 2d\ell.
\end{equation}
From the Equation \ref{eqn-php} we would like to prove the following Claim.

\begin{claim} \label{php-simplify}
There exist index $i \in[\ell]$ such that $\rank \hat{P}_i ~+~ \rank Q_i \geq d$ and $\rank \hat{P}_i$, $\rank Q_i>0$ or $\rank {P}_i ~+~ \rank \hat{Q}_i \geq d$ and $\rank {P}_i$, $\rank \hat{Q}_i>0$. 
\end{claim}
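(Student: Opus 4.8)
The plan is to read Claim~\ref{php-simplify} off inequality~\eqref{eqn-php} by a pigeonhole argument, with some extra bookkeeping to ensure the two ranks in the winning term are positive. The left-hand side of~\eqref{eqn-php} is a sum of exactly $2\ell$ nonnegative integers: the $\ell$ numbers $\rank\hat{P}_i+\rank Q_i$ and the $\ell$ numbers $\rank P_i+\rank\hat{Q}_i$, $i\in[\ell]$; since this sum is at least $2d\ell$, by averaging at least one of these $2\ell$ numbers is $\ge d$. The hypotheses of Lemma~\ref{hkv20-main-lemma} are symmetric under transposition --- transposing $L''$ and the factors swaps the roles of the $P_i$ and the $Q_i$, hence swaps the two families of summands as well as the two alternatives in the claim --- so without loss of generality the winning number is $\rank\hat{P}_i+\rank Q_i\ge d$ for some $i$.

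The next step is to secure $\rank\hat{P}_i>0$ and $\rank Q_i>0$, or else to move to a good companion term $\rank P_i+\rank\hat{Q}_i$. For this I would record three facts. (i) For every index, $\rank\hat{P}_i\le\rank P_i\le d$ and $\rank\hat{Q}_i\le\rank Q_i\le d$, since $P_i$ is $d'\times d$, $Q_i$ is $d\times d''$, and $\hat{P}_i$, $\hat{Q}_i$ are projections onto subspaces of $\range P_i$, $\range Q_i^{T}$ respectively. (ii) No index satisfies both $\rank P_i=d$ and $\rank Q_i=d$: such a $P_i$ is injective and such a $Q_i$ surjective, so cancelling them in $P_iA_kQ_i=0$ (Equation~\eqref{eqn-An0}) would give $A_k=0$ for all $k\ge1$, contradicting the hypothesis that some $A_k\ne0$. (iii) The stacked matrices $[P_1\ \cdots\ P_\ell]$ and $[Q_1^{T}\ \cdots\ Q_\ell^{T}]^{T}$ have full rank $d'>0$ and $d''>0$, so not all $P_i$ and not all $Q_i$ are zero. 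Now, by~(i): if $\rank Q_i=0$ then $\rank\hat{P}_i=d$, hence $\rank P_i=d$, and the companion term is $d+\rank\hat{Q}_i$; if $\rank\hat{Q}_i>0$ we are done by the second alternative, and otherwise $Q_i=0$. If instead $\rank\hat{P}_i=0$ then $\rank Q_i=d$; inspecting the companion term (and using~(ii) to forbid $\rank P_i=d$) we are again either done, or left with an index where one of $P_i,Q_i$ is zero, or with a ``deficient'' index contributing strictly less than $2d$ to~\eqref{eqn-php}. In the first bad case the index $i$ contributes exactly $2d$ to~\eqref{eqn-php}, so we may delete it (the total drops to at least $2d(\ell-1)$) and re-run the averaging on the remaining $\ell-1$ indices; in the deficient case, some other index must carry a summand $\ge d$, so we re-run the search there. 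By~(iii) this process cannot delete all indices, so it terminates by exhibiting a genuinely good index.

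The counting step is immediate; I expect the real work --- and the step most prone to slips --- to be this degenerate-case bookkeeping: checking that a deleted index really contributes the extremal value $2d$ (so the averaging bound survives the recursion), that the ``deficient'' indices are correctly identified and cannot together consume the whole budget $2d\ell$ given~(ii) and~(iii), and that~\eqref{eqn-An0} is applied correctly with the hypothesis that some $A_k\ne0$. Once a good index $i$ is in hand, Claim~\ref{php-simplify} follows immediately.
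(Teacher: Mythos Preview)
Your approach is essentially the same as the paper's: pigeonhole on inequality~\eqref{eqn-php}, followed by case analysis on positivity, with the degenerate case eliminated via Equation~\eqref{eqn-An0} and the hypothesis that some $A_i\ne 0$. The paper's execution is cleaner because it avoids recursion entirely: either some term in~\eqref{eqn-php} is strictly greater than $d$ (then both its summands are positive, since each is at most $d$, and we are done), or every one of the $2\ell$ terms equals exactly $d$. In the latter case either some term already has both parts positive, or every term has a zero part; this last possibility forces $\rank P_j,\rank Q_j\in\{0,d\}$ for all $j$, and then~(iii) yields indices $j$ with $\rank P_j=d$ and $k$ with $\rank Q_k=d$ (possibly $j\ne k$), whence $P_jA_iQ_k=0$ gives $A_i=0$ for all $i\ge 1$, a contradiction.

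Your recursion has two loose spots. First, in the ``deficient'' case you conclude only that some other index carries a summand $\ge d$ and re-run the search; but since the remaining $\ell-1$ indices then contribute strictly more than $2d(\ell-1)$, some remaining term is strictly greater than $d$ and hence has both parts positive --- so you are done in one step, not by an open-ended re-run that, as written, could loop. Second, the termination claim ``by~(iii) this process cannot delete all indices'' is not justified by~(iii) alone: every deleted index merely has $P_i=0$ or $Q_i=0$, and nothing in~(iii) forbids this for every $i$ simultaneously. What actually rules out deleting everything is the \emph{different-index} version of your fact~(ii): if all indices are deleted, then~(iii) produces $j$ with $\rank P_j=d$ and $k$ with $\rank Q_k=d$, and Equation~\eqref{eqn-An0} applied to the pair $(j,k)$ gives the contradiction. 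Both fixes are immediate, and once made your argument is sound.
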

First we complete the proof of the Lemma \ref{hkv20-main-lemma} assuming the Claim \ref{php-simplify}.
Without loss of generality, let index $i=1$ satisfies the Claim \ref{php-simplify} and further, let $\rank
\hat{P}_1 + \rank Q_1 \geq d$ with $\rank \hat{P}_1, \rank Q_1 >0$
(other case handled similarly).

Equation \ref{eqn-A0} implies that $\range(P_1 A_0 Q_1) \subseteq
T_1$, also clearly, $\range(P_1 A_0 Q_1) \subseteq \range(P_1)=
U_1$. Which implies $\range(P_1 A_0 Q_1)\subseteq U_1 \cap T_1$. Hence
$\hat{P}_1 P_1 A_0 Q_1 =0$. Equation \ref{eqn-An0} implies that,
$\hat{P}_1 P_j A_i Q_k= 0$ for all $i \geq 1$ and $1\leq j, k \leq
\ell$. So we get $\hat{P}_1 P_1 L Q_1 = 0$. Now $\rank(\hat{P}_1P_1) =
\rank(\hat{P}_1)\geq 1$, $\rank(Q_1) \geq 1$ and $\rank(\hat{P}_1
P_1)+\rank(Q_1) \geq d$. It follows that there exist $0 < e' \leq
\rank(\hat{P}_1 P_1)$ and $0 < e'' \leq \rank(Q_1)$ with $e'+e'' =
d$. By choosing $e'$ linearly independent rows of $\hat{P}_1P_1$ and
$e''$ linearly independent columns of $Q_1$ we obtain full row rank
matrix $U' \in \mathbb{F}_q^{e' \times d}$ and a full column rank
matrix $V' \in \mathbb{F}_q^{d \times e''}$ respectively.  Now we
extend $U'$ to a $d \times d$ matrix $U$ by adding any $d-e'$ linearly
independent rows such that $U$ is invertible. Similarly we extend $V'$
to a $d \times d$ matrix $V$ by adding any $d-e''$ linearly
independent columns such that $V$ is invertible. We clearly have $ULV
=\left(
\begin{array}{c|c}
A & 0 \\
\hline
D & B
\end{array}
\right)$ for some linear matrices $A, D, B$ such that $A \in \fR^{e' \times e'}$ and $B \in \fR^{e'' \times e''}$ with $0< e', e''$ and $e' + e'' =d$ as required. This completes the proof of Lemma \ref{hkv20-main-lemma}.
\\

\begin{proofof}{Claim~\ref{php-simplify}}

Each $P_i$ has $d$ columns and each $Q_i$ has $d$ rows. Thus, $\rank
P_i\le d$ and $\rank Q_i\le d$. Also, $\rank \hat{P}_i \leq \rank P_i$
and $\rank \hat{Q}_i \leq \rank Q_i$. Hence, $\rank \hat{P}_i + \rank
Q_i\le 2d$ and $\rank P_i + \rank \hat{Q}_i\le 2d$ for each $i$.

It follows from Inequality \ref{eqn-php} that if there is an $i$ for
which either $\rank \hat{P}_i + \rank Q_i <d$ or $\rank {P}_i + \rank
\hat{Q}_i< d$ then there must be an index $j$ such that either $\rank
\hat{P}_j + \rank Q_j > d$ or $\rank {P}_j + \rank \hat{Q}_j > d$. Two
cases arise:

\begin{enumerate}
\item for all $j\in [\ell]$, $\rank \hat{P}_j + \rank Q_j = d$ and
  $\rank {P}_j + \rank \hat{Q}_j = d$.
\item there is $j \in [\ell]$ with either $\rank \hat{P}_j + \rank Q_j
  > d$ or $\rank {P}_j + \rank \hat{Q}_j > d$.
\end{enumerate}

Suppose the first case occurs. It has the following two subcases.

\begin{enumerate}
\item[(a)] for all $j \in[\ell]$, $\rank \hat{P}_j=0$ or $\rank Q_j = 0$ and $\rank {P}_j=0$ or $\rank \hat{Q}_j=0$.
\item[(b)] there is $j\in [\ell]$ such that $\rank \hat{P}_j, \rank
  Q_j >0$ or $\rank {P}_j, \rank \hat{Q}_j> 0$
\end{enumerate}

First, consider Case 1(a). Note that $\rank \hat{P}_j=0$ implies
$\rank Q_j =d$. And $\rank Q_j=0$ implies $\rank \hat{P}_j=d$, which
implies $\rank P_j =d$. Thus, either $\rank P_j =d$ or $\rank Q_j =d$
for every $j$. Moreover, Case 1(a) also implies $\rank P_j, \rank
Q_j\in \{0,d\}$ for each $j$.

Now as $[P_1 P_2 \ldots P_\ell]$ has full row rank and $[Q_1^T
  Q_2^T~\ldots Q_\ell^T]^T$ has full column rank, Case 1(a) implies
that there are indices $j,k \in [\ell]$ such that $P_j$ and $Q_k$ are
both rank $d$ matrices. As $P_j$ is full column rank matrix, there is
a $d \times d'$ matrix $P_j'$ such that $P_j' P_j = I_d$. Similarly,
there is a $d'' \times d$ matrix $Q_k'$ such that $Q_k Q_k'=I_d$. Now
from Equation \ref{eqn-An0} we know that $P_j A_i Q_k = 0$ for all
$i$, $1\leq i \leq n$. Hence $P_j' P_j A_i Q_k Q_k' = 0$ for all $i$,
$1\leq i \leq n$. Consequently, $A_i=0$ for $1\le i\le n$ which is a
contradiction to the lemma statement. Hence case 1(a) cannot occur.

If case 1(b) or 2 holds then for some index $j\in [\ell]$ either
$\rank \hat{P}_j + \rank Q_j \geq d$ with $\rank \hat{P}_j$, $\rank
Q_j >0$ or $\rank {P}_j + \rank \hat{Q}_j \geq d$ with $\rank {P}_j$,
$\rank \hat{Q}_j>0$. 
\end{proofof}

\end{document}